\DeclareMathAlphabet\mathbfcal{OMS}{cmsy}{b}{n}
\newcommand{\mbf}{\mathbf}
\newcommand{\mc}{\mathcal}
\newcommand{\vep}{\varepsilon}
\renewcommand{\l}{\left}
\renewcommand{\r}{\right}
\newcommand{\ls}{\le}
\newcommand{\gs}{\ge}
\newcommand{\dto}{\overset{\mathcal{D}}{\to}}
\newcommand{\pto}{\overset{\mathcal{P}}{\to}}
\newcommand{\deq}{\overset{\mathcal{D}}{=}}
\def\wh{\widehat}
\def\wt{\widetilde}
\newcommand{\E}[0]{\mathsf{E}}
\newcommand{\Var}[0]{\mathsf{var}}
\newcommand{\p}{\mathsf{P}}
\newcommand{\R}{\mathbb{R}}
\newcommand{\Z}{\mathbb{Z}}
\newcommand{\nn}{\nonumber}
\newcommand{\cp}{\theta}  
\newcommand{\Cp}{\Theta}  
\theoremstyle{definition}
\newtheorem{thm}{Theorem}[section]
\theoremstyle{definition}
\theoremstyle{definition}
\newtheorem{lem}[thm]{Lemma}
\theoremstyle{definition}
\theoremstyle{definition}
\newtheorem{assum}{Assumption}[section]
\theoremstyle{remark}
\newtheorem{rem}{Remark}[section]
\theoremstyle{definition}
\theoremstyle{definition}
\title{Bootstrap confidence intervals for multiple change points based on moving sum procedures}
\author{Haeran Cho$^1$ and Claudia Kirch$^2$}
\begin{document}

\maketitle


\begin{abstract}
The problem of quantifying uncertainty about the locations of multiple change points by means of confidence intervals is addressed. The asymptotic distribution of the change point estimators obtained as the local maximisers of moving sum statistics is derived, where the limit distributions differ depending on whether the corresponding size of changes is local, i.e.\ tends to zero as the sample size increases, or fixed. A bootstrap procedure for confidence interval generation is proposed which adapts to the unknown magnitude of changes and guarantees asymptotic validity both for local and fixed changes. Simulation studies show good performance of the proposed bootstrap procedure, and some discussions about how it can be extended to serially dependent errors is provided.
\end{abstract}

\footnotetext[1]{Institute for Statistical Science, School of Mathematics, University of Bristol, UK.
Email: \url{haeran.cho@bristol.ac.uk}.
Supported by Leverhulme Trust Research Project Grant RPG-2019-390.}

\footnotetext[2]{Department of Mathematics, Otto-von-Guericke University; Center for Behavioral Brain Sciences (CBBS); Magdeburg, Germany.
Email: \url{claudia.kirch@ovgu.de}.
Supported by Deutsche Forschungsgemeinschaft - 314838170, GRK 2297 MathCoRe.} 

\textbf{Keywords:} Data segmentation, change point estimation, Efron's Bootstrap,  moving sum statistics, scan statistics

\section{Introduction}


Multiple change point analysis, a.k.a. data segmentation, is an actively  researched area
with a wide of range of applications in natural and social sciences, medicine, engineering and finance.
The canonical data segmentation problem,
where the aim is to detect and locate multiple change points 
in the mean of univariate time series, 
has received great attention in the past few decades
and there exist a variety of methodologies that are computationally fast
and achieve consistency in estimating the total number and the locations of multiple change points;
see \cite{cho2020data} for an overview of the literature
and discussions on how methods proposed for the canonical data segmentation problem
offer an important stepping stone for addressing more complex change point problems,
such as detecting changes in variance, time series segmentation under parametric models
and robust change point analysis.

By contrast, the literature on inference for multiple change points is relatively scarce.
Asymptotic \citep{eichinger2018} or approximate \citep{fang2020} 
distributions of suitable test statistics
have been derived under the null hypothesis of no change point,
which enable quantifying uncertainty about the number of change points.
A class of multiscale change point segmentation procedures 
aims at controlling the family-wise error rate \citep{frick2014}
or the false discovery rate \citep{li2016} of detecting too many change points. 
There also exist post-selection inference methods 
which test for a change at estimated change point locations
conditional on their estimation procedure,
see e.g.\ \citet{hyun2018post} and \citet{jewell2019testing}.
The Bayesian framework lends itself naturally to change point inference, 
see \cite{fearnhead2006} and \cite{nam2012quantifying}.

Another type of uncertainty stems from the localisation of change points.
The optimal rate of localisation in change point problems is $O_P(1)$ at best (see e.g.\ \cite{fromont2020}),
i.e.\ change point location estimators are not consistent in the usual sense, 
which makes the problem of inferring uncertainty about change point locations particularly relevant and important.
The simultaneous multiscale change point estimator (SMUCE) proposed in \cite{frick2014} 
provides a confidence set for all candidate signals
from which confidence intervals around the change points can be obtained.
Using the inverse relation between confidence intervals and hypothesis tests,
\cite{fang2020} detail how confidence regions can be generated
from an approximation of the limit distribution of the test statistic under the null hypothesis.
In all of the above, the error distributions are assumed to belong to an exponential family
such as Gaussian, or other light-tailed ones.
The narrowest significance pursuit (NSP) uses a multi-resolution sup-norm
to identify regions containing at least one change point in the mean \citep{fryzlewicz2020} 
or the median \citep{fryzlewicz2021robust} at a prescribed confidence level,
and allows for heavy-tailed errors.

\citet{meier2021mosum} outlines the bootstrap construction of confidence intervals
around the change points based on the moving sum (MOSUM) procedure
proposed in \cite{eichinger2018}.
In this paper, we show the theoretical validity of the bootstrap procedure,
i.e.\ that the proposed bootstrap $100(1 - \alpha)$\%-confidence intervals
asymptotically attain the coverage probability of $1 - \alpha$ for given $\alpha \in (0, 1)$
(see~\eqref{eq:boot:valid} below),
and demonstrate its good performance via numerical experiments.
Our theoretical contributions build upon the results derived in
\cite{antoch1995change} and \cite{antoch1999estimators} for the case of at most a single change,
and accommodate both situations where the changes are local (i.e.\ tend to zero with the sample size) 
and when they are fixed while requiring only that
the errors have more than two finite moments. 

The rest of the paper is organised as follows.
Section~\ref{sec:bootstrap} motivates the use of a bootstrap procedure for confidence interval generation
and proposes the bootstrap construction of 
{\it pointwise} and {\it uniform} confidence intervals.
Section~\ref{sec:theor} provides results on 
the asymptotic distributions of change point estimators 
obtained from the original and the bootstrap data,
based on which we establish the validity of bootstrap confidence intervals.
In Section~\ref{sec:ext}, we discuss 
the use of the proposed bootstrap procedure with asymmetric bandwidths
and its extension to the case of serially dependent errors.
Section~\ref{sec:sim} shows the good performance 
of the proposed methodology on simulated datasets in comparison with existing methods
and applies it to Hadley Centre central England temperature data,
and Section~\ref{sec:conc} concludes the paper.
The implementation of the proposed bootstrap methodology is available in 
the R package \texttt{mosum} \citep{mosum} as {\tt confint} method.

\section{Bootstrap confidence intervals for change points}
\label{sec:bootstrap}

In this paper, we consider the following model with multiple change points
\begin{align}
X_t &= f_t + \vep_t = f_0 + \sum_{j = 1}^{q_n} d_{j, n} \cdot \mathbb{I}_{\{t > \cp_{j, n}\}} + \vep_t
= \sum_{j = 0}^{q_n} \mu_{j, n} \cdot \mathbb{I}_{\{\cp_{j, n} < t \le \cp_{j + 1, n}\}} + \vep_t, \label{eq:model} 
\end{align}
where $\cp_j = \cp_{j, n}$ denote the $q_n$ change points 
(with $\cp_0 = 0$ and $\cp_{q_n + 1} = n$)
at which the mean of $X_t$ undergoes changes of (signed) size $d_j = d_{j, n}$.
We denote by $\delta_j = \delta_{j, n} = \min(\cp_j - \cp_{j - 1}, \cp_{j + 1} - \cp_j)$ 
the minimum distance from $\cp_j$ to its neighbouring change points,
and by $\Cp = \Cp_n = \{\cp_1, \ldots, \cp_{q_n}\}$ the set of change points.
Throughout the paper, we focus on the case of i.i.d.\ errors $\{\vep_t\}$ satisfying
\begin{align}
\label{eq:errors}
\E(\vep_1) = 0, \quad 0 < \sigma^2 = \Var(\vep_1) < \infty \quad
\text{and} \quad \E(\vert \vep_1 \vert^{\nu}) < \infty 
\end{align}
for some $\nu > 2$,
and provide some discussions on the case of dependent errors in Section~\ref{sec:dependent}.

Under~\eqref{eq:model}, several methods exist that consistently estimate $q_n$, the number of change points.
On the other hand, the known minimax optimal rate for the estimation of 
change point locations is $O_P(1)$ at best (see e.g.\ \cite{fromont2020}),
i.e.\ the location estimation error does not tend to zero as $n \to \infty$.
This makes the task of uncertainty quantification about change point locations
by deriving confidence intervals (CI) around $\cp_j$, highly important.

In Section~\ref{sec:motiv}, we motivate the use of a bootstrap procedure
for the construction of CIs about change point locations,
with a review of its application to the simple case of at-most-one-change (AMOC), 
i.e.\ when  $q_n \le 1$.
Then Section~\ref{sec:mosum:main} describes a procedure based on moving sums
for multiple change point detection under~\eqref{eq:model},
and Section~\ref{sec:method} presents the proposed bootstrap methodology
whose validity is established later in Section~\ref{sec:theor}.

\subsection{Motivation}
\label{sec:motiv}

In the AMOC setting,
classical test statistics such as those based on the CUSUM statistic
\begin{align}
\mc C_{k, n}(X) = \sqrt{\frac{k (n - k)}{n}} \l(\bar{X}_{0:k} - \bar{X}_{k:n}\r)
\quad \text{with} \quad \bar{X}_{s:e} = \frac{1}{e - s} \sum_{t = s + 1}^e X_t,
\nn 
\end{align}
are used to test the null hypothesis $H_0: \, q_n = 0$ (no change point) 
against $H_1: \, q_n = 1$ (a single change point).
When $H_0$ is rejected,
the CUSUM statistic can directly be used to locate $\cp \equiv \cp_1$ by its estimator 
$\wh\cp = \arg\max_{0 < k < n} \vert \mc C_{k, n}(X) \vert$.
The asymptotic distribution of $\wh\cp$ depends on unknown quantities,
most importantly, on the magnitude of the change.
For a local change with $d_1 = d_{1, n} \to 0$ as $n \to \infty$,
the limit is distribution-free \citep{antoch1995change}
whereas for a fixed change, 
the limit depends on the unknown error distribution \citep{antoch1999estimators}.
Consequently, the asymptotic distribution is of little practical use 
for constructing a CI about $\cp$
due to the difficulty involved in estimating such quantities.

The bootstrap construction of a CI utilises the difference between the bootstrap estimator, 
say $\wh\cp^* = \arg\max_{0 < k < n} \vert \mc C_{k, n}(X^*) \vert$
maximising the CUSUM statistics computed on a bootstrap sample $\{X_t^*\}_{t = 1}^n$,
and the original estimator $\wh\cp$,
as a proxy for the difference between $\wh\cp$ and the true change point~$\cp$.
Bootstrap CIs in the AMOC setting have been proposed by
\cite{antoch1995change} (accompanied by rigorous proofs for the case of local changes)
and \cite{antoch1999estimators} (their theoretical results cover fixed changes but are given without rigorous proofs).
While the asymptotic distributions (and the corresponding proofs) are different in the two regimes 
determined by the magnitude of $d_1$, 
the same bootstrap procedure can correctly mimic these asymptotic distributions
regardless of whether the change is local or fixed,
{\it without} requiring the knowledge of which regime the problem belongs to
or that of the error distribution.
As a result, the corresponding bootstrap CI is asymptotically correct in both regimes.

This motivates the use of a bootstrap CI for quantifying uncertainty  
about the change point location rather than its asymptotic counterpart.
An additional testing does not alter the theoretical validity of the bootstrap CI
since under $H_1$, the test rejects $H_0$ with asymptotic power one under weak assumptions
even when the nominal level of the test converges slowly to~$0$. 
In such a case, the chance of any false positive also tends to zero asymptotically 
and, conditional on this asymptotic one-set, the bootstrap CI is either empty (under $H_0$)  
or remains to be asymptotically honest (under $H_1$).

\subsection{Multiple change point estimation based on moving sums}
\label{sec:mosum:main}

An obvious difficulty when departing from the AMOC situation 
is that we do not know the number of change points a priori.
For the multiple change point detection problem under~\eqref{eq:model},
\cite{eichinger2018} propose a moving sum (MOSUM) procedure 
that makes use of the MOSUM statistic which, for a given bandwidth $G = G_n$, is defined as
\begin{align}
\label{eq:mosum:symm}
T_{k, n}(G; X) = \sqrt{\frac{G}{2}} \l(\bar{X}_{k - G, k} - \bar{X}_{k, k + G}\r)
\quad \text{for} \quad 
G \le k \le n - G.
\end{align}
The statistic $T_{k, n}(G; X)$ takes a large value in modulus around true change points
while taking a small value outside their $G$-environments.
Therefore, the MOSUM procedure
achieves simultaneous detection and localisation of multiple change points by
(i) performing a model selection step closely related to change point testing in the AMOC setting,
using the asymptotic distribution of 
$\max_{G \le k \le n - G} \vert T_{k, n}(G; X) \vert$ under $H_0$
to determine `significant' local maxima of the MOSUM statistics, and
(ii) identifying the corresponding local maximisers of $\vert T_{k, n}(G; X) \vert$
as change point location estimators.
Combining the output from the MOSUM procedure applied with a range of bandwidths,
it is feasible to perform change point analysis at multiple scales;
see Appendix~\ref{sec:mosum} for further details of the MOSUM procedure
and its multiscale extension as proposed by \cite{cho2019two}.

The model selection step in~(i) is performed in such a way
that the local maximisers in~(ii) are asymptotically equivalent to the following \textit{oracle} estimators:
\begin{align}
\label{eq:cp:tilde}
\wt\cp_j = \wt\cp_{j, n} = {\arg\max}_{\cp_j - G_j < k \le \cp_j + G_j} \vert T_{k, n}(G_j) \vert
\quad \text{for} \quad j = 1, \ldots, q_n.
\end{align}
That is, each $\wt\cp_j$ is the local maximiser of the MOSUM statistic in the neighbourhood of $\cp_j$
that is determined by a suitable bandwidth $G_j$.
Here, `oracle' refers to the fact that such estimators are clearly not accessible in practice
due to knowing neither the total number nor the locations of the change points. 
We assume the following on $G_j, \, j = 1, \ldots, q_n$:
\begin{align}
\label{eq:bandwidth}
G_j = G_{j, n} \to \infty \text{ as } n \to \infty \quad \text{and} \quad
2G_j < \delta_j.
\end{align}

\cite{eichinger2018} and \cite{cho2019two} show that 
MOSUM-based procedures are consistent
both in estimating the number of change points as well as their locations
and derive the localisation rate (i.e.\ how close the estimators are to the true change points asymptotically) 
under mild assumptions on $\{\vep_t\}$, see Appendix~\ref{sec:mosum}.
An important step in the proof of such a consistency result 
is to show that the change point estimators generated by such procedures,
say $\wh\cp_j, \, 1 \le j \le \wh q_n$, coincide 
with the oracle estimators $\wt\cp_j, \, 1 \le j \le q_n$, on an asymptotic one-set.
We formalise this key observation as the following meta-assumption:
\begin{assum}
\label{assum_meta_est}
\begin{enumerate}[label = (\alph*)]
\item \label{assum_meta_est_one} 
For a given $j \in \{1, \ldots, q_n\}$, the estimator $\wh\cp_j = \wh\cp_{j, n}$ of $\cp_j$ satisfies
\begin{align*}
\p\l( \mc A_j \r) \to 1 \text{ as } n \to \infty, \quad \text{where} \quad
\mc A_j = \mc A_{j, n} = \l\{ \wh\cp_{j, n} = \wt\cp_{j, n} \r\}.
\end{align*}
\item \label{assum_meta_est_two} The set of change point estimators 
$\wh\Cp = \wh\Cp_n = \{\wh\cp_j, \, 1 \le j \le \wh q_n: \, \wh\cp_1 < \ldots < \wh\cp_{\wh q_n}\}$ satisfies
\begin{align*}
\p\l( \mc A \r) \to 1 \text{ as } n \to \infty, \quad \text{where} \quad
\mc A = \mc A_n = \l\{\wh q_n = q_n \text{ and } \wh\cp_{j, n} = \wt\cp_{j, n}, \,  j = 1, \ldots, q_n\r\}.
\end{align*}
\end{enumerate}
\end{assum}
The equivalence of the oracle estimators and the accessible estimators 
obtained with a model selection step
(as detailed in~\eqref{eq:hat:tilde:one}--\eqref{eq:hat:tilde:two}),
is crucial for the bootstrap CIs introduced in Section~\ref{sec:method} below,
since it allows us to construct bootstrap estimators mimicking the oracle estimators 
without having to perform any model selection step in the bootstrap world.

\subsection{Bootstrap methodology}
\label{sec:method}

In this section, we describe the construction of bootstrap CIs for multiple change points,
which closely resembles the bootstrap methodology introduced 
by \cite{antoch1995change} in the AMOC setting.

MOSUM-based change point detection procedures already incorporate
some uncertainty quantification for the number of change points and, even their locations,
since $\vert T_{k, n}(G; X) \vert$ exceeding a critical value indicates
that $\{k - G + 1, \ldots, k + G\}$ contains a true change point with high probability.
However, our aim here is to construct (asymptotically) honest CIs that
quantify the uncertainty about the locations of the change points at a prescribed level,
with their widths narrower than those 
given by the bandwidths involved in detecting the corresponding change points.

In what follows, we assume that a set of change point estimators,
$\wh\Cp = \{\wh\cp_j, \, 1 \le j \le \wh q_n\}$, is given
with $\wh q_n$ denoting the estimator of the number of change points,
and we adopt the notational convention that $\wh\cp_0 = 0$ and $\wh\cp_{\wh q_n + 1} = n$.
We suppose that each estimator $\wh\cp_j$ is detected
with a bandwidth $G_j$ fulfilling~\eqref{eq:bandwidth},
which in turn is used in the construction of bootstrap CIs as described below.

\begin{enumerate}[label = \textbf{\textit{Step~\arabic*}}:, itemindent = 25pt]
\item Generate a bootstrap sample
$\{X_t^*, \, 1 \le t \le n\}$ 
by randomly drawing $\{X_t^*, \, \wh\cp_j < t \le \wh\cp_{j + 1}\}$ with replacement 
from $\{X_t, \, \wh\cp_j < t \le \wh\cp_{j + 1}\}$ for $j = 0, \ldots, \wh q_n$.

\item Compute the MOSUM statistics $T_{k, n}(G_j; X^*)$ as in~\eqref{eq:mosum:symm} 
with $\{X^*_t\}$ in place of $\{X_t\}$, and locate
\begin{align}
\label{eq:boot:max}
\wt\cp_j^* = {\arg\max}_{\wh\cp_j - H_j < k \le \wh{\cp}_j + H_j} \vert T_{k, n}(G_j; X^*) \vert
\end{align}
for each $j = 1, \ldots, q_n$, 
 where $H_j = \min(G_j, 2\wh\delta_j/3)$
with $\wh\delta_j = \min(\wh\cp_j - \wh\cp_{j - 1}, \wh\cp_{j + 1} - \wh\cp_j)$.

\item For a given bootstrap sample size $B$,
repeat Steps~1--2 $B$ times and record $\wt\cp_j^{*(b)}, \, j = 1, \ldots, \wh q_n$,
the local maximisers obtained as in~\eqref{eq:boot:max}, for $b = 1, \ldots, B$.
\end{enumerate}

\begin{rem}
\label{rem_argmax_boot}
\begin{enumerate}[label = (\alph*)]
\item \label{rem_argmax_boot:one} In our theoretical analysis, we assume that each $G_j$ satisfies~\eqref{eq:bandwidth}
in addition to $\wh{\cp}_j-\cp_j=o_P(\delta_j)$ (see Assumption~\ref{assum_precision_est} below) 
such that $2(\wh{\cp}_j-\wh\cp_{j-1})/3 \ge (2/3 + o_P(1))\delta_j \ge (4/3 + o_P(1))G_j$ i.e.\
$H_j = G_j$ in~\eqref{eq:boot:max} with probability converging to one.
Consequently, the bootstrap estimator $\wt\cp_j^*$ mimics the definition of the oracle estimator $\wt\cp_j$ in~\eqref{eq:cp:tilde},
with $\wh\cp_j$ serving as a change point in the bootstrap sample.

\item In practice, the choice of $G_j$ fulfilling~\eqref{eq:bandwidth} is not available 
and each change point estimator is associated with 
either a pre-determined bandwidth (as in the case in \cite{eichinger2018}),
or a bandwidth chosen from a range of bandwidths
by a multiscale MOSUM procedure (as is the case in \cite{cho2019two}).
Therefore, we cannot guarantee that 
adjacent estimators, say $\wh\cp_{j - 1}$ and $\wh\cp_{j + 1}$,
are strictly outside the interval $(\wh\cp_j - G_j, \wh\cp_j + G_j]$.
For example, if $\wh\cp_{j - 1}$ falls into this interval, 
two estimators $\wh\cp_{j - 1}$ and $\wh\cp_j$ compete against each other 
to be the local maximiser of $\vert T_{k, n}(G_j; X^*) \vert$ over this interval.
When more than $100\,\alpha$\% of the bootstrap realisations happen to
yield local maxima near $\wh\cp_{j - 1}$, the radius of the bootstrap CI is as wide as $G_j$
even if the change at $t = \cp_j$ (as well as $t = \wh\cp_j$ for the bootstrap realisations) 
is highly pronounced to be detectable.
To prevent such events, we propose the slight modification 
involving $H_j$ as in~\eqref{eq:boot:max}
which performs well in practice as shown in Section~\ref{sec:sim}.
\end{enumerate}
\end{rem}

At a given level $\alpha \in (0, 1)$, 
a {\it pointwise} $100(1 - \alpha)\%$ bootstrap CI for each $\cp_j$ is constructed as
\begin{align}
\mc C^{\text{pw}}_j(\alpha) &= \l[ \wh\cp_j - Q_j(\alpha), \wh\cp_j + Q_j(\alpha) \r] \quad \text{with}
\nn \\
Q_j(\alpha) &= \inf\l\{c: \, 
\frac{1}{B} \sum_{b = 1}^B \mathbb I \l( \l\vert \wt\cp_j^{*(b)} - \wh\cp_j \r\vert \le c \r) \ge 1 - \alpha \r\}.
\label{eq:pw:ci}
\end{align}
A {\it uniform} bootstrap CI, 
which provides a guarantee for the simultaneous coverage of $\cp_j, \, j = 1, \ldots, q_n$
(as shown later in Section~\ref{sec:theor}), is constructed as follows:
Estimating the (signed) size of change 
as $\wh d_j = \bar{X}_{\wh\cp_j, \wh\cp_{j + 1}} - \bar{X}_{\wh\cp_{j - 1}, \wh\cp_j}$, 
and the (local) variance as
\begin{align*}
\wh\sigma_j^2 &= \frac{1}{\wh\cp_{j + 1} - \wh\cp_{j - 1} - 2} 
\l( \sum_{t = \wh{\cp}_{j - 1} + 1}^{\wh{\cp}_j} (X_t - \bar{X}_{\wh{\cp}_{j - 1}, \wh{\cp}_j})^2 +
\sum_{t = \wh{\cp}_j + 1}^{\wh{\cp}_{j + 1}} (X_t - \bar{X}_{\wh{\cp}_j, \wh{\cp}_{j + 1}})^2 \r)
\end{align*}
for $j = 1, \ldots, \wh q_n$, 
%
a uniform $100(1 - \alpha)\%$-CI is given by
\begin{align}
\mc C^{\text{unif}}_j(\alpha) &= \l[ \wh\cp_j - \wh d_j^{-2} \wh\sigma_j^2 Q(\alpha), 
\wh\cp_j + \wh d_j^{-2} \wh\sigma_j^2 Q(\alpha) \r] \quad \text{with}
\nn \\
Q(\alpha) &= \inf\l\{c: \, \frac{1}{B} \sum_{b = 1}^B 
\mathbb I \l( 
\max_{1 \le j \le \wh q_n} \frac{\wh d_j^2}{\wh\sigma_j^2}
\l\vert \wt\cp_j^{*(b)} - \wh\cp_j \r\vert \le c \r) \ge 1 - \alpha \r\}.
\label{eq:unif:ci}
\end{align}
The quantities $Q_j(\alpha)$ (resp. $Q(\alpha)$)
are empirical versions of the quantiles of the conditional distribution of $\wt\cp_j^* - \wh\cp_j$ 
(as shown in Theorem~\ref{thm:bootstrap} below) obtained by Monte Carlo simulations 
and converge to the true quantiles as $B \to \infty$,
such that the respective bootstrap CIs are asymptotically honest in the sense 
made precise in~\eqref{eq:boot:valid} below.
Unlike the pointwise bootstrap CIs,
uniform CIs involve the estimation of the signal-to-noise ratio $d_j/\sigma_j$ 
such that $\wt\cp_j^{*} - \wh\cp_j$ are treated on an equal footing across $j = 1, \ldots, \wh q_n$.
Lemma~\ref{lemma_variance_boot} in Appendix~\ref{sec:proofs} shows that
both $\wh{d}_j$ and $\wh\sigma_j^2$ are consistent and in particular,
$\wh{d}_j$ is consistent not only when $d_j$ is fixed but also when $d_j \to 0$
in the sense that ${\wh d}_j/d_j \pto 1$.

\section{Theoretical validity of bootstrap confidence intervals}
\label{sec:theor}

As discussed in Section~\ref{sec:motiv}, in the AMOC setting,
bootstrap CIs have been shown to adapt to whether the (unknown) size of change is local or fixed
without requiring the knowledge of the error distribution,
which makes their use more practical than the asymptotic CIs. 
In this section, we show that this is also the case 
in the presence of multiple change points
with the bootstrap procedure introduced in Section~\ref{sec:method}.

In the AMOC setting, 
the proof of the validity of bootstrap CIs proceeds in two steps: 
First, the asymptotic distribution of (scaled) difference $\wh\cp - \cp$ is established,
and then it is shown that (analogously scaled) $\wh\cp^* - \wh\cp$
has the same limit distribution conditional on the observations. 
When the limit distribution is continuous,
quantiles of both differences converge to a true asymptotic quantile
such that asymptotic honesty of the bootstrap CIs follows
irrespective of the regime determined by the size of the change.

In the multiple change point problem, the estimators $\wh\cp_j$ typically involve a model selection step
while the construction of bootstrap estimators $\wt\cp^*_j$ only mimics the uncertainty 
stemming from random fluctuations in local maximisation of 
$\vert T_{k, n}(G_j; X^*) \vert$ as in~\eqref{eq:boot:max}.
Indeed, our bootstrap procedure is designed to 
mimic the asymptotic distribution of the oracle estimator in \eqref{eq:cp:tilde}.
Nonetheless, the accessible estimators $\wh\cp_j$ asymptotically 
coincide with the oracle estimators under Assumption~\ref{assum_meta_est},
which allows us to establish the theoretical validity of the proposed bootstrap procedure 
along the same lines as in the AMOC situation.

For notational ease in the statement of theoretical results,
we slightly modify~\eqref{eq:boot:max} to
\begin{align*}
\wt\cp_j^* = \begin{cases}
{\arg\max}_{\wh\cp_j - G_j < k \le \wh{\cp}_j + G_j} \vert T_{k, n}(G_j; X^*) \vert & \text{for } 1 \le j \le \min(q_n,\wh q_n), \\
0 & \text{for } \wh q_n < j \le q_n
\end{cases}
\end{align*}
(see Remark~\ref{rem_argmax_boot}~\ref{rem_argmax_boot:one}
for the discussion on asymptotic equivalence between $G_j$ and $H_j$ appearing in~\eqref{eq:boot:max}).
Also, we define $\wh\cp_j = n$ for $\wh q_n < j \le q_n$.
In doing so, when $\wh q_n < q_n$,
the difference between $\wt\cp_j^*$ and $\wh\cp_j$ for $\wh q_n < j \le q_n$, 
is made as large as possible.
However, this does not influence the asymptotic result due to Assumption~\ref{assum_meta_est}. 
With these modifications, all the following statements involving the differences 
$\wt\cp_j^* - \wh\cp_j$ and $\wh\cp_j - \cp_j$ are well-defined for any $j = 1, \ldots, q_n$.

Then under Assumption~\ref{assum_meta_est},
each accessible estimator $\wh\cp_j$ coincides with 
the oracle estimator $\wt\cp_j$ on the asymptotic one-sets $\mc A_j$ and $\mc A$ (defined in the assumption)
such that
\begin{align}
\label{eq:hat:tilde:one}
\sup_{x \in \R} \l\vert 
\p\l(\sigma^{-2} d_j^2 \vert \wh\cp_j - \cp_j\vert\le x \r)
- \p\l(\sigma^{-2} d_j^2 \vert \wt\cp_j - \cp_j \vert \le x \r) \r\vert &\to 0
\end{align}
and, when $q_n = q$ is fixed,
\begin{align}
\label{eq:hat:tilde:two}
\sup_{\mbf x \in \R^q} \l\vert 
\p\l(\cap_{j = 1}^q \l\{\sigma^{-2} d_j^2 \vert \wh\cp_j - \cp_j\vert\le x_j\r\} \r)
- \p\l(\cap_{j = 1}^q \l\{\sigma^{-2} d_j^2 \vert \wt\cp_j - \cp_j \vert \le x_j\r\} \r) \r\vert &\to 0.
\end{align}
In Section~\ref{sec:asymp} below, we derive the asymptotic distribution of $\wt\cp_j - \cp_j$
and in Section~\ref{sec:bootstrap:theor}, we show that the difference $\wt\cp_j^* - \wh\cp_j$ 
(conditionally on $X_1, \ldots, X_n$) shares the same limit distribution.
Combined with~\eqref{eq:hat:tilde:one}--\eqref{eq:hat:tilde:two},
these results indicate that
we can approximate the quantiles of the true difference $\wh\cp_j - \cp_j$
by those of the bootstrap difference $\wt\cp_j^* - \wh\cp_j$, 
which are accessible via Monte Carlo methods.
From this, the (asymptotic) validity of the proposed pointwise and uniform CIs follow, i.e.\
\begin{align}
& \p\l( \cp_j \in \mc C^{\text{pw}}_j(\alpha) \r) \to 1 - \alpha \text{ for each $j = 1, \ldots, q_n$, \ and} \nn \\
& \p\l( \cap_{j = 1}^{q} \l\{\cp_j \in \mc C^{\text{unif}}_j(\alpha)\r\} \r) \to 1-\alpha.
\label{eq:boot:valid}
\end{align}

\subsection{Asymptotic distribution of oracle change point estimators}
\label{sec:asymp}

Theorem~\ref{thm:asymp} derives the asymptotic distribution of $\wt\cp_j$
both when the changes are local ($d_j = d_{j, n} \to 0$) and when they are fixed. 
Thanks to Assumption~\ref{assum_meta_est},
the same asymptotic behaviour holds for the accessible change point estimators
produced by MOSUM-based procedures. 

\begin{thm}
\label{thm:asymp}
Let $\{X_t\}_{t = 1}^n$ satisfy~\eqref{eq:model}--\eqref{eq:errors} and $G_j$ fulfil~\eqref{eq:bandwidth}.
\begin{enumerate}[label = (\alph*)]
\item \label{thm:asymp:one} If $d_j = d_{j, n} \to 0$ and 
$d_j^2 G_j \to \infty$,
then it holds as $n \to \infty$,
\begin{align*}
\sigma^{-2} d_j^2( \wt\cp_j - \cp_j) \dto \arg\max_s\l\{W_s - \vert s \vert/\sqrt{6}: \, s \in \R\r\}
\end{align*}
for $j = 1, \ldots, q_n$,
where $\{W_s: \, s \in \R\}$ is a standard Wiener process.

\item \label{thm:asymp:two} If $d_j$ is fixed
and the errors $\{\vep_t\}$ are continuous, 
then it holds as $n \to \infty$,
\begin{align*}
&  \wt\cp_j - \cp_j \dto 
 \arg\max_\ell \l\{ -d_j \Gamma_{\vep}(\ell) - \ell d_j^2: \, 
\ell \in \Z\r\}, \quad \text{with}
\\
& \Gamma_{\vep}(\ell) = \l\{ \begin{array}{ll}
\sum_{t = \ell}^{-1} \vep_t^{(1)} -  2 \sum_{t = \ell}^{-1} \vep_t^{(2)} + \sum_{t = \ell}^{-1} \vep_t^{(3)} & 
\text{when } \ell < 0,
\\
0 & \text{when } \ell = 0,
\\
\sum_{t = 1}^\ell \vep_t^{(1)} - 2 \sum_{t = 1}^\ell \vep_t^{(2)} + \sum_{t = 1}^\ell \vep_t^{(3)} & 
\text{when } \ell > 0
\end{array}\r.
\end{align*}
for $j = 1, \ldots, q_n$,
where  $\{\vep_t^{(i)}, \, t \in \Z \} \deq \{\vep_t, \, t \in \Z\}$, $i = 1, 2, 3$, 
are mutually independent copies of the original error sequence.

\item \label{thm:asymp:three} 
Suppose that the number of changes is fixed at $q_n = q$.
For each change point, let the assumptions in~\ref{thm:asymp:one} or \ref{thm:asymp:two} 
be fulfilled in addition to $4G_j < \delta_j$. Then it holds as $n \to \infty$,
\begin{align*}
& \sigma^{-2} \l(d_1^2 (\wt\cp_j - \cp_j), \ldots, d_q^2(\wt\cp_q - \cp_q) \r)
\dto \l(S_1, \ldots, S_q\r), \quad \text{where}
\\
& S_j = \l\{\begin{array}{ll}
\arg\max_s \{ W_s^{(j)} - |s|/\sqrt{6}:\, s \in \R\} & \text{when } d_j = d_{j, n} \to 0, \\
\sigma^{-2} d_j^2 \arg\max_\ell \{-d_j \Gamma^{(j)}_\vep(\ell) - \ell d_j^2: \, \ell \in \Z\} 
& \text{when $d_j$ is fixed},
\end{array}\r.
\end{align*}
with $W_j^{(j)}$ (resp. $\Gamma^{(j)}_\vep$), $j = 1, \ldots, q$, are
mutually independent and distributed according to~\ref{thm:asymp:one} (resp.~\ref{thm:asymp:two}).
\end{enumerate}
\end{thm}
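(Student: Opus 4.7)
The three parts share a common skeleton: identify $\wt\cp_j - \cp_j$ as the argmax of an explicit random process $Z_{j,\ell}$ in $\ell$, isolate the leading deterministic and stochastic contributions to $Z_{j,\ell}$, and apply an argmax continuous mapping theorem. Set $Z_{j,\ell} := |T_{\cp_j+\ell,n}(G_j;X)|^2 - |T_{\cp_j,n}(G_j;X)|^2$, so that $\wt\cp_j - \cp_j = \arg\max_{-G_j < \ell \le G_j} Z_{j,\ell}$. Writing $T_{k,n}(G_j;X) = T_{k,n}(G_j;f) + T_{k,n}(G_j;\vep)$ and using $2G_j < \delta_j$ to ensure that the windows $(k-G_j, k+G_j]$ are free of other change points, one obtains the triangular signal $T_{\cp_j+\ell,n}(G_j;f) = -d_j\sqrt{G_j/2}(1-|\ell|/G_j)$ for $|\ell| \le G_j$. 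Expanding the square and cancelling the $\ell$-independent part decomposes $Z_{j,\ell}$ into signal--signal, signal--noise, and noise--noise pieces: the signal--signal piece equals $-d_j^2|\ell| + d_j^2|\ell|^2/(2G_j)$, and the leading part of the signal--noise piece reduces (for $\ell > 0$; $\ell < 0$ analogous) to
\begin{align*}
-d_j\Bigl[\,2\!\sum_{t=\cp_j+1}^{\cp_j+\ell}\vep_t \;-\!\sum_{t=\cp_j-G_j+1}^{\cp_j-G_j+\ell}\vep_t \;-\!\sum_{t=\cp_j+G_j+1}^{\cp_j+G_j+\ell}\vep_t\,\Bigr],
\end{align*}
a linear combination of partial sums over three disjoint, $G_j$-separated blocks of iid errors.

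For part~\ref{thm:asymp:one}, rescale by $s = \sigma^{-2}d_j^2\ell$. The signal--signal contribution becomes $-\sigma^2|s| + O((d_j^2G_j)^{-1})$, converging to $-\sigma^2|s|$ uniformly on compact $s$-intervals since $d_j^2G_j \to \infty$. Donsker's theorem applied jointly to the three disjoint partial-sum processes --- automatically independent, with combined weights $(2,-1,-1)$ producing variance factor $4+1+1=6$ --- yields the stochastic limit $\sigma^2\sqrt{6}\,W_s$ for a standard Wiener process $W$. The noise--noise piece is $O_P(\sqrt{|\ell|/G_j}) = o_P(1)$ on compact sets in $s$; the assumption $\E|\vep_1|^\nu < \infty$ with $\nu > 2$ enters here via a Doob- or Kolmogorov-type maximal inequality to uniformize this bound. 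Dividing through by $\sigma^2\sqrt{6}$ gives weak convergence to $W_s - |s|/\sqrt{6}$ on each $[-M,M]$, and standard tail estimates for Brownian motion with strictly negative drift imply that the argmax lies in $[-M,M]$ with probability tending to $1$ as $M \to \infty$; the argmax continuous mapping theorem, combined with a.s.\ uniqueness of the argmax of $W_s - |s|/\sqrt{6}$, concludes. For part~\ref{thm:asymp:two}, no rescaling is required. For each fixed $\ell \in \Z$, the signal--signal piece is $-d_j^2|\ell| + O(G_j^{-1})$, the noise--noise piece is $o_P(1)$, and once $G_j > |\ell|$ the three partial sums above are independent sums of $|\ell|$ iid errors, converging jointly in distribution to three independent copies $(S^{(i)}_{|\ell|})_{i=1,2,3}$ of $\sum_{t=1}^{|\ell|}\vep_t$. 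This gives the finite-dimensional limit of the form $-d_j\Gamma_\vep(\ell) - d_j^2|\ell|$. Continuity of $\vep_t$ ensures an a.s.\ unique argmax of the limit, and tightness ($\arg\max = O_P(1)$) follows because $|\Gamma_\vep(\ell)| = O_P(\sqrt{|\ell|})$ (by Kolmogorov's inequality) is dominated by the linear drift $d_j^2|\ell|$ for large $|\ell|$; a discrete-parameter argmax theorem completes the argument.

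Part~\ref{thm:asymp:three} is immediate once the marginals are in hand: under $4G_j < \delta_j$, the estimator $\wt\cp_j$ is a measurable function of $\{X_t: \cp_j - 2G_j < t \le \cp_j + 2G_j\}$ only (since $T_{k,n}(G_j;X)$ for $k \in (\cp_j - G_j, \cp_j + G_j]$ uses data only in $(\cp_j - 2G_j, \cp_j + 2G_j]$), and these windows are disjoint across $j = 1, \ldots, q$. The $\wt\cp_j$ are therefore mutually independent, so joint convergence to independent limits follows from marginal convergence. The main obstacle will be the uniform-on-compacts control in part~\ref{thm:asymp:one}: because the relevant index range has length $\sigma^2 M/d_j^2 \to \infty$, one must verify that the approximation errors in the signal--signal term and the noise--noise remainder are $o_P(1)$ uniformly in $s \in [-M, M]$, which requires moment bounds exploiting the full $\nu$-th moment assumption. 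Once the joint FCLT on the three disjoint partial-sum processes is in place, the rest (finite-dimensional convergence, negative-drift tightness at infinity, and argmax continuous mapping) is standard.
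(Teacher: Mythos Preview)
Your approach is correct and essentially identical to the paper's: both work with $Z_{j,\ell}=T_{\cp_j+\ell,n}^2-T_{\cp_j,n}^2$, extract the same drift-plus-three-block partial-sum decomposition (the paper obtains it by citing Lemma~5.2 and~(5.8) of Eichinger--Kirch rather than computing it directly), apply an FCLT for~\ref{thm:asymp:one} versus exact distributional equality of $U_n(\ell)$ with the target for~\ref{thm:asymp:two}, transfer to the argmax via uniqueness (the paper spells this out as an $\eta$-sandwich), and deduce~\ref{thm:asymp:three} from disjointness of the data windows under $4G_j<\delta_j$. The one place you are looser is prelimit tightness in~\ref{thm:asymp:one}: your stated tail estimate is for the Brownian limit rather than for $\sigma^{-2}d_j^2(\wt\cp_j-\cp_j)$ itself, which the paper secures explicitly as $\p(|\wt\cp_j-\cp_j|>c\sigma^2 d_j^{-2})\le O(c^{-1})+o(1)$ via the same drift-dominates-fluctuation comparison you allude to under ``negative-drift tightness at infinity''.
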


In the case of local changes, the results reported in Theorem~\ref{thm:asymp}
are closely related to Theorem~3.3 of \cite{eichinger2018}
which also permits time series errors.
For the corresponding result in the AMOC situation, 
see \cite{antoch1995change} (local change) and \cite{antoch1999estimators} (fixed change).
Limiting distributions for multiple change point estimators 
have also been obtained by \cite{bai1998estimating} in the context of linear models,
\cite{yau2016} for a time series segmentation problem
and \cite{kaul2021inference} for the high-dimensional mean change point detection problem.

The additional assumption of continuity of the errors 
in the case of fixed changes (Theorem~\ref{thm:asymp}~\ref{thm:asymp:two}),
is required to avoid ties (a.s.) of the maximum of the limit distribution. 
If the error distribution is not continuous (e.g.\ discrete or mixed),
those ties may be resolved differently on the RHS of $\dto$ than on the LHS,
an issue stemming from that the $\arg\max$ is not continuous 
if the limit does not have a unique, isolated maximum.
Therefore, while the underlying process 
defining the $\arg\max$ on the LHS 
(denoted by $V_{k, n}$ in the proof given in Appendix~\ref{pf:thm:asymp})
will weakly converge to the process underlying 
the $\arg\max$ on the RHS (denoted by $\wt V_{k, n}$ in the proof) even for discrete errors,
the $\arg\max$ itself may not because the continuous mapping theorem is not applicable.
For the local change considered in~\ref{thm:asymp:one}, 
the Wiener process with drift on the RHS does not suffer from this issue
and thus the continuity of the errors is not required.
\cite{ferger2004continuous} provide additional insights into the theoretical behaviour of the $\arg\max$ if ties occur.
In practice, we may either ignore the ties or report their occurrence explicitly. 

As in the AMOC situation, the asymptotic behaviour of the oracle estimator $\wt\cp_j$
(and by Assumption~\ref{assum_meta_est}, the accessible estimator $\wh\cp_j$)
depends on the regime determined by the magnitude of the change and,
in the fixed change case, on the unknown error distribution.
Consequently, the limit distribution itself is not suitable for CI generation.
Section~\ref{sec:bootstrap:theor} shows that for the bootstrap estimators $\wt\cp_j^*$, we have
$\wt\cp_j^* - \wh\cp_j$ (conditional on the data) mimic the distribution of $\wt\cp_j - \cp_j$, 
and thus the bootstrap procedure produces asymptotically honest 
bootstrap CIs under Assumption~\ref{assum_meta_est}.

\subsection{Asymptotic distribution of bootstrap change point estimators}
\label{sec:bootstrap:theor}

Since the bootstrap procedure is based on the change point estimators 
$\wh\Cp = \{\wh\cp_j, \, 1 \le j \le \wh q_n\}$,
we require the estimators to be sufficiently precise in the following sense:
\begin{assum}
\label{assum_precision_est}
	For given $j \in \{1, \ldots, q_n\}$, we have
	\begin{align*}
&	\wh\cp_{i} - \cp_{i} = o_P(\delta_i) \quad \text{for} \quad i \in \{j - 1, j, j + 1\},  \qquad \text{and} \\
&	d_i^2(\wh\cp_{i} - \cp_{i}) = o_P(d_j^2|\cp_j - \cp_{i}|) \quad \text{for} \quad 
i \in \{j - 1, j + 1\}.
	\end{align*}
\end{assum}

As in the case of Assumption~\ref{assum_meta_est},
MOSUM-based change point detection procedures 
achieve consistency in multiple change point estimation
and thus produce estimators that fulfil Assumption~\ref{assum_precision_est};
we refer to Appendix~\ref{sec:mosum} for detailed discussions.

\begin{thm}
\label{thm:bootstrap}
Denote $\p^*(\cdot) = \p(\cdot|X_1,\ldots,X_n)$.
Let~\eqref{eq:model}--\eqref{eq:errors} 
and Assumption~\ref{assum_precision_est} hold 
(for a given $j$ in \ref{thm:bootstrap:one} and \ref{thm:bootstrap:two}, and for all $j$ in \ref{thm:bootstrap:three}),
and $G_j$ fulfil~\eqref{eq:bandwidth}.
\begin{enumerate}[label = (\alph*)]
\item \label{thm:bootstrap:one}
If $d_j = d_{j, n} \to 0$ and 
$d_j^2G_j \to \infty$,
then the following limit distribution holds for all $x \in \R$ as $n \to \infty$,
\begin{align*}
	\p^*\left(	\sigma^{-2} d_j^2 (\wt{\cp}^*_j - \wh\cp_j) \le x\right) \pto
	\p\left( \arg\max_{s \in \R}\l\{W_s - |s|/\sqrt{6}\r\} \le x\right) 
\end{align*}
for each $j = 1, \ldots, q_n$, where $\{W_s\}$ is as in Theorem~\ref{thm:asymp}~\ref{thm:asymp:one}.

\item \label{thm:bootstrap:two} If $d_j$ is fixed 
and the errors $\{\vep_j\}$ are continuous, 
then the following limit distribution holds for all $x \in \R$ as $n \to \infty$,
\begin{align*}
	\p^*\left( \wt{\cp}^*_j - \wh\cp_j \le x \right) \pto
	\p\left( \arg\max_{\ell\in\Z} \l\{- d_j \Gamma_{\vep}(\ell) - \ell d_j^2\r\}\le x\right)
\end{align*}
for $j = 1, \ldots, q_n$, where $\{\Gamma_{\vep}(\ell)\}$ is as in Theorem~\ref{thm:asymp}~\ref{thm:asymp:two}.

\item \label{thm:bootstrap:three}
Suppose that the number of changes is fixed at $q_n = q$.
For each change point, let the assumptions in \ref{thm:bootstrap:one} or \ref{thm:bootstrap:two}  
be fulfilled in addition to $4 G_j < \delta_j$. 
Then, the following limit distribution holds for all $\mathbf{x} = (x_1, \ldots, x_q)^\top \in \R^q$ as $n \to \infty$,
\begin{align*}
&\p^*\left( \sigma^{-2} d_1^2 (\wt\cp_1^* - \wh\cp_1) \ls x_1, \ldots,
\sigma^{-2} d_q^2(\wt\cp_q^* - \wh\cp_q) \ls x_q \r) \pto
\p\l(S_1\ls x_1, \ldots, S_q\ls x_q\r), \text{ where}
\\
& S_j = \l\{\begin{array}{ll}
\arg\max_s \{ W_s^{(j)} - |s|/\sqrt{6}:\, s \in \R\} & \text{when } d_j = d_{j, n} \to 0, \\
\sigma^{-2} d_j^2 \arg\max_\ell \{-d_j \Gamma^{(j)}_\vep(\ell) - \ell d_j^2: \, \ell \in \Z\} 
& \text{when $d_j$ is fixed},
\end{array}\r.
\end{align*}
with $\{W_j^{(j)}\}$ (resp. $\{\Gamma^{(j)}_\vep\}$), $j = 1, \ldots, q$, are
mutually independent and distributed according to~\ref{thm:asymp:one} (resp.~\ref{thm:asymp:two}).
\end{enumerate}
\end{thm}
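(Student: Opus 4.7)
The plan is to parallel the proof of Theorem~\ref{thm:asymp} with each quantity replaced by its bootstrap analogue, relying on Assumption~\ref{assum_precision_est} and the consistency of $\wh d_j$ and $\wh\sigma_j^2$ (Lemma~\ref{lemma_variance_boot}) to ensure that the replacements produce the same limits. Write the bootstrap sample as $X_t^* = \bar X_{\wh\cp_i, \wh\cp_{i+1}} + \vep_t^*$ for $\wh\cp_i < t \le \wh\cp_{i+1}$, where conditionally on the data the $\vep_t^*$ within a given segment are i.i.d.\ from the centred empirical distribution of that segment and are independent across segments. Since Assumption~\ref{assum_precision_est} asymptotically confines each estimated segment to a single true one, the centred pool empirical measures converge to the law of $\vep_1$, the pools' empirical variances converge in probability to $\sigma^2$, and the jump $\wh d_j$ of the bootstrap signal at $\wh\cp_j$ satisfies $\wh d_j / d_j \pto 1$.

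For part~\ref{thm:bootstrap:one}, define $V^*_{\wh\cp_j + s, n} = T_{\wh\cp_j + s, n}(G_j; X^*)^2 - T_{\wh\cp_j, n}(G_j; X^*)^2$ and rescale $s = \sigma^2 \wh d_j^{-2} u$. Expanding $V^*$ exactly as in Theorem~\ref{thm:asymp}\ref{thm:asymp:one} separates it into a deterministic drift converging to $-|u|/\sqrt{6}$ and a stochastic term that equals, up to scaling, a bootstrap partial sum over windows of length of order $d_j^{-2}$. An invariance principle for the i.i.d.\ bootstrap (valid because $\E|\vep_1|^\nu < \infty$ for some $\nu > 2$ and the segment empiricals converge to the law of $\vep_1$) shows that this term converges weakly in bootstrap probability to a two-sided standard Wiener process. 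Since $W_u - |u|/\sqrt{6}$ has an almost surely unique maximiser, continuous mapping for $\arg\max$ gives the claim.

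For part~\ref{thm:bootstrap:two}, work at integer shifts $\ell \in \Z$. The same expansion shows that the non-negligible part of $T_{\wh\cp_j+\ell,n}(G_j; X^*) - T_{\wh\cp_j,n}(G_j; X^*)$ collects contributions from three disjoint blocks of bootstrap positions, namely $(\wh\cp_j - G_j, \wh\cp_j - G_j + \ell]$, $(\wh\cp_j, \wh\cp_j + \ell]$ (entering with coefficient $2$ because it changes window membership), and $(\wh\cp_j + G_j, \wh\cp_j + G_j + \ell]$, with symmetric expressions when $\ell < 0$. These blocks sit in estimated segments whose centred pool empiricals converge to the law of $\vep_1$, while bootstrap draws at distinct positions are conditionally independent by construction, so the three centred block sums converge jointly in bootstrap distribution to three independent copies of $\sum_{t=1}^\ell \vep_t$. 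Combined with the drift $-\ell \wh d_j^2 \pto -\ell d_j^2$, this reproduces $-d_j \Gamma_\vep(\ell) - \ell d_j^2$ at every fixed $\ell$. Continuity of $\vep_1$ yields a.s.\ uniqueness of the integer $\arg\max$, and Fuk--Nagaev-type tail bounds on bootstrap random walks (obtained from the conditional moments transferred from the original $(2+\eta)$-moment) give the tightness needed to exclude $\arg\max$ escaping to infinity.

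Part~\ref{thm:bootstrap:three} follows componentwise: $4G_j < \delta_j$ combined with Assumption~\ref{assum_precision_est} makes the intervals $(\wh\cp_j - G_j, \wh\cp_j + G_j]$ asymptotically disjoint and each contained in a single estimated segment, so the bootstrap statistics defining $\wt\cp_j^*$ for different $j$ depend on disjoint, hence conditionally independent, sets of bootstrap draws; joint convergence then reduces to the marginal statements of~\ref{thm:bootstrap:one} and~\ref{thm:bootstrap:two}. The main obstacle is the fixed-change regime: upgrading finite-dimensional bootstrap convergence on $\Z$ to a tight $\arg\max$ statement under only a $(2+\eta)$-moment assumption is delicate, and one must control in a strong enough mode the convergence of the centred pool empirical laws to that of $\vep_1$ so that the three block sums in the limit are \emph{genuinely independent} copies rather than merely identically distributed.
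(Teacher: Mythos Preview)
Your proposal is correct and follows essentially the same route as the paper: define the bootstrap analogue $V^*_{k,n}(G_j)$, establish tightness of $\wt\cp_j^*-\wh\cp_j$ via a H\'ajek--R\'enyi/Kolmogorov-type inequality for the bootstrap partial sums (the paper packages this as a separate lemma replacing Lemma~5.2 of Eichinger--Kirch), decompose $V^*$ into drift, the three-block stochastic term and a uniformly negligible remainder, then invoke a bootstrap functional CLT (the paper cites Lemma~5.1 of Hu\v{s}kov\'a--Kirch 2008) for local changes and convergence of the segment empirical distributions for fixed changes. Your closing remark about the need for \emph{genuine} independence of the three block sums in the fixed-change regime is precisely the content of the paper's key auxiliary lemma on $\sup_x|\p^*(\vep_t^*\le x)-\p(\vep_1\le x)|\pto 0$, which together with conditional independence of bootstrap draws at distinct positions yields the required joint convergence in~\eqref{eq_boot_FCLT}.

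One small correction in part~\ref{thm:bootstrap:three}: the interval $(\wh\cp_j-G_j,\wh\cp_j+G_j]$ is \emph{not} contained in a single estimated segment, since $\wh\cp_j$ itself is a segment boundary; the relevant bootstrap draws come from the two adjacent segments $(\wh\cp_{j-1},\wh\cp_j]$ and $(\wh\cp_j,\wh\cp_{j+1}]$. What the condition $4G_j<\delta_j$ actually buys is that the full windows $(\wh\cp_j-2G_j,\wh\cp_j+2G_j]$ (which contain all three relevant blocks near $\wh\cp_j-G_j$, $\wh\cp_j$, $\wh\cp_j+G_j$) are asymptotically disjoint across $j$, so the bootstrap draws determining different $\wt\cp_j^*$ occupy disjoint positions and are therefore conditionally independent. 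This is a phrasing slip only; your conclusion and the mechanism behind it are right.
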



To the best of our knowledge, the literature on bootstrap CIs for change points
considers only the case of local changes with a distribution-free limit;
an exception is \cite{antoch1999estimators}
where their Theorem~7.1 (given without an explicit proof)
is on the fixed change case in the AMOC setting.

\subsection{Consistency of the bootstrap procedure}
\label{sec:consistency}

Recall that by Assumption~\ref{assum_meta_est},
the accessible estimators $\wh\cp_j$ coincide with the oracle ones $\wt\cp_j$
on asymptotic one-sets such that~\eqref{eq:hat:tilde:one}--\eqref{eq:hat:tilde:two} follow.
Then, Theorems~\ref{thm:asymp} and~\ref{thm:bootstrap}
establish that for each $j = 1, \ldots, q_n$,
\begin{align*}
	\sup_{x \in \R} \l\vert \p^*\l(\sigma^{-2} d_j^2 \vert \wt\cp_j^*
	- \wh\cp_j\vert\le x \r)
	- \p\l(\sigma^{-2} d_j^2 \vert \wh\cp_j - \cp_j \vert \le x \r) \r\vert \pto 0
\end{align*}
and, when $q_n = q$ is fixed,
\begin{align*}
\sup_{\mbf x \in \R^q} \l\vert 
\p^*\l(\cap_{j = 1}^q \l\{\sigma^{-2} d_j^2 \vert \wt\cp_j^*
- \wh\cp_j \vert \le x_j \r\} \r)
- \p^*\l(\cap_{j = 1}^q \l\{\sigma^{-2} d_j^2 \vert \wh\cp_j - \cp_j \vert \le x_j \r\} \r) 
\r\vert \pto 0.
\end{align*}

Together with that $\wh d_j / d_j \pto 1$ (Lemma~\ref{lemma_variance_boot})
and Assumption~\ref{assum_meta_est},
the validity of the bootstrap CIs proposed in~\eqref{eq:pw:ci}--\eqref{eq:unif:ci} follows
in the sense of~\eqref{eq:boot:valid}.
In particular, the bootstrap CIs are asymptotically honest 
whether the changes are local or fixed,
although their construction does not require the knowledge of the regime determined by the magnitude of the changes
or the error distribution.
The additional model selection step involved in the estimators $\wh\cp_j$
does not alter the theoretical validity of the bootstrap CIs by Assumption~\ref{assum_meta_est}.

The simulation studies reported in Section~\ref{sec:sim} show that 
the coverage of the bootstrap CI constructed with 
the oracle estimators $\wt\cp_j$ is generally right on target as expected from the above asymptotic theory.
 While asymptotically equivalent, bootstrap CIs constructed with the estimators $\wh\cp_j$
involving an additional model selection step,
have somewhat more conservative coverage in finite samples.
Heuristically, this is not surprising as in the latter case,
the empirical coverage is computed conditioning on the success of the model selection step, 
and changes underlying those realisations belonging to the conditioning set
tend to be more pronounced.

\section{Extensions} 
\label{sec:ext}

\subsection{Asymmetric bandwidths}
\label{sec:asymm}


The MOSUM statistic defined in~\eqref{eq:mosum:symm}
is readily extended to accommodate the use of asymmetric bandwidths $\mbf G = (G_\ell, G_r)$, as
\begin{align*}
T_{k, n}(\mbf G; X) = \sqrt{\frac{G_\ell G_r}{G_\ell + G_r}} \l(\bar{X}_{k - G_\ell, k}
- \bar{X}_{k, k + G_r}\r), \quad k = G_\ell, \ldots, n - G_r.
\end{align*}
In practice, provided that the asymmetric bandwidth is not too unbalanced,
its use can improve small sample performance of the MOSUM procedure,
see Figure~6 of \cite{meier2021mosum} for an illustration.
Their Theorem~1 extends the asymptotic null distribution
of the MOSUM test statistic to the asymmetric case and similarly,
we can extend Theorem~\ref{thm:asymp} and derive the asymptotic distribution
of the corresponding (oracle) change point estimators obtained as in~\eqref{eq:cp:tilde},
i.e.\ with the bandwidth $\mbf G_j = (G_{j, \ell}, G_{j, r})$,
\begin{align*}
\wt\cp_j = {\arg\max}_{\cp_j - G_{j, \ell} < k \le \cp_j + G_{j, r}}
\vert T_{k, n}(\mbf G_j; X) \vert.
\end{align*}
Analogously, we obtain the bootstrap estimators as
\begin{align*}
\wt\cp_j^* = {\arg\max}_{\wh\cp_j - H_{j, \ell}< k \le \wh\cp_j + H_{j, r}}
\vert T_{k, n}(\mbf G_j; X^*) \vert,
\end{align*}
where $H_{j, \ell} = \min\{G_{j, \ell}, 2(\wh\cp_j - \wh\cp_{j - 1})/3\}$ and 
$H_{j, r} = \min\{G_{j, r}, 2(\wh\cp_{j + 1} - \wh\cp_j)/3\}$, similarly as in Section~\ref{sec:method}.
Then analogously, we approximate the distribution of $\wt\cp_j - \cp_j$
with that of $\wt\cp_j^* - \wh\cp_j$,
using the symmetric construction of the CIs 
by means of $Q_j(\alpha)$ and $Q(\alpha)$ defined as in~\eqref{eq:pw:ci} and~\eqref{eq:unif:ci}.

\subsection{Dependent errors}
\label{sec:dependent}


In practice, it is more natural to allow for serial dependence in $\{\vep_t\}$. 
In the context of testing for a mean in the AMOC setting, different
time series bootstrap methods have successfully been applied 
such as block permutation \cite{kirch2007block},  
block bootstrap \cite{sharipov2016sequential} or  
frequency domain-based \cite{kirch2011} methods,
and subsampling has been studied by \cite{betken2018subsampling}
in the context of mean change point analysis in long-range dependent time series;
there also exist bootstrap-based testing procedures for more complex change point problems,
see e.g.\ \cite{bucher2016dependent} and \cite{emura2021change}.

For the multiple change point detection problem in~\eqref{eq:model},
compared to the i.i.d.\ setting, there are fewer methods that
guarantee consistent change point estimation when serial correlations are permitted in $\{\vep_t\}$,
such as those proposed in 
\cite{tecuapetla2017}, \cite{dette2018}, \cite{romano2020} and \cite{cho2020multiple}.
The single-scale MOSUM procedure studied in \cite{eichinger2018}
and the multiscale MOSUM procedure combined 
with the localised pruning proposed in \cite{cho2019two},  
have been shown to yield consistent estimators for heavy-tailed and/or serially correlated $\{\vep_t\}$,
see Appendix~\ref{sec:mosum}.

A natural question is whether we can automatically adapt to the regime determined by the magnitude of changes
using time series bootstrap methods. While for local changes, most time series bootstrap methods are expected to yield consistent results, this is no longer the case for the fixed change situation 
where a standard block bootstrap procedure will not work off the shelf.
In this section, we aim at explaining where the main difficulties lie in the construction of bootstrap CIs
for change point locations in time series settings.

For local changes, the limit distribution follows from a central limit theorem
such that time series bootstrap methods are expected to work well.
Indeed, in the AMOC setting with a local change,
\cite{huvskova2008, huvskova2010} propose to use a block bootstrap and 
show its asymptotic validity.
More precisely, in place of Step~1 of the bootstrap procedure proposed in Section~\ref{sec:method},
one draws blocks of length $K$ (with $K = K_n \to \infty$ at an appropriate rate) 
from the estimated residuals $\wh\vep_t = X_t - \wh f_t$ to form a bootstrap sample $\{X_t^*\}$,
where $\wh f_t$ denotes the piecewise constant signal 
that takes into account the possible presence of the single change point.
With some additional technicality, 
the results in \cite{huvskova2010} can be extended 
to show the consistency of thus-constructed bootstrap CIs 
for multiple local changes under the problem~\eqref{eq:model} considered here.
Other time series bootstrap procedures 
such as a stationary bootstrap, dependent wild bootstrap or even frequency domain methods
are similarly conjectured to achieve consistency.

However, the case of fixed changes needs to be handled with more care 
in the presence of serial dependence,
since the limit distribution is no longer based on a central limit theorem
such that one cannot generally expect a bootstrap procedure to work well.
The success of the bootstrap method in the i.i.d.\ case is due to that $\wt\cp_j - \cp_j = O_P(1)$ 
(see~\eqref{eq:tilde:consist} in the proof of Theorem~\ref{thm:asymp} in Appendix~\ref{pf:thm:asymp}), 
i.e.\ the asymptotic distribution of $\wt\cp_j$ 
effectively depends on a sequence of finitely many errors. 
The joint distribution of this sequence must be mimicked correctly for the construction of the CIs,
and the i.i.d.\ bootstrap described in Section~\ref{sec:method}
correctly approximates the joint distribution of finitely many independent errors asymptotically
thanks to the consistency of the empirical distribution function,
see~\eqref{eq_boot_FCLT} in the proof of Theorem~\ref{thm:bootstrap}.

In the time series case, for the validity of bootstrap CIs, 
a bootstrap procedure is required to correctly mimic the joint dependence structure of 
the three relevant finite stretches appearing in~\eqref{eq_boot_FCLT}
(see also~\eqref{eq_Unl} to see where the three stretches come from).
While the three stretches are (asymptotically) independent under appropriate assumptions, 
for correct approximation of the joint distribution,
each of these stretches needs to be covered by a single block;
if a bootstrap procedure does not fulfil this requirement
and two blocks are involved in covering one of those stretches
(involvement of more than two blocks is not possible asymptotically 
as the block length diverges while the length of each stretch is finite), 
then those two blocks are (conditionally) independent unlike the original series 
and \eqref{eq_boot_FCLT} does not hold.

One possible approach to fulfil this requirement
is to center bootstrapped blocks from the residuals $\{\wh\vep_t\}$ at the estimator $\wh\cp_j$
as well as $\wh\cp_j \pm G_j$ for individual $j = 1, \ldots, \wh q_n$.
Since in effect, the asymptotic distribution 
such as that reported in Theorem~\ref{thm:asymp}~\ref{thm:asymp:two}
depends on finitely many observations around $\wh\cp_j$ and $\wh\cp_j \pm G_j$ only,
this bootstrap procedure essentially amounts to subsampling
where only one block for each of the three stretches (which can be considered as a subsample)
is involved in determining the distribution of $\wt\cp_j^*$ for individual change points.
Alternatively, for small enough bandwidths $G_j$ 
(asymptotically, since smaller bandwidths are permitted as more moments exist for $\{\vep_t\}$),
one can use a block length $K$ diverging faster than $G_j$ 
such that a single block centered at $\wh\cp_j$ covers all three stretches simultaneously. 
A similar idea is explored in \cite{ng2021bootstrap} 
who also suggest to apply subsampling locally at the estimated change point locations.

To summarise, while many time series resampling 
procedures are expected to return valid bootstrap CIs for local changes,
the same will typically not be the case for fixed changes
without a carefully designed subsampling method that 
takes into account the specific structure of the asymptotic distribution involved,
and their good practical performance will require longer stretches of stationarity between adjacent change points.

\section{Numerical studies}
\label{sec:sim}

In this section, we investigate the practical performance of the bootstrap CIs on simulated datasets.
We consider both the bootstrap CIs constructed with
the oracle estimators $\wt\cp_j$ as in~\eqref{eq:cp:tilde}
(which are inaccessible in practice),
and those based on the change point estimators $\wh\cp_j$
which are obtained after a model selection step.
In the former case, we expect the bootstrap CIs to closely attain the given confidence level 
since the bootstrap actually mimics the distribution of $\wt\cp_j$.
While the model selection step employed for $\wh\cp_j$ is asymptotically negligible,
simulation results suggest that it leads to somewhat more conservative CIs for the latter case 
in small samples.

\subsection{Set-up}
\label{sec:sim:model}

We consider the test signals {\tt blocks}, {\tt fms}, {\tt mix}, {\tt teeth10} and {\tt stairs10}
first introduced in \cite{fryzlewicz2014},
see Figure~\ref{fig:testsignals} in Appendix 
which plots realisations from the five test signals with Gaussian errors.
We introduce an additional scaling factor $\vartheta$ and modify the test signals as follows:
Denoting the mean of the original test signal 
by $f^\circ_t = f^\circ_0 + \sum_{j = 1}^{q_n} d_j^{\circ} \mathbb{I}_{\{t > \cp_j^{\circ} \}}$,
with the locations of the change points therein by $\cp^\circ_j$
and the (signed) size of change by $d^\circ_j$ for $j = 1, \ldots, q_n$, 
we consider the scaled signals
$f_t = f^\circ_0 + \sum_{j = 1}^{q_n} d_j \mathbb{I}_{\{t > \cp_j \}}$
with $d_j = d^\circ_j/\vartheta$ and 
the change points $\cp_j, \, j = 1, \ldots, q_n$ satisfying
$\cp_{j + 1} - \cp_j = \vartheta^2(\cp^\circ_{j + 1} - \cp^\circ_j)$.
In doing so, we keep the detectability of each change point
determined by $d_j^2 \delta_j$ constant across the scaling factor $\vartheta$,
while exploring the two different regimes -- 
the local change where $d_j = d_{j, n} \to 0$
and the fixed change with constant $d_j$ --
by varying $\vartheta \in \{1, 2, 4\}$.
In what follows, we only report the results from $\vartheta \in \{1, 4\}$ 
(with $\vartheta = 1$ corresponding to the fixed change regime
and $\vartheta = 4$ to the local one) for brevity.
Also, we only provide the results for {\tt mix} and {\tt teeth10} test signals
when $\{\vep_t\}$ follow Gaussian distributions in the main text,
and the rest of the simulation results, including when the errors follow $t_5$ distributions, 
are given in the supplement (see Appendix~\ref{sec:sim:add});
we observe little difference is observed in the results
obtained with either Gaussian or $t_5$-distributed errors. 

All results given below and in Appendix~\ref{sec:sim:add}
are based on $2000$ realisations for each simulation setting 
(with the exception of Section~\ref{sec:sim:comp} where $1000$ realisations were generated),
and we set $B = 1000$ for bootstrap sample generation.
We consider $1 - \alpha \in \{0.8, 0.9, 0.95\}$ for the confidence levels.

\subsection{Results}

\subsubsection{Bootstrap CIs constructed with the oracle estimators in~\eqref{eq:cp:tilde}}
\label{sec:sim:notest}

\begin{figure}[htbp]
\begin{subfigure}{\textwidth}
	\centering
\includegraphics[height=.39\textheight]{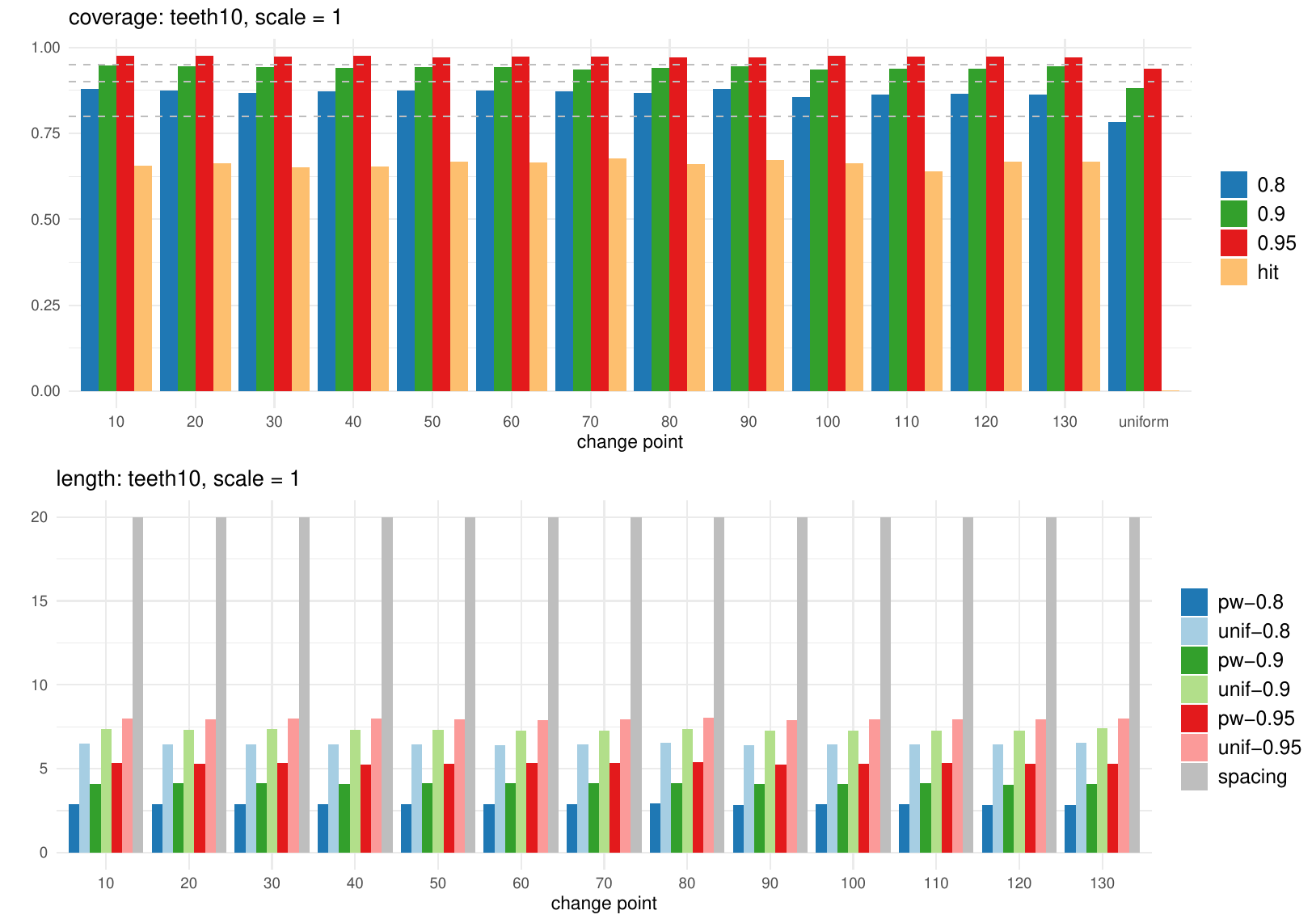}
\subcaption{ $\vartheta = 1$.}
\end{subfigure}
\begin{subfigure}{\textwidth}
	\centering
\includegraphics[height=.39\textheight]{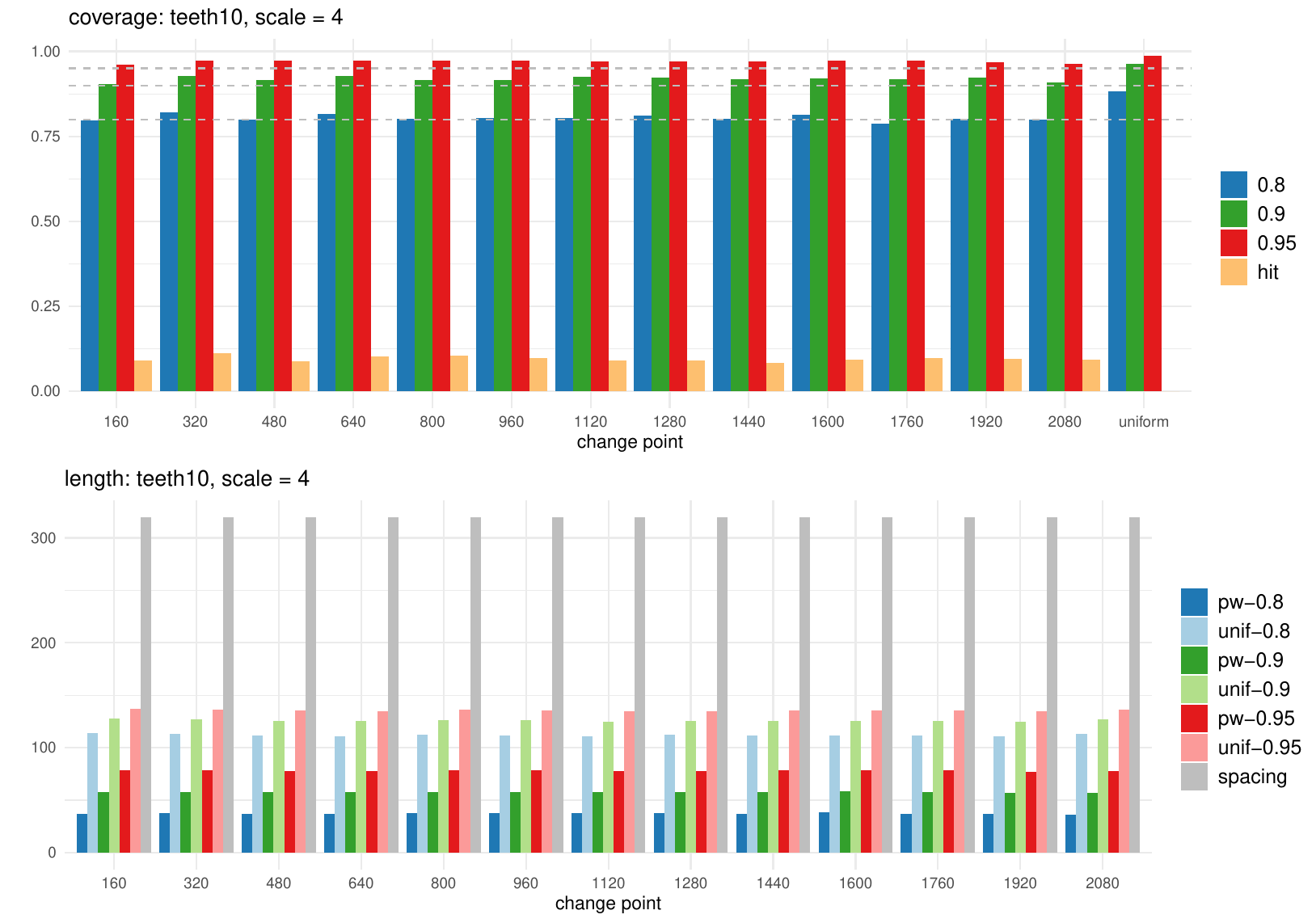}
\subcaption{$\vartheta = 4$.}
\end{subfigure}
\caption{{\tt teeth10}:
Bootstrap CIs constructed with the oracle estimators in~\eqref{eq:cp:tilde}. 
Top panel (of each sub-figure): coverage of pointwise bootstrap CIs for each $\cp_j$
(their locations given as the $x$-axis labels) and that of the uniform ones. 
Horizontal lines indicate $1 - \alpha \in \{0.8, 0.9, 0.95\}$.
We also report the proportion of the event where $\wh\cp_j = \cp_j$ exactly (`hit').
Bottom: lengths of pointwise and uniform bootstrap CIs
at $1 - \alpha \in \{0.8, 0.9, 0.95\}$. 
The grey columns `spacing' reports  
twice the minimum distance to adjacent change points $2\delta_j$,
for each $\cp_j, \, j = 1, \ldots, q_n$.}
\label{fig:teeth10}
\end{figure}

\begin{figure}[ht!]
\begin{subfigure}{\textwidth}
	\centering
\includegraphics[height=.39\textheight]{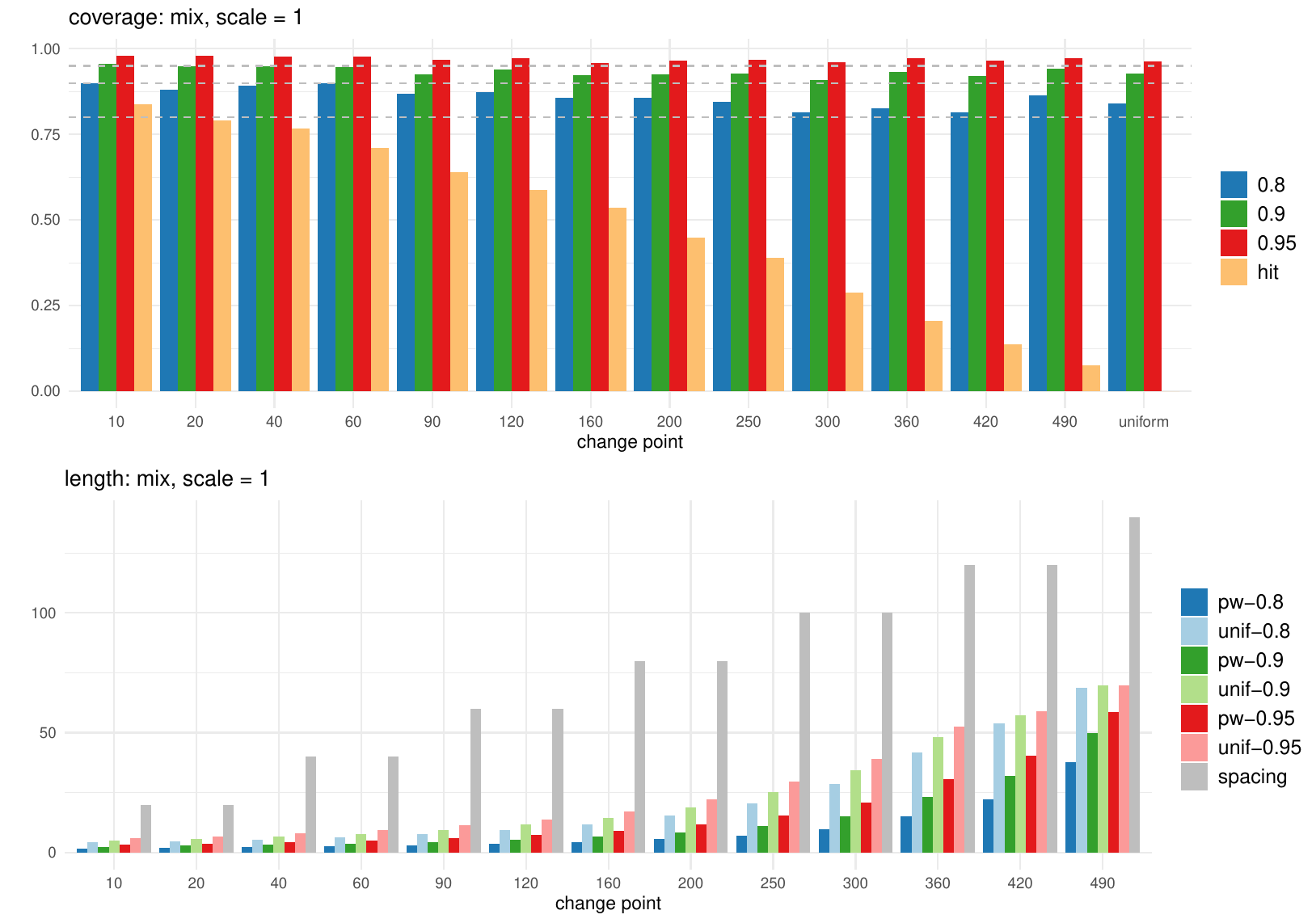}
\subcaption{ $\vartheta = 1$.}
\end{subfigure}
\begin{subfigure}{\textwidth}
	\centering
\includegraphics[height=.39\textheight]{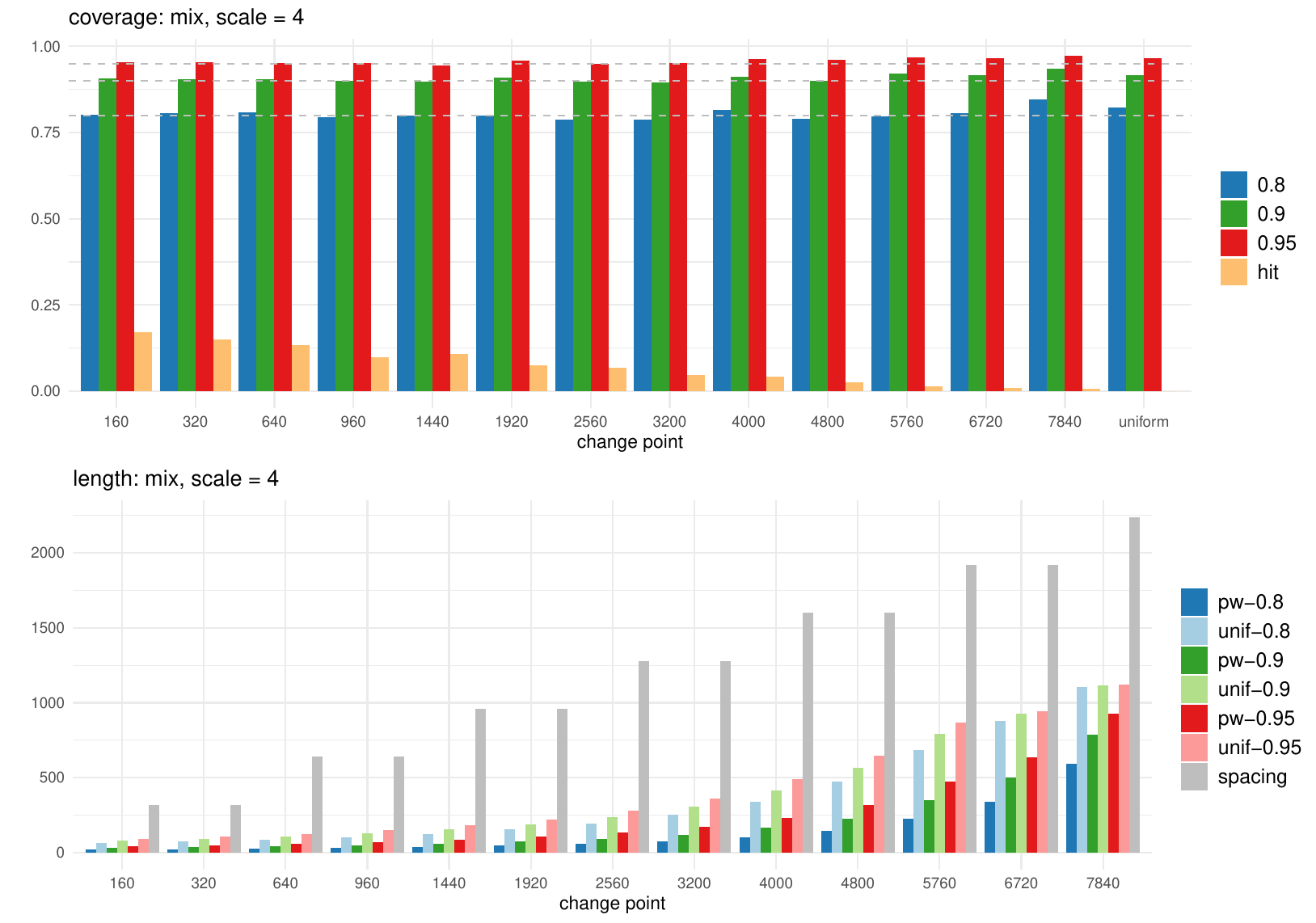}
\subcaption{$\vartheta = 4$.}
\end{subfigure}
\caption{{\tt mix}: Bootstrap CIs constructed with the oracle estimators in~\eqref{eq:cp:tilde}. See Figure~\ref{fig:teeth10} for detailed descriptions.}
\label{fig:mix}
\end{figure}

We first investigate the coverage and the length of bootstrap CIs
generated with the estimators $\wt\cp_j$ defined in~\eqref{eq:cp:tilde}
with $G_j = \delta_j/2$ for all $j = 1, \ldots, q_n$,
which do not involve any model selection step
that amounts to testing whether there indeed exist change points in their vicinity or not.
When possibly multiple change points are present,
the oracle estimators $\wt\cp_j$ are accessible only in simulations.
In contrast, an analogue of $\wt\cp_j$ is accessible in the AMOC setting,
and \cite{huvskova2008} take a similar approach in their simulation studies. 
For given $j$, the coverage of pointwise CIs
is calculated as the proportion of simulation realisations
where $\mc C_j^{\text{pw}}$ contains $\cp_j$.
For the uniform CIs, it is calculated as the proportion of the realisations
where the uniform bootstrap CIs $\mc C_j^{\text{unif}}$
contain the corresponding $\cp_j$ simultaneously for all $j = 1, \ldots, q_n$.
The lengths of CIs reported are obtained by averaging the respective CIs over the $2000$ realisations.

Figures~\ref{fig:teeth10}--\ref{fig:mix} report the coverage and the lengths of
bootstrap CIs for {\tt teeth10} and {\tt mix} 
test signals with $\vartheta \in \{1, 4\}$,
when $\{\vep_t\}$ are generated from Gaussian distributions.
The reported coverage is close to the nominal level throughout the test signals and $\vartheta$,
while the lengths of CIs are not trivial,
i.e.\ the CIs are considerably shorter than the distance to neighbouring change points.
 We observe that the coverage gets closer to the nominal level
with increasing $\vartheta$, i.e.\ as the size of changes corresponds to the local change regime,
which agrees with that the hit rate is considerably lower 
when $\vartheta = 4$ compared to when $\vartheta = 1$.
Since the bootstrap CIs are for discrete quantities, 
they are expected not to achieve the confidence level exactly
but to be on the conservative side,
particularly with smaller $\vartheta$ which corresponds to the fixed change regime
(see Theorem~\ref{thm:asymp}~\ref{assum_meta_est_two} where
the limit distribution of $\wt\cp_j$ is discrete, 
in contrast to that of the local change regime as in~\ref{assum_meta_est_one}).
Between $\vartheta \in \{1, 4\}$,
the absolute lengths of the CIs are naturally greater when $\vartheta = 4$,
but their ratio to the corresponding minimum spacing $\delta_j$ remains approximately constant 
across $\vartheta$.

\subsubsection{Bootstrap CIs constructed with model selection}
\label{sec:sim:lp}

\begin{figure}[htp!]
\begin{subfigure}{\textwidth}
	\centering
\includegraphics[height=.39\textheight]{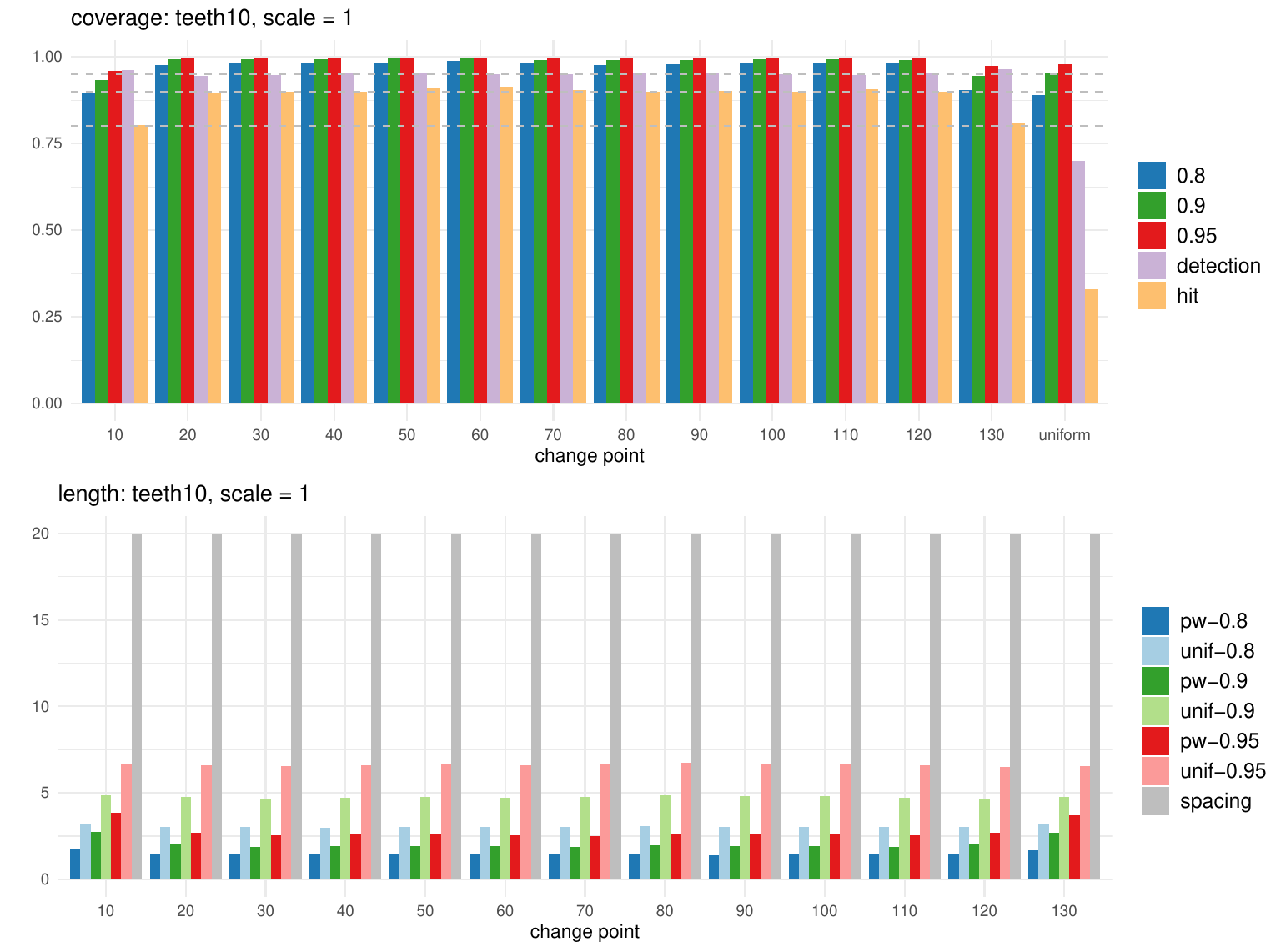}
\subcaption{ $\vartheta = 1$.}
\end{subfigure}
\begin{subfigure}{\textwidth}
	\centering
\includegraphics[height=.39\textheight]{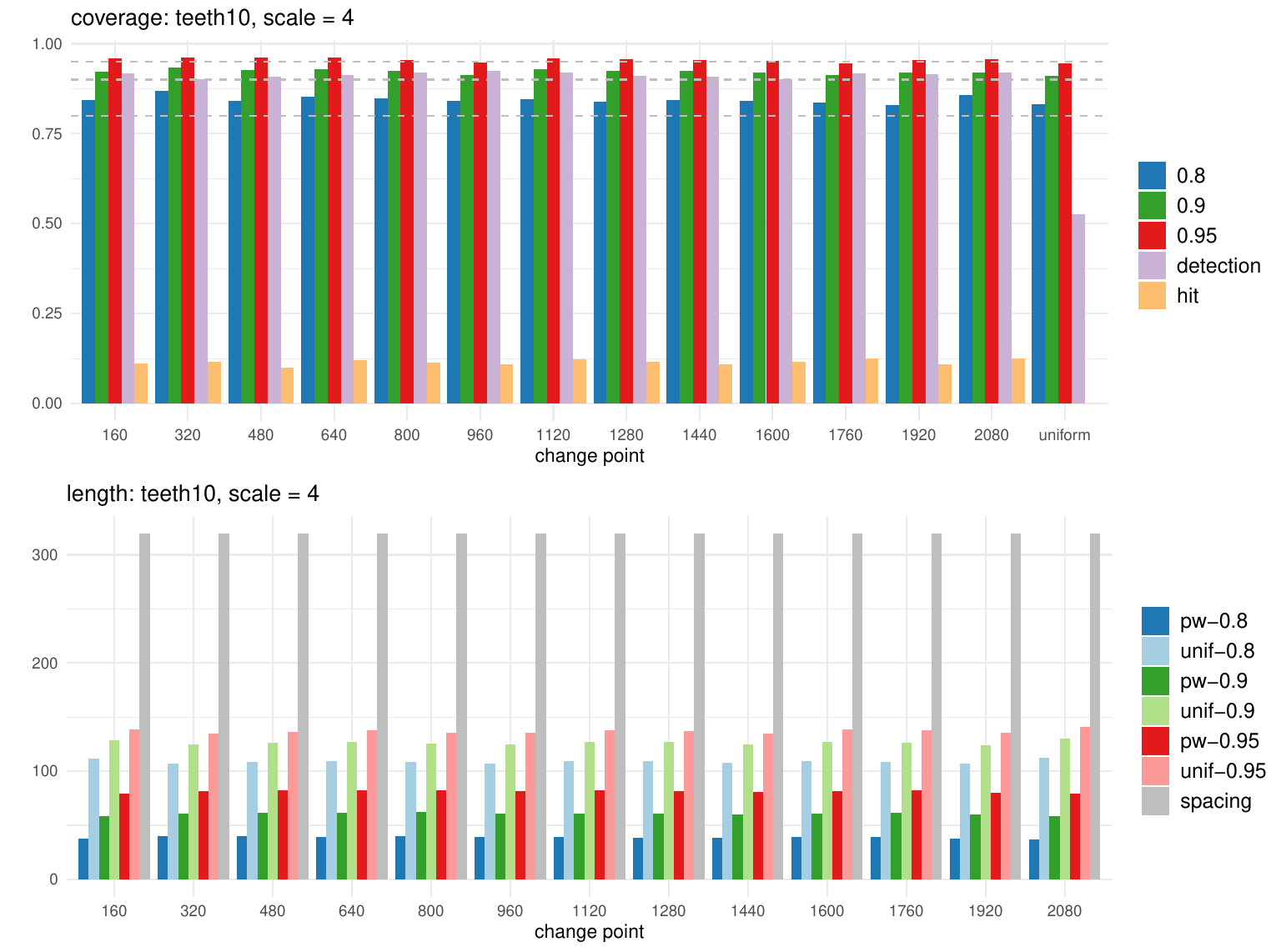}
\subcaption{$\vartheta = 4$.}
\end{subfigure}
\caption{{\tt teeth10}: Bootstrap CIs constructed with model selection.
Top panel (of each sub-figure): coverage of pointwise bootstrap CIs for each $\cp_j$
(their locations given as the $x$-axis labels) and that of the uniform ones. 
Horizontal lines indicate $1 - \alpha \in \{0.8, 0.9, 0.95\}$.
We also report the proportion of the event where $\wh\cp_j = \cp_j$ exactly (`hit'),
and that where we do detect the corresponding change points (`detection').
Bottom: lengths of pointwise and uniform bootstrap CIs
at $1 - \alpha \in \{0.8, 0.9, 0.95\}$. 
The grey columns `spacing' reports  
twice the minimum distance to adjacent change points $2\delta_j$,
for each $\cp_j, \, j = 1, \ldots, q_n$.}
\label{fig:full:teeth10}
\end{figure}

\begin{figure}[htp!]
\begin{subfigure}{\textwidth}
	\centering
\includegraphics[height=.39\textheight]{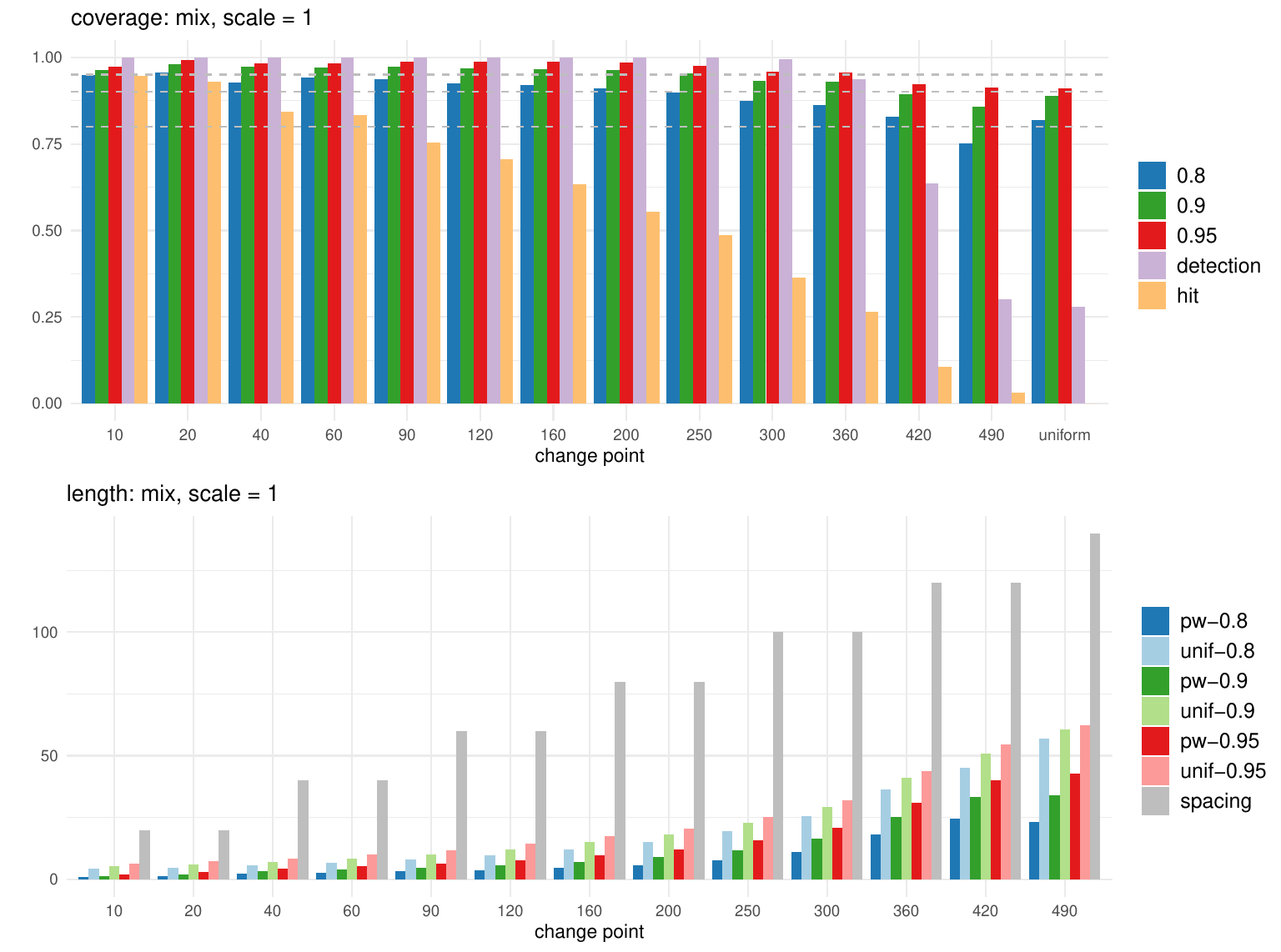}
\subcaption{ $\vartheta = 1$.}
\end{subfigure}
\begin{subfigure}{\textwidth}
	\centering
\includegraphics[height=.39\textheight]{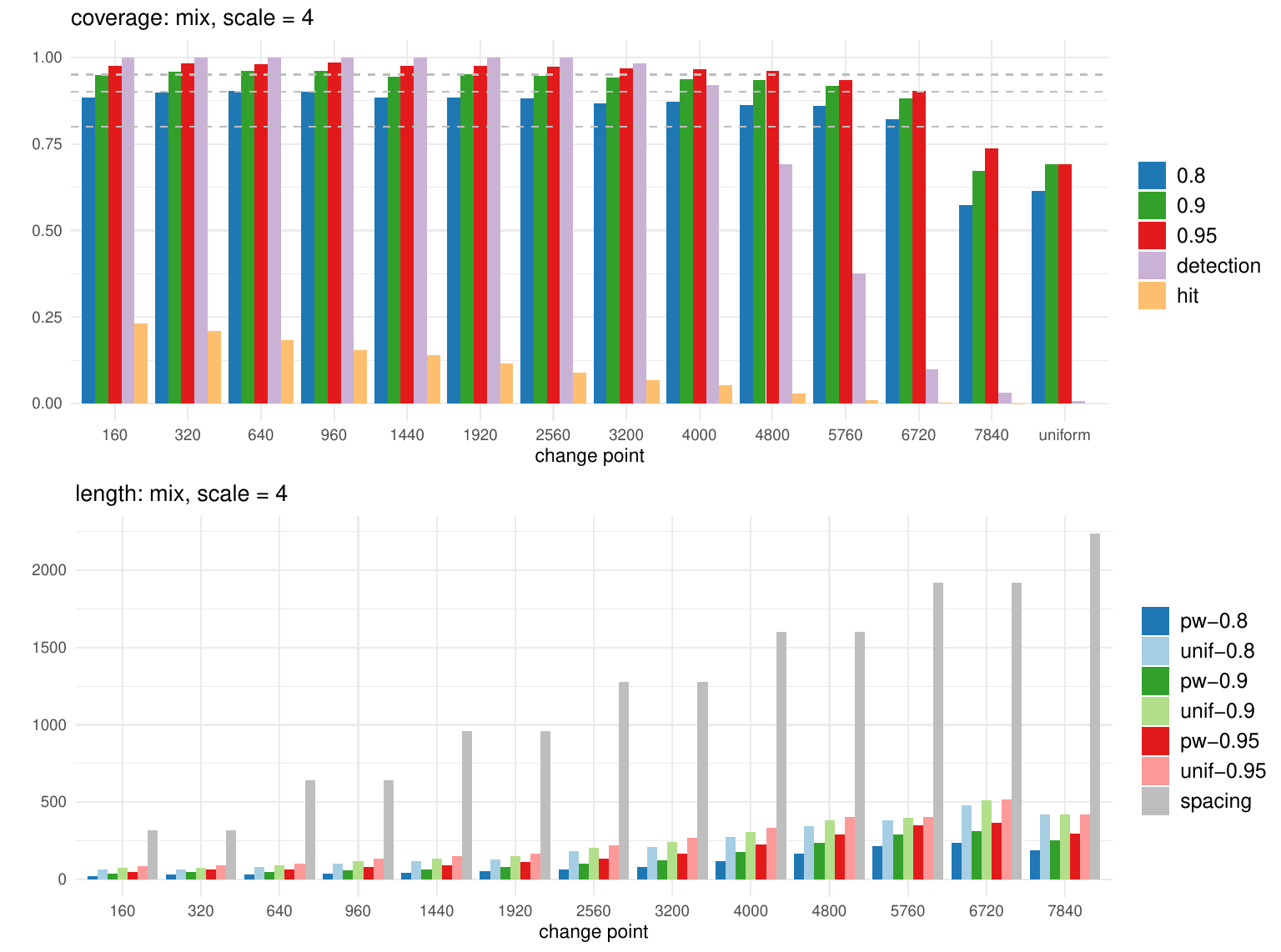}
\subcaption{$\vartheta = 4$.}
\end{subfigure}
\caption{{\tt mix}: Bootstrap CIs constructed with model selection.
See Figure~\ref{fig:full:teeth10} for detailed descriptions.}
\label{fig:full:mix}
\end{figure}

We examine the performance of bootstrap CIs
when applied with the change point estimators from the two-stage change point detection procedure proposed in \cite{cho2019two}: 
Termed MoLP, it first generates candidate change point estimators
using the MOSUM procedure with a range of bandwidths
(which includes asymmetric bandwidths discussed in Section~\ref{sec:asymm}),
and then prunes down the set of candidates to obtain the final estimators 
via the localised pruning methodology proposed therein,
see Appendix~\ref{sec:mosum} for further details.
The implementation of this two-stage procedure
is readily available as the function {\tt multiscale.localPrune} 
in the R package {\tt mosum} \citep{mosum}, 
and we apply it with all the tuning parameters chosen as recommended by default;
the set of bandwidths is also obtained according to the automatic bandwidth generation
implemented therein with the minimum bandwidth set at $10\vartheta$.
Here, $G_j$ denotes the bandwidth at which the change point estimator $\wh\cp_j$ is detected
which is chosen in a data-driven way and therefore often differs from $\delta_j/2$. 
For the test signals {\tt teeth10} and {\tt stairs10} with $\vartheta = 1$,
the set of bandwidths excludes the bandwidth $G_j = 5$ used in Section~\ref{sec:sim:notest},
as the minimum bandwidth coincides with the minimum spacing $\delta_j = 10$ of those signals.
Even in this adverse situation where the condition $2G_j < \delta_j$
required for Theorem~\ref{thm:bootstrap} is violated,
the proposed methodology works well.
Some preliminary numerical results 
indicate that the issue discussed in Remark~\ref{rem_argmax_boot} occurs in this situation, 
if the bootstrap estimator is calculated over the $G_j$-environment around 
each of the original change point estimators instead of the modified one used in~\eqref{eq:boot:max}.

Unlike in Section~\ref{sec:sim:notest}, 
the set of estimators $\wh\Cp$ returned by MoLP
may not contain the estimators for all $\cp_j, \, j = 1, \ldots, q_n$, in practice.
Therefore, we match each change point $\cp_j$ with an estimator $\wh\cp \in \wh\Cp$ as follows:
If $\wh\Cp \cap \mc I_j \ne \emptyset$
with $\mc I_j = \{\lfloor (\cp_{j - 1} + \cp_j)/2 \rfloor + 1, \ldots, \lfloor (\cp_j + \cp_{j + 1})/2 \rfloor \}$,
we regard that $\cp_j$ has been detected, and set an indicator $Z_j = 1$; 
otherwise, we set $Z_j = 0$.
If there are multiple estimators falling into the set $\mc I_j$,
we set the one closest to $\cp_j$ as its estimator $\wh\cp_j$.
Then, the coverage of the pointwise CIs is calculated as
the proportion of realisations where $\mc C^{\text{pw}}_j$ contains $\cp_j$
conditional on $Z_j = 1$, for individual $j = 1, \ldots, q_n$.
The coverage of the uniform CIs is calculated as 
that of $\mc C^{\text{unif}}_j$ containing $\cp_j$, 
conditional on $\vert \wh\Cp \vert = q_n$ and $Z_j = 1$ simultaneously for all $j = 1, \ldots, q_n$;
the lengths of the CIs are also calculated by taking average conditional on the
detection of the corresponding change points.
Figures~\ref{fig:full:teeth10}--\ref{fig:full:mix} plot the results
from the {\tt teeth10} and {\tt mix} test signals when $\vartheta \in \{1, 4\}$.

\begin{table}[htb]
\caption{Average coverage of the bootstrap $90\%$-CIs constructed with the oracle estimators
and the estimators obtained from MoLP.
Under each `$\cp_j$', we report the coverage of the corresponding pointwise CI
and under `uniform', that of the uniform CI as described in the main text.}
We also report the proportion of realisations where individual change points are detected (by MoLP)
and where all change points are correctly detected (see the rows headed `detection').
\label{table:cov}
\centering
\resizebox{\columnwidth}{!}{
\begin{tabular}{ccc  ccc ccc ccc ccc c  c}
\toprule
test signal &	$\vartheta$ &	estimator &	$\cp_1$ &	$\cp_2$ &	$\cp_3$ &	$\cp_4$ &	$\cp_5$ &	$\cp_6$ &	$\cp_7$ &	$\cp_8$ &	$\cp_9$ &	$\cp_{10}$ &	$\cp_{11}$ &	$\cp_{12}$ &	$\cp_{13}$ &	uniform	\\
\cmidrule(lr){1-3} \cmidrule(lr){4-16} \cmidrule(lr){17-17}
{\tt mix} &	1 &	oracle &	0.956 &	0.948 &	0.95 &	0.946 &	0.926 &	0.938 &	0.922 &	0.926 &	0.928 &	0.908 &	0.934 &	0.922 &	0.942 &	0.927	\\	
&	&	MoLP &	0.964 &	0.981 &	0.972 &	0.971 &	0.974 &	0.969 &	0.965 &	0.964 &	0.953 &	0.931 &	0.93 &	0.894 &	0.857 &	0.888	\\	
&	&	detection &	0.999 &	0.999 &	1 &	1 &	1 &	1 &	1 &	1 &	1 &	0.994 &	0.938 &	0.635 &	0.3 &	0.279	\\	
\cmidrule(lr){2-3} \cmidrule(lr){4-16} \cmidrule(lr){17-17}
&	4 &	oracle &	0.906 &	0.904 &	0.905 &	0.9 &	0.898 &	0.908 &	0.898 &	0.895 &	0.911 &	0.9 &	0.922 &	0.918 &	0.935 &	0.917	\\	
&	&	MoLP &	0.948 &	0.959 &	0.96 &	0.961 &	0.945 &	0.952 &	0.946 &	0.942 &	0.938 &	0.933 &	0.919 &	0.883 &	0.672 &	0.692	\\	
&	&	detection &	1 &	1 &	1 &	1 &	0.999 &	0.999 &	0.998 &	0.982 &	0.921 &	0.692 &	0.376 &	0.098 &	0.031 &	0.0065	\\	
\cmidrule(lr){1-3} \cmidrule(lr){4-16} \cmidrule(lr){17-17}
{\tt teeth10} &	1 &	oracle &	0.948 &	0.946 &	0.944 &	0.941 &	0.942 &	0.942 &	0.936 &	0.94 &	0.946 &	0.935 &	0.939 &	0.938 &	0.946 &	0.882	\\	
&	&	MoLP &	0.933 &	0.993 &	0.994 &	0.992 &	0.996 &	0.995 &	0.992 &	0.991 &	0.991 &	0.993 &	0.994 &	0.991 &	0.946 &	0.954	\\	
&	&	detection &	0.962 &	0.946 &	0.948 &	0.952 &	0.952 &	0.951 &	0.951 &	0.954 &	0.953 &	0.951 &	0.948 &	0.952 &	0.964 &	0.7	\\	
\cmidrule(lr){2-3} \cmidrule(lr){4-16} \cmidrule(lr){17-17}
&	4 &	oracle &	0.904 &	0.928 &	0.916 &	0.927 &	0.916 &	0.916 &	0.926 &	0.923 &	0.919 &	0.92 &	0.918 &	0.922 &	0.908 &	0.964	\\	
&	&	MoLP &	0.923 &	0.933 &	0.928 &	0.929 &	0.926 &	0.912 &	0.929 &	0.924 &	0.925 &	0.919 &	0.914 &	0.92 &	0.921 &	0.91	\\	
&	&	detection &	0.918 &	0.902 &	0.908 &	0.913 &	0.921 &	0.926 &	0.92 &	0.91 &	0.908 &	0.901 &	0.916 &	0.915 &	0.921 &	0.526	\\	
\bottomrule
\end{tabular}}
\end{table}

For those change points that are well-detected,
the coverage observed here tends to be slightly more conservative
compared to that reported in Section~\ref{sec:sim:notest}, 
which is attributed to the additional testing and conditioning,
see Table~\ref{table:cov} for an overview of the comparison.
On the other hand, for change points which are difficult to detect
(i.e.\ the test statistics in their vicinity do not exceed the theoretically motivated threshold
due to the corresponding $d_j^2\delta_j$ being small),
the coverage is poor.
Compare e.g.\ the coverage of the last change point $\cp_{13}$ in the {\tt mix} test signal
reported in Figure~\ref{fig:full:mix}~(a) (resp. Figure~\ref{fig:full:mix}~(b)) 
with that in Figure~\ref{fig:mix}~(a) (resp. Figure~\ref{fig:mix}~(b));
the coverage is below the nominal level in the former
and in the latter, we observe that the corresponding CIs are wide. 
The low coverage of uniform CIs observed in Figure~\ref{fig:full:mix}~(b)
is inherited from that of the change point $\cp_{13}$. 
In fact, when $\vartheta = 4$, the events of $Z_{13} = 1$ and $\cap_{j = 1}^{13} \{Z_j = 1\}$
occur only on $3.1\%$ and $0.65\%$ of the realisations, respectively.
It indicates that the change at $\cp_{13}$ is too small to be asymptotically detectable. 
Consequently, the local maxima of the MOSUM statistic are typically not significant and, 
even if they are, it can be considered spurious (false positives). 
Additionally, the non-detectability of $\cp_{13}$ results in 
the location of the local maximum in this stretch being arbitrary 
in the sense of not being sufficiently close to $\cp_{13}$ in general 
(recall the generous criterion used to determine whether change points are detected). 
This example emphasises that the CIs are valid only conditional on 
actually having the change points detected, and 
are no substitute for the uncertainty quantification 
related to whether the change point estimators are spurious or not.

\subsubsection{Comparison with SMUCE and NSP}
\label{sec:sim:comp}

\begin{table}[htb]
\caption{Coverage measures, mean and median length of the intervals and execution time (in seconds)
returned by MoLP, NSP and SMUCE, averaged over $1000$ realisations.}
\label{table:nsp}
\centering
\setlength{\tabcolsep}{2pt}
{\small \begin{tabular}{ll ccc ccc ccc}
\toprule
&	$1 - \alpha$ &	\multicolumn{3}{c}{$0.8$} &			\multicolumn{3}{c}{$0.9$} &			\multicolumn{3}{c}{$0.95$} 			\\	\cmidrule(lr){3-5} \cmidrule(lr){6-8} \cmidrule(lr){9-11}
test signal &	&	MoLP &	NSP &	SMUCE &	MoLP &	NSP &	SMUCE &	MoLP &	NSP &	SMUCE	\\	\cmidrule(lr){1-2} \cmidrule(lr){3-5} \cmidrule(lr){6-8} \cmidrule(lr){9-11}
{\tt blocks} &	CM1 &	0.842 &	0.927 &	0.542 &	0.874 &	0.964 &	0.494 &	0.889 &	0.983 &	0.456	\\	
&	CM2 & 	0.535 &	0.023 &	0.006 &	0.563 &	0.012 &	0.002 &	0.581 &	0.009 &	0.001	\\	
&	mean &	31.241 &	66.151 &	63.999 &	34.053 &	70.712 &	70.58 &	36.138 &	73.771 &	73.080	\\	
&	median &	23.15 &	52.656 &	44.968 &	25.296 &	57.457 &	50.493 &	27.18 &	61.14 &	54.368	\\	
& 	time &	0.115 &	14.448 &	0.047 &	0.114 &	14.444 &	0.048 &	0.115 &	14.336 &	0.048	\\	\cmidrule(lr){1-2} \cmidrule(lr){3-5} \cmidrule(lr){6-8} \cmidrule(lr){9-11}
{\tt fms} &	CM1 &	0.785 &	0.923 &	0.587 &	0.868 &	0.959 &	0.535 &	0.892 &	0.982 &	0.491	\\	
&	CM2 & 	0.735 &	0.475 &	0.441 &	0.804 &	0.374 &	0.267 &	0.825 &	0.302 &	0.162	\\	
&	mean &	13.381 &	28.498 &	25.82 &	15.952 &	31.279 &	30.161 &	17.653 &	33.145 &	33.015	\\	
&	median &	8.227 &	15.07 &	14.09 &	10.498 &	16.142 &	15.99 &	12.224 &	17.118 &	16.988	\\	
& 	time &	0.035 &	3.483 &	0.015 &	0.035 &	3.544 &	0.015 &	0.035 &	3.571 &	0.016	\\	\cmidrule(lr){1-2} \cmidrule(lr){3-5} \cmidrule(lr){6-8} \cmidrule(lr){9-11}
{\tt mix} &	CM1 &	0.727 &	0.967 &	0.329 &	0.823 &	0.986 &	0.261 &	0.866 &	0.997 &	0.184	\\	
&	CM2 & 	0.189 &	0.023 &	0.004 &	0.22 &	0.017 &	0.001 &	0.232 &	0.006 &	0.001	\\	
&	mean &	16.243 &	25.781 &	22.742 &	18.733 &	27.288 &	23.825 &	20.728 &	28.405 &	24.209	\\	
&	median &	10.528 &	18.95 &	17.311 &	13.146 &	20.401 &	18.911 &	15.379 &	21.876 &	19.322	\\	
& 	time &	0.058 &	3.786 &	0.017 &	0.059 &	3.818 &	0.017 &	0.06 &	3.824 &	0.018	\\	\cmidrule(lr){1-2} \cmidrule(lr){3-5} \cmidrule(lr){6-8} \cmidrule(lr){9-11}
{\tt teeth10} &	CM1 &	0.713 &	0.98 &	0.311 &	0.826 &	0.989 &	0.44 &	0.887 &	0.995 &	0.560	\\	
&	CM2 & 	0.484 &	0.046 &	0.001 &	0.571 &	0.015 &	0 &	0.622 &	0.008 &	0.000	\\	
&	mean &	3.834 &	10.883 &	13.129 &	5.522 &	12.248 &	15.096 &	7.235 &	13.698 &	18.200	\\	
&	median &	3.303 &	10.437 &	12.263 &	4.797 &	11.928 &	14.256 &	6.445 &	13.264 &	17.601	\\	
& 	time &	0.021 &	0.778 &	0.008 &	0.021 &	0.839 &	0.008 &	0.021 &	0.889 &	0.008	\\	\cmidrule(lr){1-2} \cmidrule(lr){3-5} \cmidrule(lr){6-8} \cmidrule(lr){9-11}
{\tt stairs10} &	CM1 &	0.998 &	0.994 &	0.258 &	0.998 &	0.998 &	0.336 &	0.998 &	0.998 &	0.421	\\	
&	CM2 & 	0.986 &	0.112 &	0.054 &	0.986 &	0.066 &	0.041 &	0.986 &	0.065 &	0.048	\\	
&	mean &	17.193 &	8.918 &	7.826 &	17.819 &	9.651 &	8.274 &	18.211 &	10.329 &	8.578	\\	
&	median &	18.266 &	8.743 &	7.617 &	18.938 &	9.501 &	8.038 &	19.298 &	10.218 &	8.352	\\	
& 	time &	0.022 &	0.648 &	0.008 &	0.022 &	0.687 &	0.008 &	0.023 &	0.717 &	0.008	\\	\bottomrule
\end{tabular}}
\end{table}

We compare the proposed bootstrap procedure combined with MoLP as in Section~\ref{sec:sim:lp},
with SMUCE \citep{frick2014} and NSP \citep{fryzlewicz2020}
on the five test signals generated with $\vartheta = 1$.
SMUCE returns a confidence set for $f_t$ at a given confidence level $\alpha$
from which confidence bands around change points can be derived;
their coverage is  comparable with that of the proposed uniform bootstrap CIs,
and we provide a comparative study between these two methods in Appendix~\ref{sec:smuce}.
NSP aims at returning intervals that contain at least one change point at a prescribed level $\alpha$, 
and thus its objective is different from those of the CIs obtained by MoLP or SMUCE.

To account for this, we use two coverage measures, referred to as CM1 and CM2, for their evaluation.
Proposed in \cite{fryzlewicz2020}, CM1 reports the proportion (over $1000$ realisations) 
where all of the intervals returned by a given method contain at least one true change point.
CM2 additionally checks whether each of the true change points is covered by one of the intervals returned.
In other words, CM2 is the proportion of realisations where each change point is covered by an interval,
in addition to there being no spurious interval that does not contain a change point. 
Effectively, this measures the unconditional coverage uniformly over all change points 
and is distinguished from the conditional coverage reported in Section~\ref{sec:sim:lp}.
By construction, CM1 is always greater than or equal to CM2 
and it takes the value one even when a method does not return any intervals,
whereas CM2 is recorded as zero in such a case. 
Additionally, we report the mean and the median length of the intervals
and execution time (on a 4 GHz Intel Core i7 with 16 GB of RAM running macOS Catalina), see Table~\ref{table:nsp}.
In the case of MoLP, the execution time includes time taken by both the detection and the bootstrap procedures.

Overall, MoLP attains good coverage while taking only a fraction of a second 
to perform change point detection and generate bootstrap CIs based on $B = 2000$ bootstrap samples.
Also, the lengths of bootstrap CIs are generally shorter than
the intervals returned by NSP or SMUCE with the exception of {\tt stairs10}.
The bootstrap CIs returned by MoLP are not designed to account for spurious change point estimators, 
and they do not provide finite-sample guarantee of controlling CM1. 
However, their coverage is typically not too far below the desired level and even above it in the case of \texttt{stairs10} 
(due to the change point estimators being very precise). 
In contrast, NSP is the only method giving finite-sample guarantees of controlling CM1 and 
also the only one achieving this goal.
As expected, the coverage measures returned by MoLP increase as $\alpha$ decreases.
NSP attains CM1 close to one regardless of $\alpha$
and in the case of SMUCE, its coverage can increase with $\alpha$,
an observation also made by several other authors \citep{chen2014, fryzlewicz2020}.
According to CM2, MoLP performs best on all test signals.
NSP does not explicitly set out to detect changes and as such,
the intervals it outputs often under-detects the number of change points
in the sense that some changes points are not covered by any of those intervals,
as evidenced by the overall small values of CM2.

\subsection{Application to central England temperature}
\label{sec:real}

\begin{figure}[htb]
\centering
\includegraphics[width = .7\textwidth]{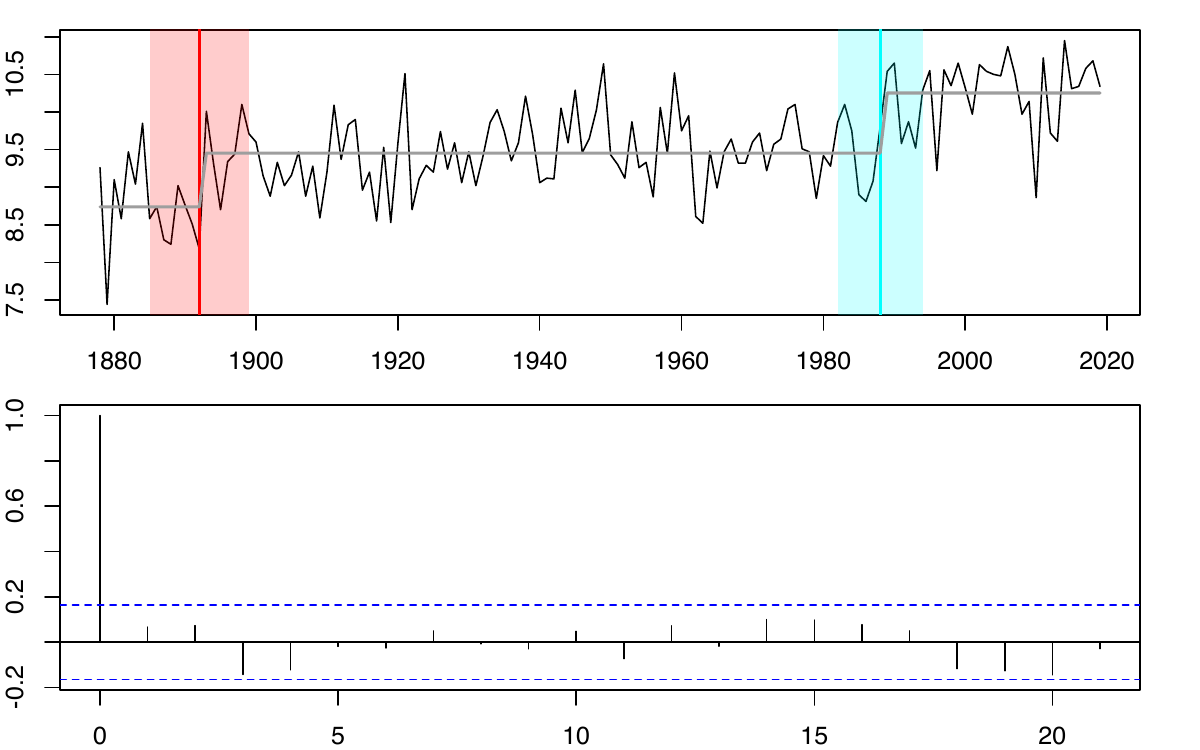}
\caption{Top: Yearly average temperatures between 1878 and 2019
plotted together with the change points estimated by MoLP (vertical lines),
shaded areas representing the $90\%$ uniform confidence intervals around the change points
and piecewise constant mean (bold lines)
Bottom: Autocorrelations of the data after the two mean shifts are removed.}
\label{fig:hadcet}
\end{figure}

The Hadley Centre central England temperature (HadCET) dataset \citep{parker1992}
contains the mean daily and monthly temperatures
representative of a roughly triangular area 
enclosed by Lancashire, London and Bristol, UK.
We analyse the yearly average of the monthly mean temperatures 
of the period of 1878--2019 ($n = 142$) for change points using MoLP
(applied with the significance level $0.2$ and other tuning parameters set at their default values)
and derive their 90\% CIs, see Table~\ref{table:hadcet}
and also the top panel of Figure~\ref{fig:hadcet} for the visualisation of the uniform CIs.
The bottom panel of Figure~\ref{fig:hadcet} shows that
after the mean shifts are accounted for, the series exhibits little autocorrelations
and thus justifies the use of the proposed bootstrap methodology.
The second change point detected at 1987/88 
coincides with the global regime shift in Earth's biophysical systems
identified around 1987 \citep{reid2016}, 
which is attributed to anthropogenic warming and a volcanic eruption.

\begin{table}[ht]
\caption{Pointwise and uniform $90\%$ CIs obtained with 
$\wh\cp_1 = 1892$ and $\wh\cp_2 = 1988$.}
\label{table:hadcet}
\centering
{\small
\begin{tabular}{c c c}
\toprule
& pointwise & uniform \\
\midrule
$\cp_1$ & $[1887, 1897]$ & $[1885, 1899]$ \\
$\cp_2$ & $[1984, 1992]$ & $[1983, 1993]$ \\
\bottomrule
\end{tabular}}
\end{table}

\section{Conclusions}\label{sec:conc}
In this paper we rigorously analyse a bootstrap method for the construction of CIs 
for the location of change points obtained from MOSUM-based procedures,
both theoretically and in simulation studies. 
We show that for local changes (i.e.\ $d_j = d_{j, n} \to 0$), 
the limit distributions of change point estimators
are continuous as a functional of a Wiener process with drift,
while those for the fixed changes remain discrete.
Such results hold under mild assumptions made in~\eqref{eq:errors}
that permit non-Gaussianity and heavy-tails,
and could even be extended to serially correlated errors as discussed in Section~\ref{sec:dependent}.
Both our theoretical investigation into the asymptotic distribution of the change point estimators 
and bootstrap CI construction, are based on the oracle estimators in~\eqref{eq:cp:tilde} and~\eqref{eq:boot:max},
with the former tied to MOSUM-based change point estimators
as noted in Assumption~\ref{assum_meta_est}.
To the best of our knowledge, in the fixed change case,
there was no proof of the validity of the bootstrap procedure in the literature
even for the AMOC setting with CUSUM-based estimators.
The proof requires non-standard steps 
due to the non-Gaussianity of the limit distribution of the change point estimators. 
Despite the distinct behaviour of the limit distributions for fixed and local changes, 
the proposed bootstrap procedure adapts to the magnitude of the change without knowing 
the regime it belongs to. 

Numerical studies show that the bootstrap works well with the oracle estimators,
and only slightly more conservative when constructed with the estimators
which are obtained with an additional model selection step
(and thus accessible in practice).
The results suggest that the bootstrap CIs behave well 
in the presence of heavy-tailed errors,
and also in comparison with competing methodologies.
Both the implementation of the proposed bootstrap methodology
as well as the multiple change point detection procedure 
adopted for simulation studies (MoLP) are available in the R package {\tt mosum} \citep{mosum}.

\bibliographystyle{apalike}
\bibliography{fbib}

\clearpage

\appendix

\section{Proofs}
\label{sec:proofs}

\subsection{Proof of Theorem~\ref{thm:asymp}}
\label{pf:thm:asymp}


Assumptions~A.1~b) and~c) of \cite{eichinger2018} are fulfilled 
in the situation considered here; see \cite{komlos1975, komlos1976} and \cite{major1976approximation} 
for the invariance principle (although for the proof of this theorem, the functional central limit theorem is sufficient), and Theorem 3.7.8 of \cite{stout1974almost}, 
for the moment assumption on the sums of the errors.	
First, note that 
\begin{align*}
\wt\cp_j = \arg\max_{k: \, \vert k - \cp_j \vert \le G_j} V_{k, n}(G_j),
\quad \text{where} \quad V_{k,n}(G_j)=(T_{k,n}(G_j))^2-(T_{\cp_j,n}(G_j))^2.
\end{align*}
The proof of Theorem~3.2 in \cite{eichinger2018} shows that for any $c>1$
\begin{align}
& \p\left( |\wt\cp_j-\cp_j| > c \sigma^2 d_j^{-2} \right)
\le \p\left(\max_{|k-\cp_j| > c \sigma^2 d_j^{-2}}  V_{k,n}(G_j) \ge 
\max_{|k-\cp_j|\le c \sigma^2 d_j^{-2}}  V_{k,n}(G_j) \right)
\nn \\
& \le O(c^{-1} )+o(1).
\label{eq:tilde:consist}
\end{align}

Therefore,
\begin{align*}
& \p\left(\frac{d_{j}^2 (\wt\cp_j - \cp_j)}{\sigma^2} \le x \right)=
\p\left(-c \le \frac{d_{j}^2 (\wt\cp_j -\cp_j)}{\sigma^2} \le x\right) + 
O(c^{-1}) + o(1).
\end{align*}
Furthermore by Lemma~5.2  in \cite{eichinger2018} and the decomposition of $V_{k,n}(G_j)$ as given in their~(5.8), it holds for any $k$ satisfying $-c \le \sigma^{-2} d_j^2( k-\cp_j) < 0$,
\begin{align}
\label{eq:vk}
V_{k,n}(G_j) &= - d_{j}^2|\cp_j - k| - d_{j}\left(\sum_{t = k-G_j+1}^{\cp_j-G_j} \vep_t - 2\sum_{t = k + 1}^{\cp_j} \vep_t + \sum_{t = k+G_j+1}^{\cp_j+G_j} \vep_t
 \right) + o_P(1)
\\
&=: \wt V_{k, n}(G_j) + o_P(1), \nn
\end{align}
where the $o_P(1)$ term is uniform over $\cp_j - G_j \le k < \cp_j$, and
\begin{align}\label{eq_Unl}
& \left\{ d_{j}\left(\sum_{t = k-G_j+1}^{\cp_j-G_j}\vep_t - 
2\sum_{t = k+1}^{\cp_j} \vep_t
+ \sum_{t = k+G_j+1}^{\cp_j+G_j} \vep_t \right) : \, k = \cp_j - 1, \ldots, \cp_j - c\sigma^2d_{j}^{-2} \right\}
\\
&\overset{\mathcal{D}}{=} \left\{U_n(\ell) = d_{j} 
\left(\sum_{t = \ell}^{-1}\vep_t^{(1)} - 2\sum_{t=\ell}^{-1}\vep_t^{(2)} +
\sum_{t=\ell}^{-1} \vep_t^{(3)} \right) : \, 
\ell = -1, \ldots, -c\sigma^2 d_{j}^{-2} \right\}.\notag
\end{align}
Analogous assertions hold for $\cp_j < k \le \cp_j+G_j$.

So far, the preceding arguments hold for both the local and the fixed change cases.
Now, the proof for the local change ($d_j = d_{j, n} \to 0$) in~\ref{thm:asymp:one} 
is concluded as in the proof of Theorem~3.3 in \cite{eichinger2018} 
by making use of the following functional central limit theorem
\begin{align*}
	\left\{\frac{U_n(\lfloor s\sigma^2d_{j}^{-2}\rfloor)}{\sigma^2}: \, 
	-c\le s\le c\right\} \overset{D[-c,c]}{\longrightarrow} 
	\{\sqrt{6}\, W(s), \, -c \le s \le c\}.
\end{align*}
We elaborate on the proof here in order to highlight the difference 
between the local and the fixed change cases. 
Note that
\begin{align*}
	&\p\left( -c \ls \frac{d_{j}^2 (\wt\cp_j - \cp_j)}{\sigma^2} \ls x\right)\\
	&= \p\left( \max_{-c \ls d_j^2(k-\cp_j)/\sigma^2\ls x }\wt V_{k,n}(G_j)
	\ge  \max_{x < d_j^2(k-\cp_j)/\sigma^2\ls c } \wt V_{k,n}(G_j)
+o_P(1) \right).
\end{align*}
Thus, for any $\eta>0$, we obtain

\begin{align}
	&\p\left(-c \ls \frac{d_{j}^2(\wt\cp_j - \cp_j)}{\sigma^2}\ls x \right) 
	\nn \\
	&\le\p\left(\max_{-c \ls d_j^2(k-\cp_j)/\sigma^2\ls x }\wt V_{k,n}(G_j)\ge  \max_{x < d_j^2(k-\cp_j)/\sigma^2\ls c }	\wt V_{k,n}(G_j)-\eta\right) + o(1)
	 \label{eq:upper:bound}\\
	 &\to  \p\left(
	 \max_{-c\ls s\ls x}\left(-|s|-\sqrt{6} W(s)\right)\gs \max_{x<s\ls c}\left(-|s|-\sqrt{6} W(s)\right)-\eta\right)
	 \nn
\end{align}
as well as
\begin{align}
	&\p\left(-c\ls  d_{j}^2\frac{\wt\cp_j - \cp_j}{\sigma^2}\ls x\right) 
	\nn \\
	&\ge\p\left(  \max_{-c \ls d_j^2(k-\cp_j)/\sigma^2\ls x }
	\wt V_{k,n}(G_j)
	\ge \max_{x < d_j^2(k-\cp_j)/\sigma^2\ls c }\wt V_{k,n}(G_j)
	+\eta\right) +o(1)
	 \label{eq:lower:bound} \\
	 &\to  \p\left(\max_{-c\ls s\ls x}\left(-|s|-\sqrt{6} W(s)\right)\gs \max_{x<s\ls c}\left(-|s|-\sqrt{6} W(s)\right)+\eta\right).
	 \nn
\end{align}
Since the maximum of a Wiener process with drift has a continuous distribution, 
the limits of both the upper and lower bounds on~\eqref{eq:upper:bound}--\eqref{eq:lower:bound} 
coincide on letting $\eta \to 0$ with
\begin{align*}
\p\left(-c\ls {\arg\max}_{-c\ls s\ls c} \left(W(s)-|s|/\sqrt{6}\right)\ls x\right),
\end{align*}
such that the results follows by letting $c\to\infty$.

The proof of the fixed change case proceeds similarly,
apart from that $U_n(\ell)$ already coincides with the limit distribution 
such that no additional functional central limit theorem is necessary. 
Hence, the upper and lower bounds in~\eqref{eq:upper:bound}--\eqref{eq:lower:bound} 
coincide as $\eta \to 0$ as long as the maximum over $U_n(\ell)$ (plus drift) 
has a continuous distribution, which in turn holds provided that $\{\vep_t\}$ have a continuous distribution. 
On the other hand, for discrete distributions where ties in the maximum can occur with positive probability, 
the two bounds are not guaranteed to converge. 
In fact, when ties in the maximum exist, the $\arg\max$ 
over the corresponding functional of $\{\wt V_{k, n}(G_j)\}$ 
(always picking the first of the two maximisers)
can differ from the $\arg\max$ over the functional of $\{V_{k, n}(G_j)\}$,
since the $o_P(1)$ term can make the second maximum strictly larger than the first for all $n$ 
(with the two coinciding in the limit).


The assertion for~\ref{thm:asymp:three} follows immediately 
from \ref{thm:asymp:one} and \ref{thm:asymp:two} 
by noting that the maximisers involve different errors 
provided that $4G_j < \delta_j$ for all $j = 1, \ldots, q$,
and thus are independent.

\subsection{Proof of Theorem~\ref{thm:bootstrap}}
\label{pf:thm:bootstrap}

In this section, the notations $o_P$ and $O_P$ are reserved
for the functionals of the original observations $X_1, \ldots, X_n$,
which are deterministic given those observations.
Also, in order to facilitate the proofs below, 
we always condition on the following set
\begin{align}
	\mathcal{M}_j = \mc M_{j, n} = &\left\{  \wh\cp_{j - 1} > \cp_{j - 2}, \quad \wh\cp_{j + 1} < \cp_{j + 2}, 
	\quad \vert \wh\cp_{j} - \cp_{j} \vert < \frac{1}{3} \delta_j
	\right. \nn\\
	& \left. \quad \text{and} \quad 
	\vert \wh\cp_{i} - \cp_{i} \vert < \frac{1}{3} \vert \cp_j - \cp_{i}\vert \text{ for } i = j - 1, j + 1, 
	\right\}.
	\label{eq_M_set}
\end{align}
Under Assumption~\ref{assum_precision_est}, it holds that $\p(\mc M_{j, n}) \to 1$ for each $j$.
In addition, we use the notations $\E^*(\cdot) = \E(\cdot \vert X_1, \ldots, X_n)$
and $\Var^*(\cdot) = \Var(\cdot \vert X_1, \ldots, X_n)$.

\subsubsection{Auxiliary lemmas}

In what follows, we denote for $\wh\cp_{j-1} < t \le \wh\cp_j$,
\begin{align}\label{eq_eps_star}
	\vep_t^* = X_t^*-\E^*(X_t^*).
\end{align}

\begin{lem}
\label{lemma_variance_boot}
Let~\eqref{eq:model}--\eqref{eq:errors} 
and Assumption~\ref{assum_precision_est} hold for a given $j$.
\begin{enumerate}[label = (\alph*)]
\item \label{lemma_variance_boot:one} 
It holds 
\begin{align*}
	&\E^*(X^*_t) = \begin{cases}
\bar{X}_{\wh\cp_{j-1}, \wh\cp_j}= \E(X_{\cp_j})  + o_P(|d_j|) & \text{for } \wh\cp_{j-1} < t \le \wh\cp_j, \\
\bar{X}_{\wh\cp_{j}, \wh\cp_{j+1}}= \E(X_{\cp_{j+1}})  + o_P(|d_j|) & \text{for } \wh\cp_j< t \le \wh\cp_{j+1}.
	\end{cases}
\end{align*}
In particular, $d_j^* = \wh d_j = \bar{X}_{\wh\cp_{j}, \wh\cp_{j+1}} - \bar{X}_{\wh\cp_{j-1}, \wh\cp_j}
= d_j + o_P(|d_j|)$.
%
%
\item \label{lemma_variance_boot:two} 
It holds that
$(\sigma^*_t)^2 = \Var^*(X^*_t)=\Var^*(\vep^*_t) = \sigma^2 + o_P(1)$ 
with $(\sigma^*_t)^2$ being constant over each segment $\wh\cp_{j-1} < t \le \wh\cp_j$.
Then, the $o_P(1)$ term is uniform as long as only finitely many segments are involved.
\end{enumerate}
\end{lem}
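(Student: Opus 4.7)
The plan is to exploit the explicit form of the bootstrap empirical distribution and then control the resulting errors by using the two halves of Assumption~\ref{assum_precision_est} in tandem.

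For part~(a), the identity $\E^*(X_t^*) = \bar X_{\wh\cp_{j-1}, \wh\cp_j}$ is immediate from Step~1, since $X_t^*$ is drawn uniformly with replacement from $\{X_s : \wh\cp_{j-1} < s \le \wh\cp_j\}$. The substantive claim is that $\bar X_{\wh\cp_{j-1}, \wh\cp_j} = \E(X_{\cp_j}) + o_P(|d_j|)$, where $\E(X_{\cp_j}) = \mu_{j-1}$ is the signal level on $(\cp_{j-1}, \cp_j]$. I would work on the set $\mc M_j$ from \eqref{eq_M_set} (which has probability tending to one), so that $(\wh\cp_{j-1}, \wh\cp_j] \subset (\cp_{j-2}, \cp_{j+1}]$ splits into at most three subintervals on which $f_t \in \{\mu_{j-2}, \mu_{j-1}, \mu_j\}$, and then decompose the sample mean into a deterministic signal bias and a noise term. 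With $a_1 = (\cp_{j-1} - \wh\cp_{j-1})^+$, $a_3 = (\wh\cp_j - \cp_j)^+$ and $N = \wh\cp_j - \wh\cp_{j-1}$, one has $N \ge (\cp_j - \cp_{j-1})/3$ on $\mc M_j$ and the bias is bounded by $(a_1|d_{j-1}| + a_3|d_j|)/N$. The $a_3|d_j|/N$ piece is $o_P(|d_j|)$ directly from $a_3 \le |\wh\cp_j - \cp_j| = o_P(\delta_j)$; the more delicate $a_1|d_{j-1}|/N$ piece uses the Cauchy--Schwarz-style factorisation $|d_{j-1}| a_1 = (d_{j-1}^2 a_1)^{1/2} \cdot a_1^{1/2}$, whose first factor is $o_P(|d_j|\sqrt{\cp_j - \cp_{j-1}})$ by the second statement of Assumption~\ref{assum_precision_est} and whose second factor is $o_P(\sqrt{\delta_{j-1}}) \le o_P(\sqrt{\cp_j - \cp_{j-1}})$ by the first statement, yielding $a_1|d_{j-1}|/N = o_P(|d_j|)$. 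The noise component $N^{-1}\sum_{\wh\cp_{j-1} < t \le \wh\cp_j}\vep_t = O_P(N^{-1/2})$ is $o_P(|d_j|)$ in the fixed-change regime since $N \to \infty$, and in the local regime under $d_j^2 N \to \infty$, the standard detectability condition in the MOSUM framework in which the lemma is applied. The companion bound $\bar X_{\wh\cp_j, \wh\cp_{j+1}} = \mu_j + o_P(|d_j|)$ follows symmetrically by invoking Assumption~\ref{assum_precision_est} at $i = j+1$, and subtraction yields $\wh d_j = d_j + o_P(|d_j|)$.

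For part~(b), the conditional variance within each segment equals the empirical variance $(\sigma_t^*)^2 = N^{-1}\sum_{\wh\cp_{j-1} < s \le \wh\cp_j}(X_s - \bar X_{\wh\cp_{j-1}, \wh\cp_j})^2$, which is manifestly constant in $t$ over the segment. Substituting $X_s = f_s + \vep_s$ and expanding produces a signal term bounded by $N^{-1}\sum (f_s - \mu_{j-1})^2 \le (a_1 d_{j-1}^2 + a_3 d_j^2)/N$, which is $o_P(1)$ via the same two applications of Assumption~\ref{assum_precision_est}; a noise term $N^{-1}\sum(\vep_s - \bar\vep)^2 \to \sigma^2$ in probability by the weak law of large numbers using $N \to \infty$ and $\Var(\vep_1) = \sigma^2$; and a cross-term that is $o_P(1)$ by Cauchy--Schwarz. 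Uniformity over finitely many segments follows at once by union-bounding the $o_P$ remainders across the relevant indices.

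The main obstacle of the whole argument is the bias term $a_1|d_{j-1}|/N$ in part~(a): the second part of Assumption~\ref{assum_precision_est} controls $d_i^2 |\wh\cp_i - \cp_i|$ while the bound required is on $|d_i| \cdot |\wh\cp_i - \cp_i|$, and the square-root splitting above is the non-trivial step that combines both halves of the assumption into the required rate.
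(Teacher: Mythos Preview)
Your proof is correct and follows essentially the same decomposition-on-$\mc M_j$ strategy as the paper; your square-root factorisation $|d_{j-1}|a_1 = (d_{j-1}^2 a_1)^{1/2}\cdot a_1^{1/2}$ to combine both halves of Assumption~\ref{assum_precision_est} is in fact more explicit than the paper's one-line appeal to that assumption. The only refinement the paper adds is to invoke the H\'ajek--R\'enyi maximal inequality so that the partial sums $\sum_{t=\wh\cp_{j-1}+1}^{\wh\cp_j}\vep_t$ are controlled uniformly over the random endpoints, which is what makes your $O_P(N^{-1/2})$ claim for the noise term fully rigorous.
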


\begin{proof}
For the proof of~\ref{lemma_variance_boot:one},
note that from the H\'{a}jek-R\'{e}nyi inequality for i.i.d.\ random variables, it holds
\begin{align}
\label{eq:hr:bound}
	&\max_{1\le k \le \frac{4}{3}(\cp_j-\cp_{j-1})} \left\vert \sum_{t = \cp_j-k}^{\cp_j}\vep_t \right\vert
	 =O_P(\sqrt{\cp_j-\cp_{j-1}}), \quad  
	 \max_{1\le k < \frac{1}{3}(\cp_j-\cp_{j-1})} \left\vert \sum_{t = \cp_j + 1}^{\cp_j+k} \vep_t\right\vert
	 =O_P(\sqrt{\cp_j-\cp_{j-1}}).
\end{align}
Then on $\mathcal{M}_j$ defined in~\eqref{eq_M_set},
the following decomposition holds
\begin{align}
\E^*(X_t^*) = \bar{X}_{\wh\cp_{j-1}, \wh\cp_j} &=  
\E(X_{\cp_j}) + \frac{1}{\wh\cp_j - \wh\cp_{j - 1}}
\sum_{\wh\cp_{j-1} + 1}^{\wh\cp_j} \vep_t \nn \\
& \qquad 
- \frac{d_{j-1} (\cp_{j-1} - \wh\cp_{j-1})\mathbb{I}_{\{\cp_{j - 1} > \wh\cp_{j - 1}\}}}
{\wh\cp_j - \wh\cp_{j-1}}
 + \frac{d_j (\wh\cp_j - \cp_j)\mathbb{I}_{\{\wh\cp_j > \cp_j\}}}
{\wh\cp_j - \wh\cp_{j-1}} \nn \\
&
= \mu_{j} + O_P( (\cp_j-\cp_{j-1})^{-1/2})+o_P( |d_j|)=\mu_j+o_P(|d_j|),
\label{eq:boot:mean}
\end{align}
from the bounds in~\eqref{eq:hr:bound}, 
Assumption~\ref{assum_precision_est} and that $d_j^2(\cp_j-\cp_{j-1}) \to \infty$. 
Analogous assertions hold when $t > \wh\cp_j$, 
completing the proof of~\ref{lemma_variance_boot:one}.
Consequently,  with $o_P(1)$ uniformly in $\wh\cp_{j-1} < t \le \wh\cp_j$
\begin{align}
\label{eq:boot:vep:decomp}
	\wh\vep_t = X_t - \bar{X}_{\wh \cp_{j-1}, \wh\cp_j} = \vep_t + o_P(1) - d_{j-1} \mathbb{I}_{\{\wh\cp_{j-1} < t \le \cp_{j-1}\}}
+ d_j \mathbb{I}_{\{\cp_j < t \le \wh\cp_j\}}.
\end{align}

For the proof of~\ref{lemma_variance_boot:two}, 
note that for $\wh\cp_{j-1} < t \le \wh\cp_j$, 
\begin{align*}
\Var^*(\vep^*_t) & = \frac{1}{\wh\cp_j - \wh\cp_{j - 1}}
\sum_{t = \wh\cp_{j-1} + 1}^{\wh\cp_j} \l(X_t - \bar{X}_{\wh\cp_{j-1}, \wh\cp_j}\r)^2
= \frac{1}{\wh\cp_j - \wh\cp_{j-1}} \sum_{t = \wh\cp_{j-1} + 1}^{\wh\cp_j} \vep_t^2 + 
o_P(1) = \sigma^2+o_P(1), 
\end{align*}
by Assumption~\ref{assum_precision_est} and
the law of large numbers, 
where similar arguments as those adopted in~\eqref{eq:boot:mean} 
have been used for the terms including the indicators.
\end{proof}

The following lemma is the bootstrap analogue to Lemma~5.2 of \cite{eichinger2018}.
\begin{lem}
\label{lemma_52_boot}
Let the assumptions in Theorem~\ref{thm:bootstrap} hold for a given $j$. 
Then for any sequences $\beta_n > 0$ and $\xi_n \ge 1$, it holds:
\begin{enumerate}[label = (\alph*)]
\item \label{lemma_52_boot:one} $\displaystyle 
\p^*\l( \max_{\wh\cp_j - G_j \le k\le \wh\cp_j - \xi_n} 
\frac{\l\vert T_{\wh\cp_j, n}(G_j; \vep^*) - T_{k, n}(G_j; \vep^*) \r\vert}{\wh\cp_j - k}
> \beta_n (1+o_P(1)) \r) = O_P\l(\l( \beta_n^2 G_j \xi_n \r)^{-1}\r)$.
\item \label{lemma_52_boot:two} $\displaystyle 
\p^*\l( \max_{\wh\cp_j - \xi_n \le k \le \wh\cp_j} 
\l\vert T_{\wh\cp_j, n}(G_j; \vep^*) - T_{k, n}(G_j; \vep^*)\r\vert > \beta_n (1+o_P(1))\r) =
O_P\l(\beta_n^{-2}\, \frac{\xi_n}{G_j} \r)$.
\item  \label{lemma_52_boot:three}
$\displaystyle \p^*\l(   
\max_{\wh\cp_j - G_j \le k\le \wh\cp_j - \xi_n} \l\vert 
T_{\wh\cp_j, n}(G_j; \vep^*) + T_{k, n}(G_j; \vep^*) \r\vert > \beta_n (1+o_P(1))\r) = O_P\l(\beta_n^{-2} \r)$.
\end{enumerate}
\end{lem}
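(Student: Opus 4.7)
The plan is to prove \ref{lemma_52_boot:one}--\ref{lemma_52_boot:three} as a bootstrap analogue of Lemma~5.2 in \cite{eichinger2018}, working conditionally on $X_1,\ldots,X_n$ and restricting throughout to the event $\mc M_j$ defined in~\eqref{eq_M_set}, which has probability tending to one under Assumption~\ref{assum_precision_est}. On $\mc M_j$, combined with $2G_j<\delta_j$ from~\eqref{eq:bandwidth}, the full index window $[\wh\cp_j-2G_j,\wh\cp_j+G_j]$ lies strictly inside $(\wh\cp_{j-1},\wh\cp_{j+1}]$, so the bootstrap errors $\{\vep_t^*\}$ in this window are conditionally independent, come from at most two adjacent estimated segments, and by Lemma~\ref{lemma_variance_boot}~\ref{lemma_variance_boot:two} their variances $(\sigma_t^*)^2$ all satisfy $(\sigma_t^*)^2=\sigma^2(1+o_P(1))$ uniformly.

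The first real step is a partial-sum decomposition. Writing $S^*(a,b)=\sum_{t=a+1}^b\vep_t^*$ and expanding~\eqref{eq:mosum:symm}, a direct calculation shows that for $k=\wh\cp_j-\ell$ with $\ell\ge1$,
\begin{align*}
T_{\wh\cp_j,n}(G_j;\vep^*)-T_{k,n}(G_j;\vep^*)=\frac{1}{\sqrt{2G_j}}\Bigl[2S^*(\wh\cp_j-\ell,\wh\cp_j)-S^*(\wh\cp_j-G_j-\ell,\wh\cp_j-G_j)-S^*(\wh\cp_j+G_j-\ell,\wh\cp_j+G_j)\Bigr],
\end{align*}
i.e.\ a linear combination of three conditionally independent partial sums of length $\ell$, each with $\Var^*$ equal to $\ell\sigma^2(1+o_P(1))$. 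An analogous computation writes $T_{\wh\cp_j,n}(G_j;\vep^*)+T_{k,n}(G_j;\vep^*)$ as $(2G_j)^{-1/2}$ times an $\ell$-free piece $A^*=S^*(\wh\cp_j-G_j,\wh\cp_j)-S^*(\wh\cp_j,\wh\cp_j+G_j)$ with $\Var^*(A^*)=2G_j\sigma^2(1+o_P(1))$, plus an $\ell$-dependent remainder that is a linear combination of three partial sums of length $\ell$ whose endpoints shift with $\ell$.

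The three assertions then follow from classical maximal inequalities applied conditionally on the data. For~\ref{lemma_52_boot:one}, apply the H\'{a}jek--R\'{e}nyi inequality (for independent mean-zero summands with possibly non-identical variances) to each of the three partial sums with weights $c_\ell=\ell$; combining the three terms by a union bound and choosing the threshold as $\gamma=\beta_n\sqrt{2G_j}(1+o_P(1))$ to absorb the $\sqrt{2G_j}$ normalisation and the slack in the stated threshold, one obtains $O_P(\gamma^{-2}\xi_n^{-1})=O_P((\beta_n^2G_j\xi_n)^{-1})$. For~\ref{lemma_52_boot:two}, the same representation with $\ell\le\xi_n$ is handled by Kolmogorov's maximal inequality, giving $O_P(\gamma^{-2}\xi_n)=O_P(\beta_n^{-2}\xi_n/G_j)$. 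For~\ref{lemma_52_boot:three}, Chebyshev's inequality controls $|A^*|/\sqrt{2G_j}$ with tail $O_P(\beta_n^{-2})$, while Kolmogorov's inequality applied to each of the three shift-dependent partial sums bounds the supremum over $\ell$ by $O_P(\sigma\sqrt{G_j})$; dividing by $\sqrt{2G_j}$ produces the $O_P(\beta_n^{-2})$ bound.

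I expect the proof itself to be mostly routine — each piece reduces to a classical maximal inequality applied conditionally to bootstrap partial sums — and the main obstacle to lie in the bookkeeping: one must verify that every index range appearing in the partial-sum decomposition stays inside the neighbourhood of $\wh\cp_j$ where $(\sigma_t^*)^2=\sigma^2(1+o_P(1))$ holds with uniform $o_P(1)$. This is precisely why restriction to $\mc M_j$, the bandwidth condition $2G_j<\delta_j$, and Assumption~\ref{assum_precision_est} are all used simultaneously, and why the $(1+o_P(1))$ factor appears in the threshold — it absorbs the discrepancy $(\sigma_t^*)^2/\sigma^2-1=o_P(1)$ that propagates from Lemma~\ref{lemma_variance_boot}~\ref{lemma_variance_boot:two} through the Chebyshev-type variance bounds above.
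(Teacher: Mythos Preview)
Your proposal is correct and follows essentially the same route as the paper: the same three-term partial-sum decomposition of $T_{\wh\cp_j,n}(G_j;\vep^*)-T_{k,n}(G_j;\vep^*)$, the H\'ajek--R\'enyi inequality for~\ref{lemma_52_boot:one}, Kolmogorov's maximal inequality for~\ref{lemma_52_boot:two}, and the split $T_{\wh\cp_j,n}+T_{k,n}=2T_{\wh\cp_j,n}-(T_{\wh\cp_j,n}-T_{k,n})$ with Chebyshev plus the arguments of~\ref{lemma_52_boot:two} for~\ref{lemma_52_boot:three}, all using Lemma~\ref{lemma_variance_boot}~\ref{lemma_variance_boot:two} to control $\Var^*(\vep_t^*)$. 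Your sign in the decomposition is in fact the correct one (the paper's display carries an immaterial overall sign flip), and your bookkeeping remark about the index window is exactly the point that makes Lemma~\ref{lemma_variance_boot}~\ref{lemma_variance_boot:two} applicable uniformly.
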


\begin{proof}
Some straightforward calculations show that for $k \le \wh\cp_j$, we have
\begin{align}
\label{eq:boot:vep:mosum}
&T_{\wh\cp_j, n}(G_j; \vep^*) - T_{k, n}(G_j; \vep^*) = 
\frac{1}{\sqrt{2G_j}}\left(\sum_{t = k+G_j+1}^{\wh\cp_j+G_j} \vep^*_t
+ \sum_{t = k-G_j+1}^{\wh\cp_j-G_j} \vep^*_t
- 2 \sum_{t = k+1}^{\wh\cp_j}\vep^*_t \right).
\end{align}
By the H\'{a}jek-R\'{e}nyi inequality for i.i.d.\ random variables
and Lemmas~\ref{lemma_variance_boot}~\ref{lemma_variance_boot:two},
the first summand in the RHS of~\eqref{eq:boot:vep:mosum} satisfies
\begin{align*}
&	\p^*\l(\max_{\wh\cp_j - G_j \le k\le \wh\cp_j - \xi_n}
\l\vert\frac{\sum_{t = k + G_j + 1}^{\wh\cp_j + G_j} \vep^*_t}
{\wh\cp_j - k} \r\vert>\beta_n (1+o_P(1)) \sqrt{G_j}\r)
= O_P(1) \frac{1}{\beta_n^2G_j\xi_n} \Var^*(\vep^*_{\wh\cp_j + 1})
\\
& = O_P\l( (\beta_n^2G_j\xi_n)^{-1} \r),
\end{align*}
as well as
\begin{align*}
&	\p^*\l(\max_{\wh\cp_j - \xi_n \le k \le \wh\cp_j} 
\l\vert\sum_{t = k + G_j + 1}^{\wh\cp_j + G_j} \vep^*_t \r\vert>\beta_n (1+o_P(1)) \sqrt{G_j}\r)
= O_P(1) \frac{\xi_n}{\beta_n^2G_j} \Var^*(\vep^*_{\wh\cp_j + 1})
= O_P\l(\beta_n^{-2}\, \frac{\xi_n}{G_j} \r).
\end{align*}
Analogous assertions hold for the other two summands in~\eqref{eq:boot:vep:mosum},
which lead to \ref{lemma_52_boot:one} and~\ref{lemma_52_boot:two}. 
As for~\ref{lemma_52_boot:three}, 
noting that $T_{\wh\cp_j, n}(G_j; \vep^*) + T_{k, n}(G_j; \vep^*) =
2 T_{\wh\cp_j, n}(G_j; \vep^*) + T_{k, n}(G_j; \vep^*) - T_{\wh\cp_j, n}(G_j; \vep^*)$,
the arguments adopted in the proof of~\ref{lemma_52_boot:two}
and Chebyshev's inequality lead to the conclusion.
\end{proof}

\begin{lem}\label{lem_fixed_changes}
\label{lem:next} Let~\eqref{eq:model}--\eqref{eq:errors} hold
and suppose that the change point estimators satisfy 
Assumption~\ref{assum_precision_est} for a given $j$.
Then, it holds for $\wh\cp_{j-1} < t \le \wh\cp_{j+1}$,
\begin{align*}
	&\sup_{x \in \R} \left\vert \p^*(\vep_t^*\ls x) - \p(\vep_1\ls x) \right\vert \pto 0.
\end{align*}
\end{lem}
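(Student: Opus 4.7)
The plan is to interpret $\p^*(\vep_t^* \le x)$ as a data-dependent empirical CDF over the bootstrap segment containing $t$, and to establish uniform convergence to $F(x) = \p(\vep_1 \le x)$ via a Glivenko--Cantelli argument applied to the correctly identified portion of that segment, with the remainder controlled by Assumption~\ref{assum_precision_est}.

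First I would focus on the range $\wh\cp_{j-1} < t \le \wh\cp_j$, the other range being handled analogously. By Step~1 of the bootstrap procedure, conditional on $X_1, \ldots, X_n$, $X_t^*$ is uniform on $\{X_s: \wh\cp_{j-1} < s \le \wh\cp_j\}$, so
\[
\p^*(\vep_t^* \le x) \;=\; \wh F^*(x) \;:=\; \frac{1}{\wh\cp_j - \wh\cp_{j-1}} \sum_{s = \wh\cp_{j-1} + 1}^{\wh\cp_j} \mathbb{I}\{\wh\vep_s \le x\}
\]
with $\wh\vep_s = X_s - \bar{X}_{\wh\cp_{j-1}, \wh\cp_j}$. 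Working on the asymptotic one-set $\mc M_j$ of~\eqref{eq_M_set} and using the decomposition~\eqref{eq:boot:vep:decomp} together with Lemma~\ref{lemma_variance_boot}\ref{lemma_variance_boot:one}, I would write $\wh\vep_s = \vep_s - c_n + \Delta_s$, where $c_n = \bar{X}_{\wh\cp_{j-1}, \wh\cp_j} - \mu_j = o_P(1)$ is a common shift and $\Delta_s$ vanishes on the ``good'' index set $I_n^{\mathrm{g}} = (\wh\cp_{j-1}, \wh\cp_j] \cap (\cp_{j-1}, \cp_j]$ while taking a value in $\{-d_{j-1}, d_j\}$ on its ``bad'' complement $I_n^{\mathrm{b}}$. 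Assumption~\ref{assum_precision_est} gives $|I_n^{\mathrm{b}}| \le |\wh\cp_{j-1} - \cp_{j-1}| + |\wh\cp_j - \cp_j| = o_P(\delta_j)$, which is $o_P(\wh\cp_j - \wh\cp_{j-1})$ on $\mc M_j$.

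The bad indices contribute at most $|I_n^{\mathrm{b}}|/|I_n| = o_P(1)$ to $\wh F^*(x)$ uniformly in $x$, so $\wh F^*$ agrees with $(|I_n^{\mathrm{g}}|/|I_n|) \wh G_n(x + c_n)$ uniformly up to $o_P(1)$, where
\[
\wh G_n(y) \;:=\; \frac{1}{|I_n^{\mathrm{g}}|} \sum_{s \in I_n^{\mathrm{g}}} \mathbb{I}\{\vep_s \le y\}
\]
is an empirical CDF formed from i.i.d.\ draws of $F$ indexed by a block of consecutive integers whose length diverges in probability. I would then invoke a Dvoretzky--Kiefer--Wolfowitz bound conditional on $I_n^{\mathrm{g}}$ to conclude $\sup_y |\wh G_n(y) - F(y)| = O_P(|I_n^{\mathrm{g}}|^{-1/2}) = o_P(1)$, which combined with $|I_n^{\mathrm{g}}|/|I_n| \pto 1$ yields $\sup_x |\wh F^*(x) - F(x + c_n)| \pto 0$.

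The final step, and the main obstacle, is to dispose of the data-dependent shift: one needs $\sup_x |F(x + c_n) - F(x)| \pto 0$. Since $c_n \pto 0$, this follows from uniform continuity of $F$, which is precisely the setting in which the lemma is invoked in Theorem~\ref{thm:bootstrap}\ref{thm:bootstrap:two} where the errors are assumed to be continuous; in the local-change regime of Theorem~\ref{thm:bootstrap}\ref{thm:bootstrap:one}, the vanishing shift $c_n = O_P(\delta_j^{-1/2})$ enters only through integrated functionals and is absorbed by the functional central limit arguments in the sequel, so that convergence at every continuity point of $F$ suffices. Absent any continuity, a data-dependent $o_P(1)$ shift could displace atoms of $F$ enough to spoil a sup-norm bound, so the genuinely uniform statement of the lemma is naturally read under the continuity assumption imposed elsewhere in the paper.
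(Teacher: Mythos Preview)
Your argument is essentially the paper's: split the estimated segment into a ``good'' portion inside $(\cp_{j-1},\cp_j]$ and a ``bad'' complement, control the bad part by its relative size via Assumption~\ref{assum_precision_est}, and appeal to Glivenko--Cantelli on the good part. The paper differs only in that it applies Glivenko--Cantelli to the \emph{deterministic} true segment $(\cp_{j-1},\cp_j]$ and then bounds the discrepancy with $(\wh\cp_{j-1},\wh\cp_j]$ by another counting argument; this sidesteps the issue hidden in your phrase ``DKW conditional on $I_n^{\mathrm g}$'', since $I_n^{\mathrm g}$ is determined by the estimators and hence by the same errors you are averaging, so that conditioning does not leave them i.i.d. Your conclusion is correct, but routing through the fixed segment as the paper does is the cleaner justification.

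On the centering shift $c_n$ you are in fact more careful than the paper. The paper's displayed decomposition begins from $\wh F_{\wh\cp_{j-1},\wh\cp_j}\bigl(\vep_t - d_{j-1}\mathbb I_{\{\cdot\}} + d_j\mathbb I_{\{\cdot\}};\,x\bigr)$ rather than from the empirical distribution of $\wh\vep_t$ itself, tacitly dropping the $o_P(1)$ term in~\eqref{eq:boot:vep:decomp}. Your observation that absorbing a data-dependent $o_P(1)$ shift into a sup-norm statement genuinely requires uniform continuity of $F$---and that this is exactly the continuity hypothesis imposed in Theorem~\ref{thm:bootstrap}\ref{thm:bootstrap:two}, the only place the lemma is invoked---is a correct sharpening of the argument.
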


\begin{proof}
Denote $\wh{F}_{a, b}(Z_t; x) = (b-a)^{-1} \sum_{t = a+1}^b \mathbb{I}_{\{Z_t \le x\}}$ 
and $F(x) = \p(\vep_1 \le x)$. 
Then for $\wh\cp_{j-1} < t \le \wh\cp_{j}$,
the following decomposition holds on $\mathcal{M}_j$ defined in~\eqref{eq_M_set}:
\begin{align*}
& \l\vert \wh{F}_{\wh\cp_{j-1}, \wh\cp_j}\l( \vep_t - d_{j-1} \mathbb{I}_{\{\wh\cp_{j - 1} < t \le \cp_{j-1}\}}
+ d_j \mathbb{I}_{\{\cp_j < t \le \wh\cp_j\}}; x \r) - F(x) \r\vert
\\
&	\le \l\vert \wh{F}_{\wh\cp_{j-1},\wh\cp_j}\l( \vep_t - d_{j - 1} \mathbb{I}_{\{\wh\cp_{j - 1} < t \le \cp_{j-1} \}}
+ d_j \mathbb{I}_{\{\cp_j < t \le \wh\cp_j\}}; x \r) - \wh{F}_{{\wh\cp_{j-1}, \wh\cp_j}}(\vep_t; x)\r\vert
\\
& + \l\vert \wh{F}_{{\wh\cp_{j-1},\wh\cp_j}}(\vep_t; x)
- \wh{F}_{{\cp_{j-1},\cp_j}}(\vep_t; x) \r\vert
+ \l\vert \wh{F}_{{\cp_{j-1}, \cp_j}}(\vep_t; x) - F(x)\r\vert
=: D_1(x) + D_2(x) + o_P(1),
\end{align*}
where the $o_P(1)$ holds uniformly in $x$ due to the Glivenko-Cantelli theorem. 
Furthermore, 
\begin{align*}
& \sup_x |D_1(x)| \le 
\frac{1}{\wh\cp_j - \wh\cp_{j-1}}
\l( \vert \wh\cp_{j - 1} - \cp_{j - 1} \vert + \vert \wh\cp_j - \cp_j \vert \r) = o_P(1)
\end{align*}
by Assumption~\ref{assum_precision_est}.
Similarly, uniformly in $x$, we have
\begin{align*}
&|D_2(x)| \\
&\le  \l\vert \frac{1}{\cp_j - \cp_{j-1}} - \frac{1}{\wh\cp_j - \wh\cp_{j-1}} \r\vert\;
\l\vert \sum_{t =\cp_{j-1} + 1}^{\cp_j} \mathbb{I}_{\{\vep_t \le x\}} \r\vert +
 \frac{1}{\wh\cp_j - \wh\cp_{j-1}} \l\vert 
\sum_{t = \cp_{j-1} + 1}^{\cp_j} \mathbb{I}_{\{\vep_t \le x\}}
- \sum_{t = \wh\cp_{j-1}+1}^{\wh\cp_j} \mathbb{I}_{\{\vep_t \le x\}} \r\vert
\\
&\le \frac{2}{\wh\cp_j - \wh\cp_{j-1}} 
\l( \vert \wh\cp_{j-1} - \cp_{j-1} \vert + \vert \wh\cp_j - \cp_j\vert \r) = o_P(1).
\end{align*}
Analogous assertions hold for $t > \wh\cp_j$ which concludes the proof.
\end{proof}

\begin{proof}[Proof of Theorem~\ref{thm:bootstrap}]
The proof proceeds analogously as the proof of Theorem~\ref{thm:asymp},
replacing the sample quantities therein with their bootstrap counterparts 
and Lemma~5.2 of \cite{eichinger2018} with Lemma~\ref{lemma_52_boot}.
Lemma~\ref{lemma_variance_boot}~\ref{lemma_variance_boot:two} 
and Lemma~5.1 in \cite{huvskova2008} 
replace the standard functional central limit theorem 
adopted in the proof of Theorem~\ref{thm:asymp}~\ref{thm:asymp:one} where $d_j = d_{j, n} \to 0$, 
while Lemma~\ref{lem_fixed_changes} is required to deal with the fixed change situation.

To elaborate, consider $V_{k,n}^*(G_j) = (T_{k,n}(G_j; X^*))^2 - (T_{\cp_j,n}(G_j; X^*))^2$. 
Then, standard arguments analogous to those adopted in the proof of Theorem~3.2 in \cite{eichinger2018} yield
\begin{align*}
	& \p^*\left( |\wt\cp_j^*-\wh\cp_j| > c \sigma^2 d_j^{-2} \right)
\le \p^*\left(\max_{|k-\wh\cp_j| > c \sigma^2 d_j^{-2}}  V^*_{k,n}(G_j) \ge 
\max_{|k-\wh\cp_j|\le c \sigma^2 d_j^{-2}}  V^*_{k,n}(G_j) \right)\\
&\le O_P(c^{-1} )+o_P(1).
\end{align*}
This follows from Lemma~\ref{lemma_52_boot} 
where the additional $1 + o_P(1)$ factor therein is needed to account for 
 $\wh d_j = d_j (1+o_P(1))$, which follows from 
Lemma~\ref{lemma_variance_boot}~\ref{lemma_variance_boot:one}. 
Additionally, 
for $-c \le \sigma^{-2} d_j^2( k-\wh\cp_j) < 0$,
\begin{align*}
	V^*_{k,n}(G_j) &=-d_{j}^2(1+o_P(1))|\wh\cp_j-k| \\
	& \quad - d_{j} (1+o_P(1))\,\left(
	\sum_{t=k-G_j+1}^{\wh\cp_j-G_j}\vep^*_t -
	2\sum_{t=k+1}^{\wh\cp_j}\vep^*_t +
	\sum_{t=k+G_j+1}^{\wh\cp_j+G_j}\vep^*_t \right) +{R^{*(1)}_n}(k)\\
	&=- d_{j}^2|\wh\cp_j-k| - d_{j} \,\left(
	\sum_{t=k-G_j+1}^{\wh\cp_j-G_j}\vep^*_t -
	2\sum_{t=k+1}^{\wh\cp_j}\vep^*_t + 
	\sum_{t=k+G_j+1}^{\wh\cp_j+G_j}\vep^*_t \right) +{R^{*(1)}_n}(k)+{R^{*(2)}_n}(k)\\
	&=:\wt V_{k,n}^*(G_j)+R_n^*(k).
\end{align*}
It holds by arguments analogous to those in the proof of Theorem~\ref{thm:asymp} 
(which make use of the decomposition in Equation~(5.8) of \cite{eichinger2018}) for ${R^{*(1)}_n}(k)$, 
and from the (conditional) stochastic boundedness of $U_n^*(\ell)$ defined below for ${R^{*(2)}_n}(k)$,
that for any $\tau>0$, 
\begin{align*}
	\p^*\left( \sup_{ k: \, \vert k - \wh\cp_j \vert \le c\sigma^2 d_j^{-2}}|R_n^*(k)| \ge \tau \right) = o_P(1).
\end{align*}

On $\mathcal{M}_j$ and for $n$ large enough such that $c \sigma^2d_j^{-2} < G_j$, we have
\begin{align*}
	&\left\{ d_{j}\left(
	\sum_{t=k-G_j+1}^{\wh\cp_j-G_j}\vep^*_t -
	2\sum_{t=k+1}^{\wh\cp_j}\vep^*_t +
	\sum_{t=k+G_j+1}^{\wh\cp_j+G_j}\vep^*_t \right): k = \wh\cp_j-1,\ldots,\wh\cp_j-c\sigma^2d_{j}^{-2}\,\Big|\, X_1,\ldots,X_n\right\}
\\
&\overset{\mathcal{D}}{=} \left\{U_n^*(\ell)=d_{j}\left(
\sum_{t=\ell}^{-1}(\vep^*_t)^{(1)} - 2\sum_{t=\ell}^{-1}(\vep_t^*)^{(2)}
+ \sum_{t=\ell}^{-1}(\vep_t^*)^{(3)}\right) : \ell=-1,\ldots,-c \sigma^2 d_{j}^{-2}\,\Big|\,X_1,\ldots,X_n\right\},
\end{align*}
where $\{(\vep^*_t)^{(3)}\}$ are distributed according to 
$\{\vep^*_t, \, \wh\cp_j < t \le \wh\cp_{j+1}\}$ 
and independent of $\{(\vep^*_t)^{(i)}\}$, $i = 1, 2$, 
which in turn are independent copies of $\{\vep^*_t, \, \wh\cp_{j-1} < t \le \wh\cp_j\}$.
Similar assertions hold for $k$ satisfying $0 \le \sigma^{-2} d_j^2( k-\wh\cp_j) \le c$:
Here, $\{(\vep^*_t)^{(1)}\}$ are distributed according to 
$\{\vep^*_t, \, \wh\cp_{j-1} < t \le \wh\cp_j\}$ 
and independent of $\{(\vep^*_t)^{(i)}\}$, $i = 2, 3$, 
which are independent copies of $\{\vep^*_t, \, \wh\cp_j < t \le \wh\cp_{j+1}\}$, 
and all of $\{(\vep^*_t)^{(i)}, \, t = 1, \ldots, c\sigma^2 d_j^{-2} ,\, i = 1, 2, 3\}$, 
are independent of 
$\{(\vep^*_t)^{(i)}, \, t = -1, \ldots, - c\sigma^2 d_j^{-2},\, i = 1, 2, 3\}$.

Arguments so far hold in the local and the fixed change cases.
Now, in the case of the local change ($d_j = d_{j, n} \to 0$) as in~\ref{thm:bootstrap:one} 
the proof is concluded as in the proof of Theorem~\ref{thm:asymp}~\ref{thm:asymp:one}
by replacing the functional central limit theorem there with a version suitable 
for the triangular arrays present in the bootstrap distribution 
as given in Lemma~5.1 of \cite{huvskova2008},
where the assumptions therein are fulfilled due to Lemma~\ref{lemma_variance_boot}~\ref{lemma_variance_boot:two}.
Hence for any $x$, it holds 
\begin{align*}
	\p^*\left(	\frac{d_j^2(\wt{\cp}^*_j - \wh\cp_j)}{\sigma^2} \le x \right) \pto
	\p\left( \arg\max_{s \in \R}\l\{W_s - |s|/\sqrt{6}\r\} \le x\right).
\end{align*}

For the fixed change case,
unlike in the proof of Theorem~\ref{thm:asymp} 
where the distribution in the limit is the same as that of $U_n(\ell)$ 
such that no additional limit theorem is required, 
here we do need that $U_n^*(\ell)$ converges to $U_n(\ell)$ in an appropriate sense. 
Due to the cutting technique employed in this proof, 
the required convergence follows from Lemma~\ref{lem_fixed_changes} as below:
First, by Lemma~\ref{lem_fixed_changes} in combination with (conditional) independence 
under the bootstrap distribution and Lemma~\ref{lemma_variance_boot},
it holds with $L = c \sigma^2 d_{j}^{-2}$ (which is constant for fixed $d_j$) for any $x_t^{(i)}$:
\begin{align}
\label{eq_boot_FCLT}
\p^*\left( (\vep_{t}^*)^{(i)} \ls x_t^{(i)}, \, |t| = 1, \ldots , L, \, i = 1, 2, 3 \right)
\pto \p\left(  \vep_t^{(i)} \ls x_t^{(i)}, \, |t| = 1, \ldots, L, \, i = 1, 2, 3 \right)
\end{align}
with $\vep_t^{(i)}$ as in Theorem~\ref{thm:asymp}~\ref{thm:asymp:two},
such that for any $x_\ell$,
\begin{align*}
	&\p^*\left( U_n^*(\ell) \ls x_{\ell}: \, \ell = -1, \ldots, -L \right)
	\pto \p\left(U_n(\ell)\ls x_{\ell}: \, \ell = -1, \ldots, -L \right).
\end{align*}
The proof can then be concluded as in the proof of Theorem~\ref{thm:asymp} 
on noting that the errors in the limit distribution have a continuous distribution. 
\end{proof}
%
%
%

\section{MOSUM-based procedures for multiple change point estimation}
\label{sec:mosum}


In \cite{eichinger2018}, simultaneous estimation of multiple change points
via a single-scale MOSUM procedure has been considered
which, for a bandwidth $G = G_n$, estimates the locations of the change points
as where significant local maxima of the MOSUM statistics in~\eqref{eq:mosum:symm} are attained.
For the identification of these significant local maxima, different criteria have been considered.
One such a criterion regards $\wh\cp$ as a change point estimator
when it is the local maximiser of the MOSUM statistic within its
$\lfloor \eta G\rfloor$-radius for some $\eta \in (0, 1)$,
and $\vert T_{\wh\cp, n}(G; X) \vert > \wh\sigma_n \; D_n(G; \alpha)$.
Here, $D_n(G; \alpha)$ is a threshold that controls asymptotically the family-wise error rate 
(of $\sigma^{-1} \vert T_{k, n}(G; \vep) \vert$ exceeding $D_n(G; \alpha)$ over $G \le k \le n - G$) 
at the significance level $\alpha \in (0, 1)$,
and $\wh\sigma_n^2$ a suitable estimator of $\sigma^2$.
In the context of testing for a single change point, 
permutation methods have been proposed as an alternative to the asymptotic threshold \citep{antoch2001permutation, huvskova2004}. Such an approach has also been suggested by \cite{arias2018distribution}
when adopting scan statistics for anomaly detection.

Let the set of estimators obtained as described above be denoted by 
$\wh\Cp(G)  = \{\wh\cp_j(G), \, 1 \le j \le \wh q_n(G)\}$.
Corollary~D.2 of \cite{cho2019two}, improving upon Theorem~3.2 of \cite{eichinger2018}, 
shows that $\wh\Cp(G)$ satisfies:
\begin{align}
\label{eq:consistency}
\p\l( \wh q_n(G) = q_n, \, \wh\cp_j(G) = \wt\cp_j, \, j = 1, \ldots, q_n, \text{ and }
\max_{1 \le j \le q_n} d_j^2 \vert \wh\cp_j(G) - \cp_j \vert \le \rho_n\r) \to 1
\end{align}
under mild assumptions on $\{\vep_t\}$ permitting heavy-tails and serial dependence, for a suitable bandwidth $G$. 
The conditions on $G$ depend on the moments of the error sequence on the one hand
(such that $G$ can be smaller as $\nu$ in~\eqref{eq:errors} increases)  
as well as on the distance between neighbouring change points 
(i.e.\ $2G < \min_{1 \le j \le q_n} \delta_j$)
on the other hand.
We require the size of the changes to be sufficiently large such that
$\min_{1 \le j \le q_n} d_j^2 G \ge D_n$ with $D_n \to \infty$ at an appropriate rate
in relation to the behaviour of $\{\vep_t\}$ and, the resulting localisation rate satisfies
$D_n^{-1} \rho_n \to 0$.
This, together with the condition on the size of the changes and~\eqref{eq:consistency},
shows that Assumptions~\ref{assum_meta_est} and~\ref{assum_precision_est} are fulfilled 
by $\wh\Cp(G)$ with an appropriately chosen $G$.

In the important special case where $q_n = q$ is finite 
(as in Theorems~\ref{thm:asymp}~\ref{thm:asymp:three} 
and~\ref{thm:bootstrap}~\ref{thm:bootstrap:three}), 
the consistency result in~\eqref{eq:consistency} holds for $\rho_n$
that diverges at an arbitrarily slow rate,
i.e.\ $\max_{1 \le j \le q} d_j^2 \vert \wh\cp_j(G) - \cp_j \vert =O_P(1)$.
Theorem~\ref{thm:asymp} 
is closely related to the later result, 
not only showing that this rate is exact but also deriving 
the corresponding (non-degenerate) limit distribution. 
In fact, this localisation rate is minimax optimal 
in the multiple change point detection problem in~\eqref{eq:model}
(see \cite{fromont2020}).

Generally speaking,
this single-scale MOSUM procedure performs  best
with the bandwidth chosen as large as possible 
while avoiding to have more than one change point within the moving window at any time.
Therefore, it lacks adaptivity when the change points are {\it heterogeneous},
i.e.\ when the data sequence contains both large changes over short intervals
and small changes over long intervals of stationarity.
In such a situation, applying the MOSUM procedure with a range of bandwidths,
and then combining information across the multiple bandwidths,
is one way of addressing this lack of adaptivity of the single-scale MOSUM procedure,
at the cost of requiring a more complicated model selection procedure.
The results in this paper take into account the possibility 
of using different bandwidths $G_j$ for the detection of individual $\cp_j, \, j = 1, \ldots, q_n$,
see~\eqref{eq:cp:tilde}.

One such model selection procedure is the localised pruning proposed by \cite{cho2019two}.
When applied to the set of candidate change point estimators
generated by the multiscale MOSUM procedure, 
it returns $\wh\Cp = \{\wh\cp_j, \, 1 \le j \le \wh q_n\}$
that achieves consistency by correctly estimating the number of change points $q_n$
as well as `almost' inheriting the localisation property of $\wt\cp_j$,
in the sense that $\max_{1 \le j \le q_n} d_j^2 \vert \wh\cp_j - \cp_j \vert = O_P( \nu_n\rho_n)$ 
with $\rho_n$ as in \eqref{eq:consistency}
for $\nu_n \to \infty$ at an arbitrary rate (see Corollary~4.2 in \cite{cho2019two}),
when the set of bandwidths is suitably chosen.
Furthermore, \cite{cho2019two} formulate 
a rigorous framework permitting the aforementioned heterogeneity in change points
and show that, under such a general setting,
the multiscale MOSUM procedure combined with the localised pruning
(termed MoLP in Section~\ref{sec:sim:lp})
is (almost) minimax optimal in terms of both separation 
(related to correctly estimating the number of change points)
and localisation rates
when $\{\vep_t\}$ are i.i.d.\ sub-Gaussian or when $q_n$ is finite. 

For algorithmic descriptions of the above procedures and further information about the R~package
{\tt mosum} implementing them, see \cite{meier2021mosum}.

\clearpage

\section{Additional simulation results}
\label{sec:sim:add}

In this section, we provide additional simulation results
obtained with $\{\vep_t\}$ following Gaussian distributions (Appendix~\ref{sec:sim:add})
as in Section~\ref{sec:sim},
and $t_5$ distributions (Appendix~\ref{sec:sim:t})
for the five test signals, see Figure~\ref{fig:testsignals} for illustration.
We keep the signal-to-noise ratio constant across the two scenarios.
When generating bootstrap CIs with the additional model selection step using MoLP procedure,
we apply a slightly larger penalty of $\log^{1.1}(n)$ 
when the errors are generated from $t_5$ distributions,
in place of $\log^{1.01}(n)$ recommended by default and adopted for the Gaussian errors,
for the localised pruning procedure of \cite{cho2019two};
all other tuning parameters are set identically.

As noted in Section~\ref{sec:sim},
the bootstrap CIs constructed with the oracle estimators $\wt\cp_j$
closely attain the desired confidence level
while those constructed with the estimators obtained from the additional model selection step
tend to be more conservative,
and we observe little difference in their behaviour
whether $\{\vep_t\}$ follow Gaussian distributions
or $t_5$ distributions.

\begin{figure}[htbp]
\centering
\includegraphics[width=1\textwidth]{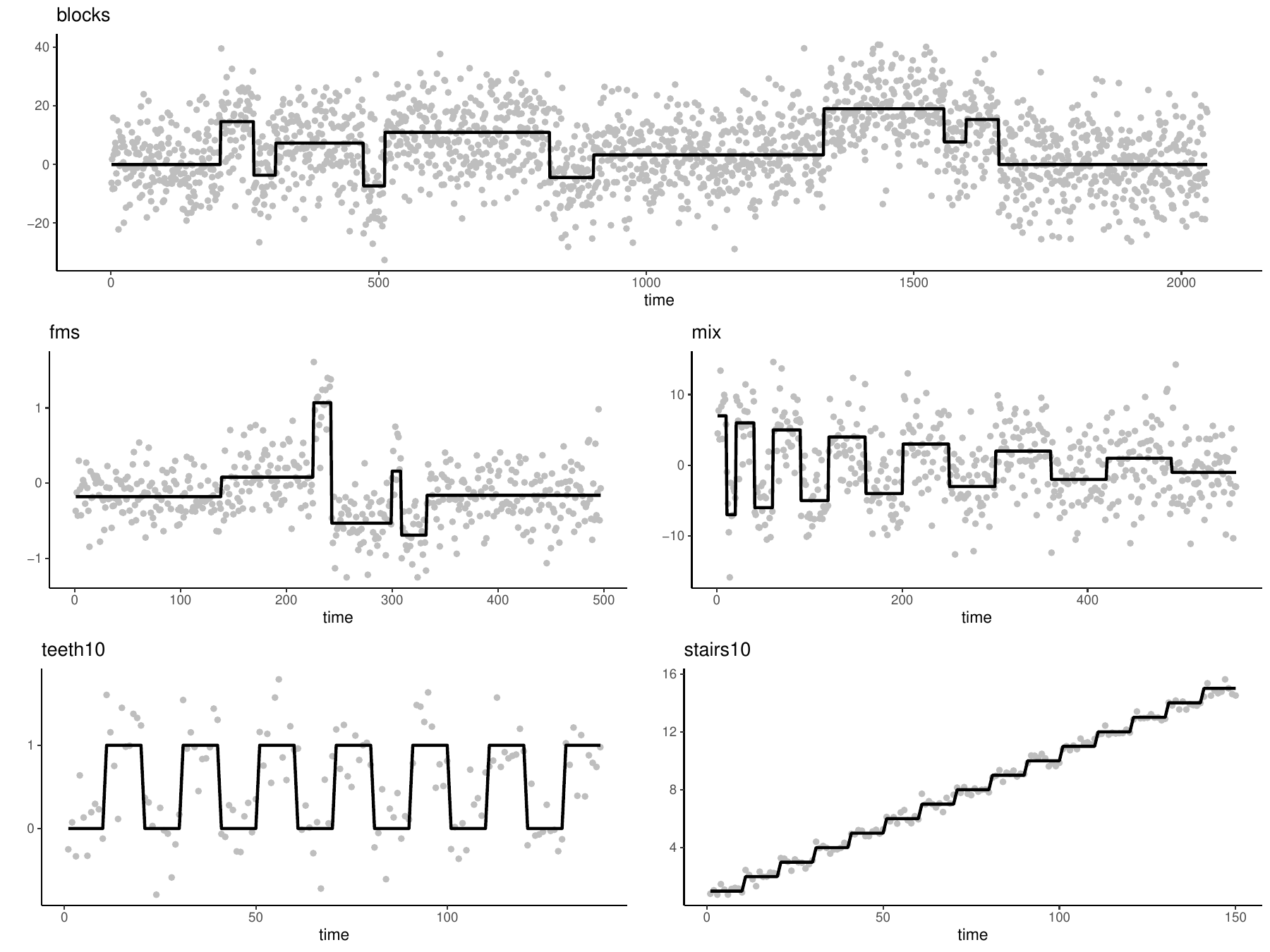}
\caption{Realisations from the {\tt blocks}, {\tt fms}, {\tt mix}, {\tt teeth10} and {\tt stairs10} test signals from \cite{fryzlewicz2014} with Gaussian errors.}
\label{fig:testsignals}
\end{figure}

\clearpage

\subsection{Gaussian errors}
\label{sec:sim:gauss}

\subsubsection{Bootstrap CIs constructed with the oracle estimators in~\eqref{eq:cp:tilde}}

\begin{figure}[htbp]
\centering
\includegraphics[width=.8\textwidth]{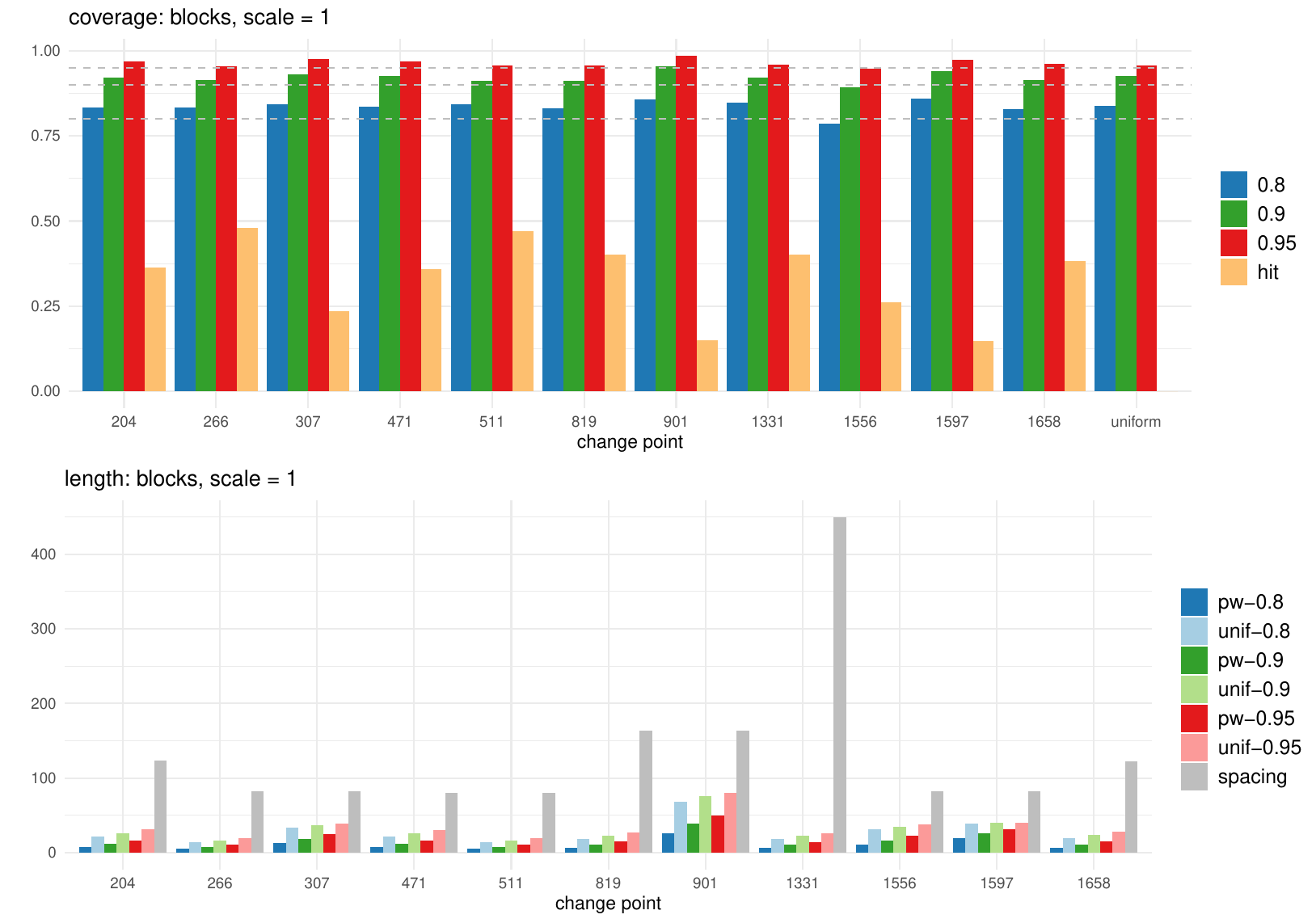}
\caption{{\tt blocks} with $\vartheta = 1$.}
\label{fig:blocks:one}
\end{figure}


\begin{figure}[htbp]
\centering
\includegraphics[width=.8\textwidth]{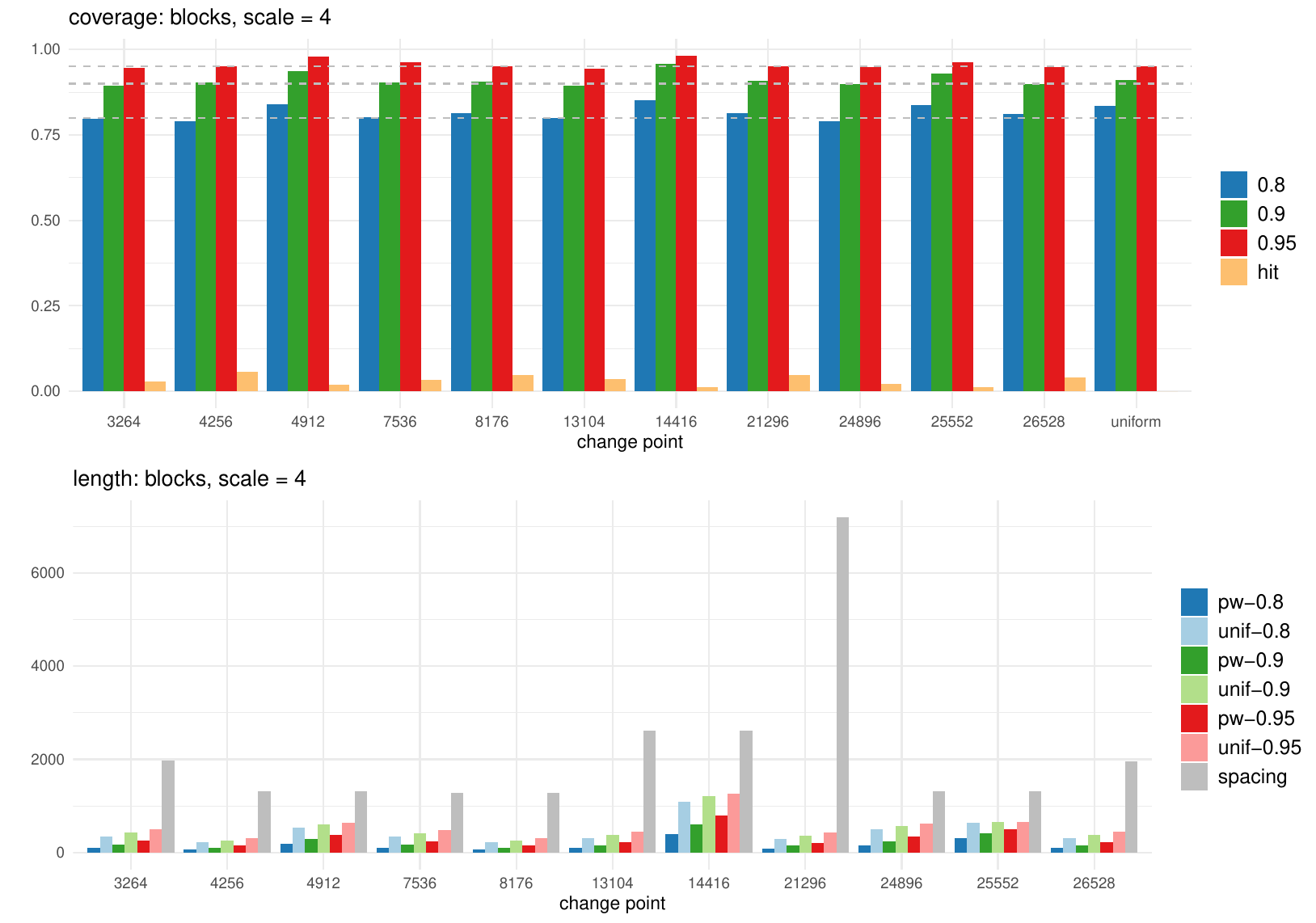}
\caption{{\tt blocks} with $\vartheta = 4$.}
\label{fig:blocks:four}
\end{figure}

\begin{figure}[htbp]
\centering
\includegraphics[width=.8\textwidth]{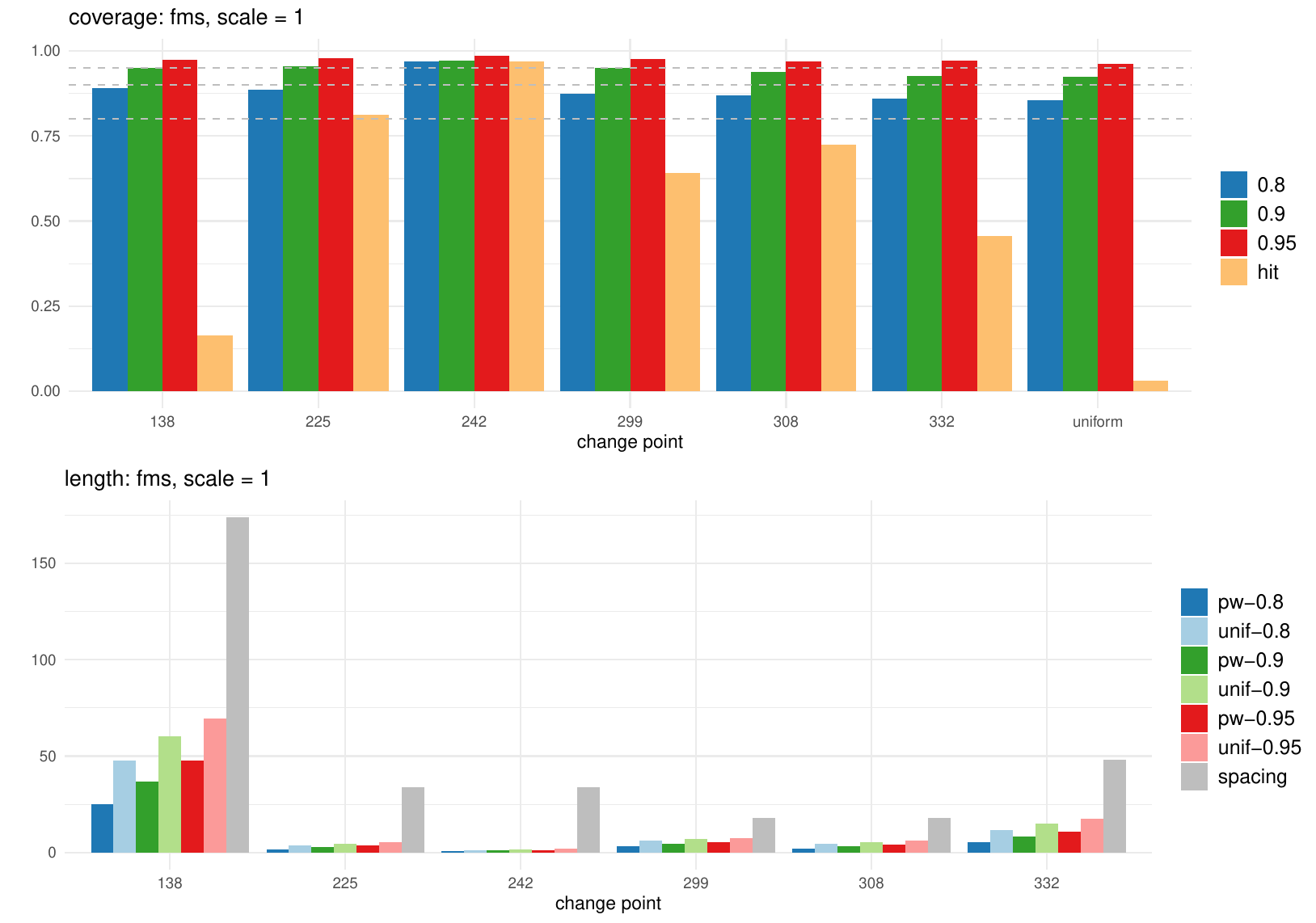}
\caption{Bootstrap CIs constructed with the oracle estimators in~\eqref{eq:cp:tilde}:
{\tt fms} with $\vartheta = 1$.}
\label{fig:fms:one}
\end{figure}


\begin{figure}[htbp]
\centering
\includegraphics[width=.8\textwidth]{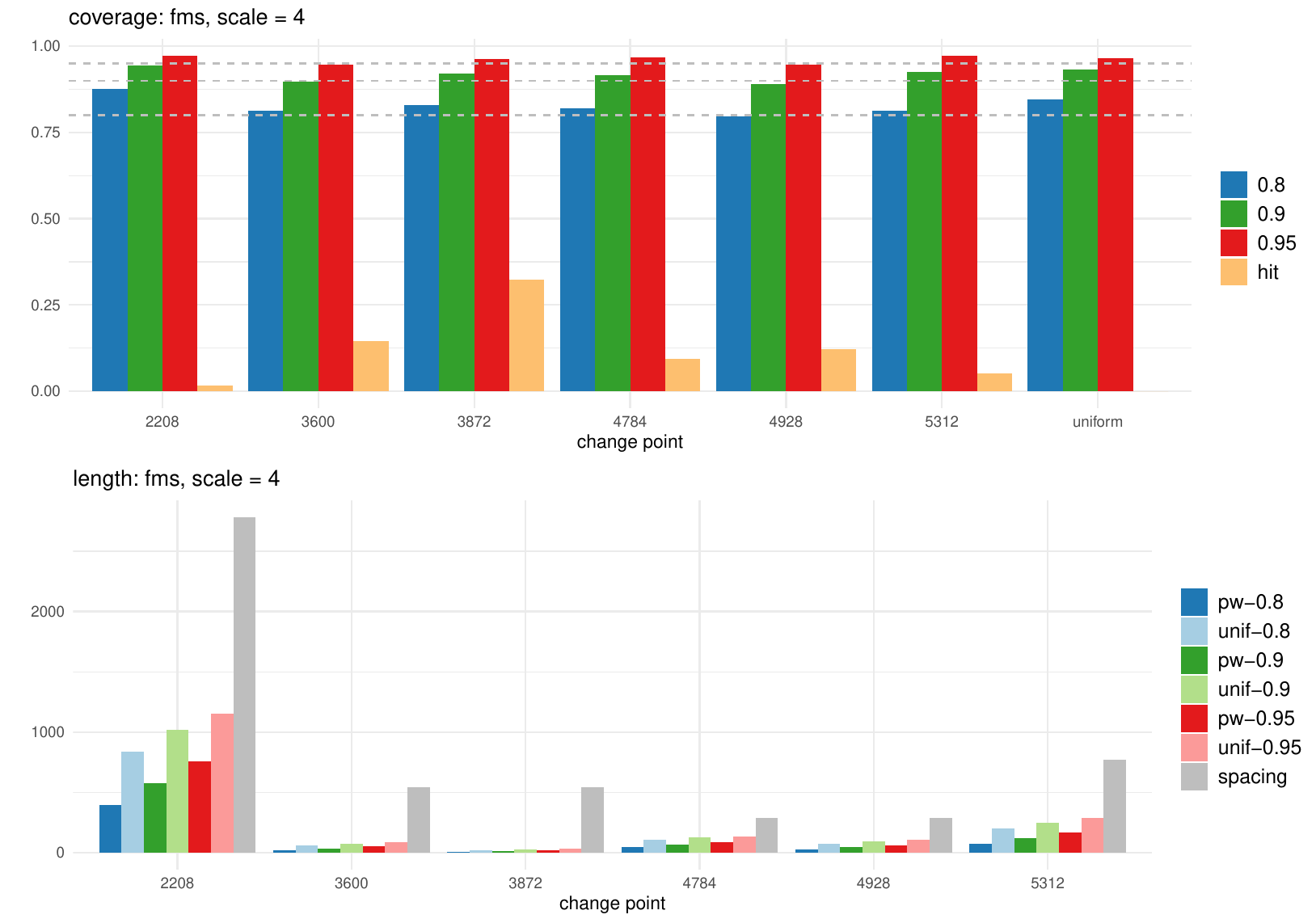}
\caption{Bootstrap CIs constructed with the oracle estimators in~\eqref{eq:cp:tilde}:
{\tt fms} with $\vartheta = 4$.}
\label{fig:fms:four}
\end{figure}


%
%
%
\begin{figure}[htbp]
\centering
\includegraphics[width=.8\textwidth]{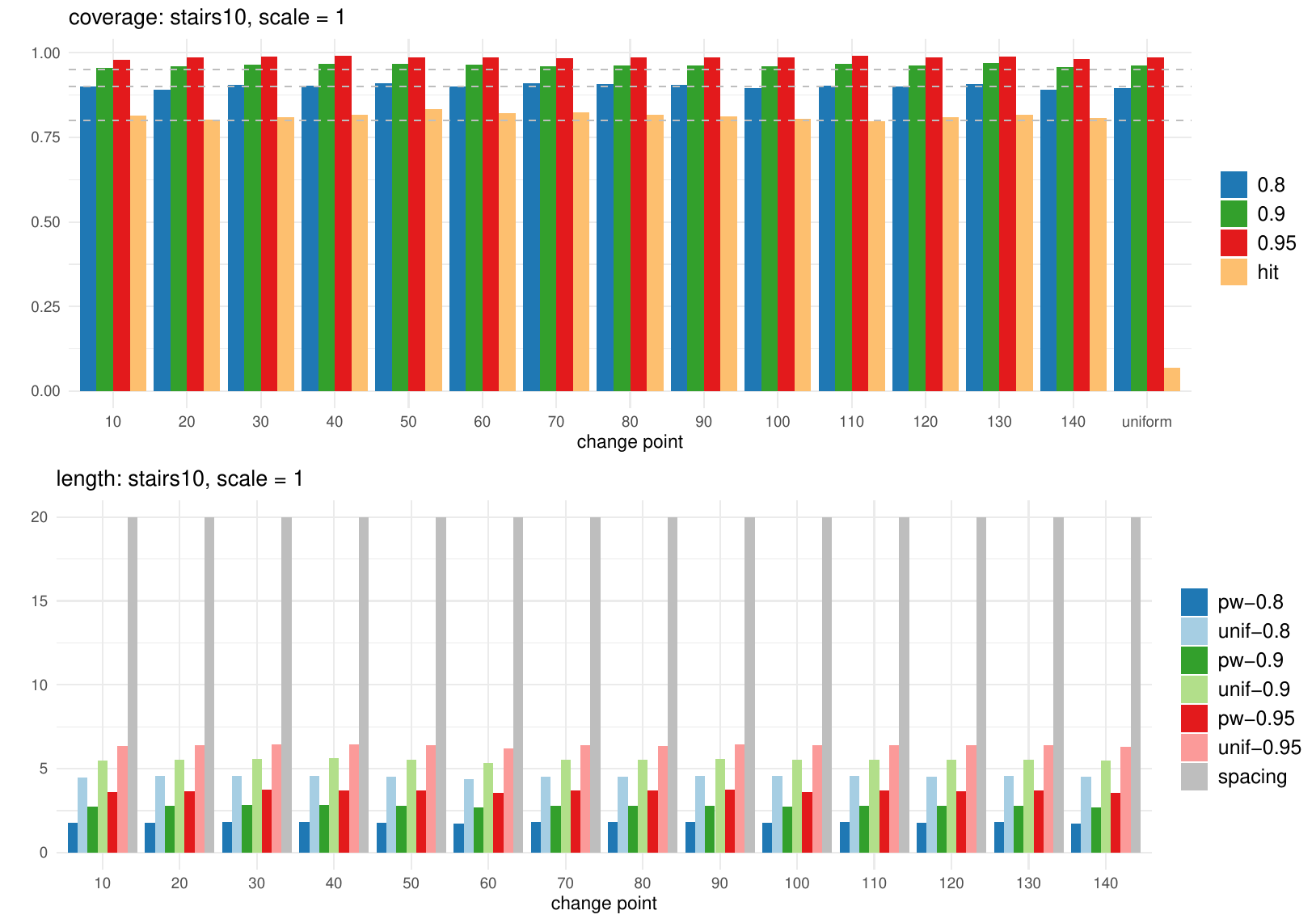}
\caption{Bootstrap CIs constructed with the oracle estimators in~\eqref{eq:cp:tilde}: {\tt stairs10} with $\vartheta = 1$.}
\label{fig:stairs10:one}
\end{figure}
%
%
\begin{figure}[htbp]
\centering
\includegraphics[width=.8\textwidth]{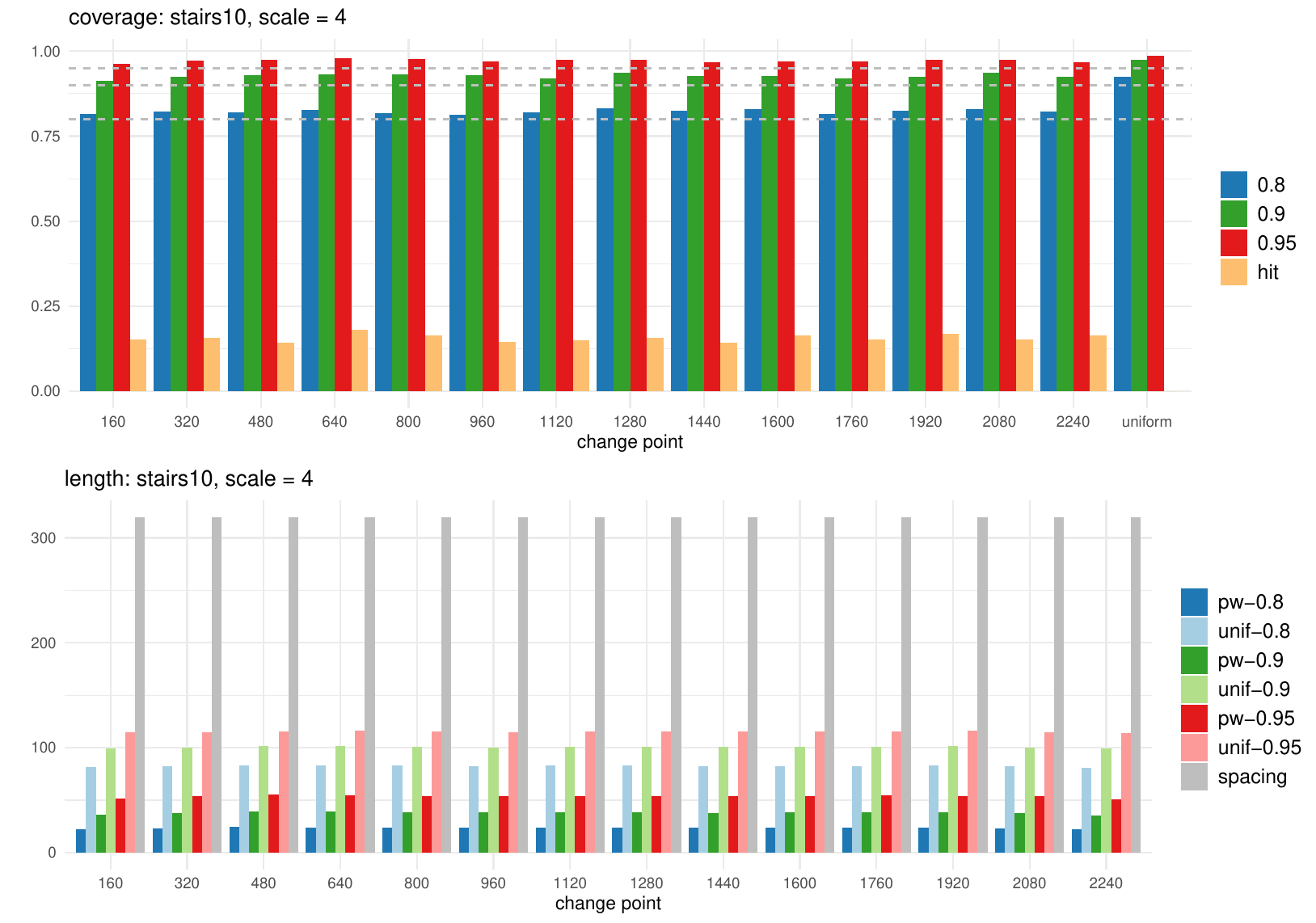}
\caption{Bootstrap CIs constructed with the oracle estimators in~\eqref{eq:cp:tilde}: {\tt stairs10} with $\vartheta = 4$.}
\label{fig:stairs10:four}
\end{figure}

\clearpage

\subsubsection{Bootstrap CIs constructed with model selection}

\begin{figure}[htbp]
\centering
\includegraphics[width=.75\textwidth]{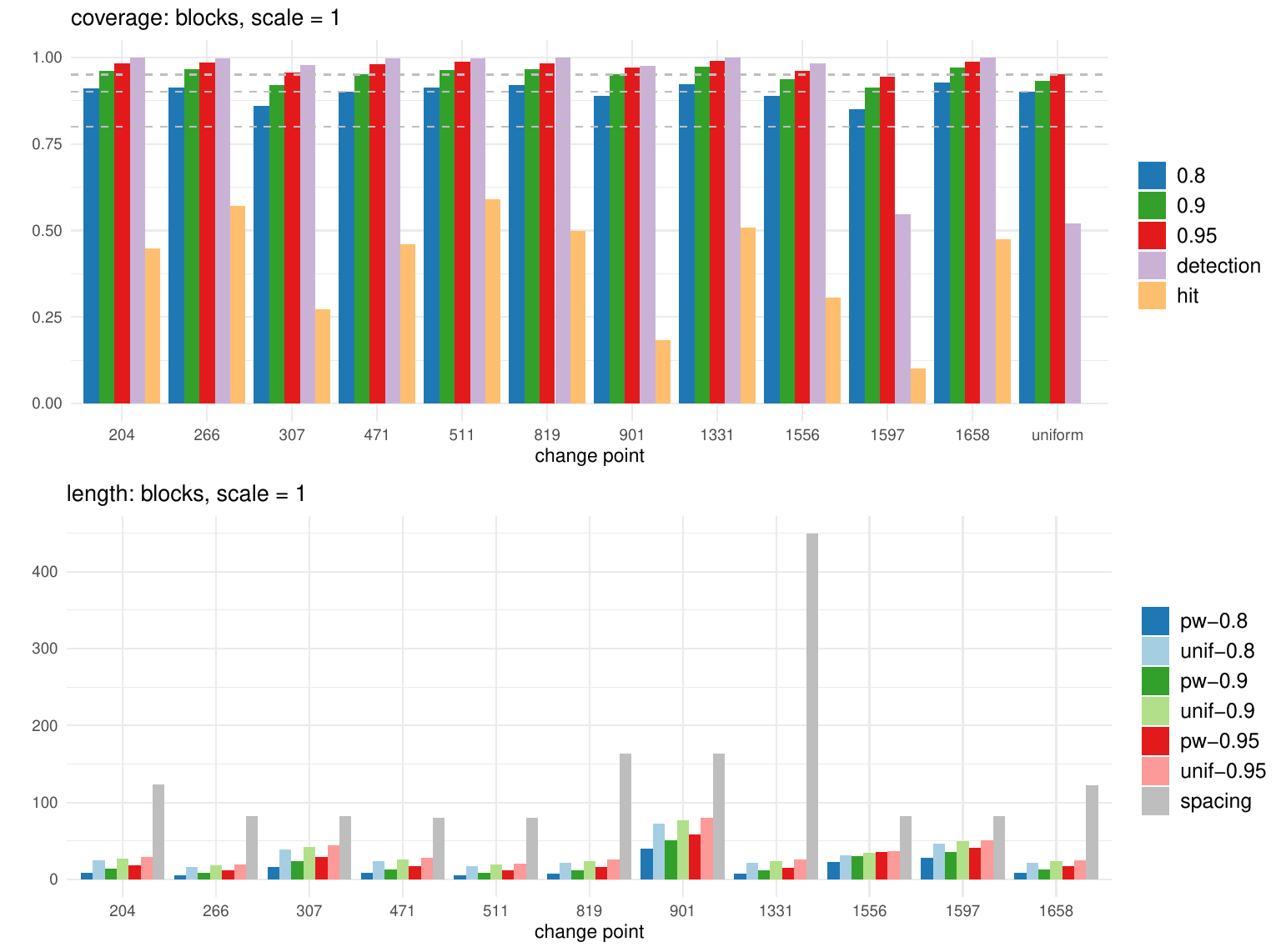}
\caption{Bootstrap CIs constructed with model selection: {\tt blocks} with $\vartheta = 1$.}
\label{fig:full:blocks:one}
\end{figure}


\begin{figure}[htbp]
\centering
\includegraphics[width=.75\textwidth]{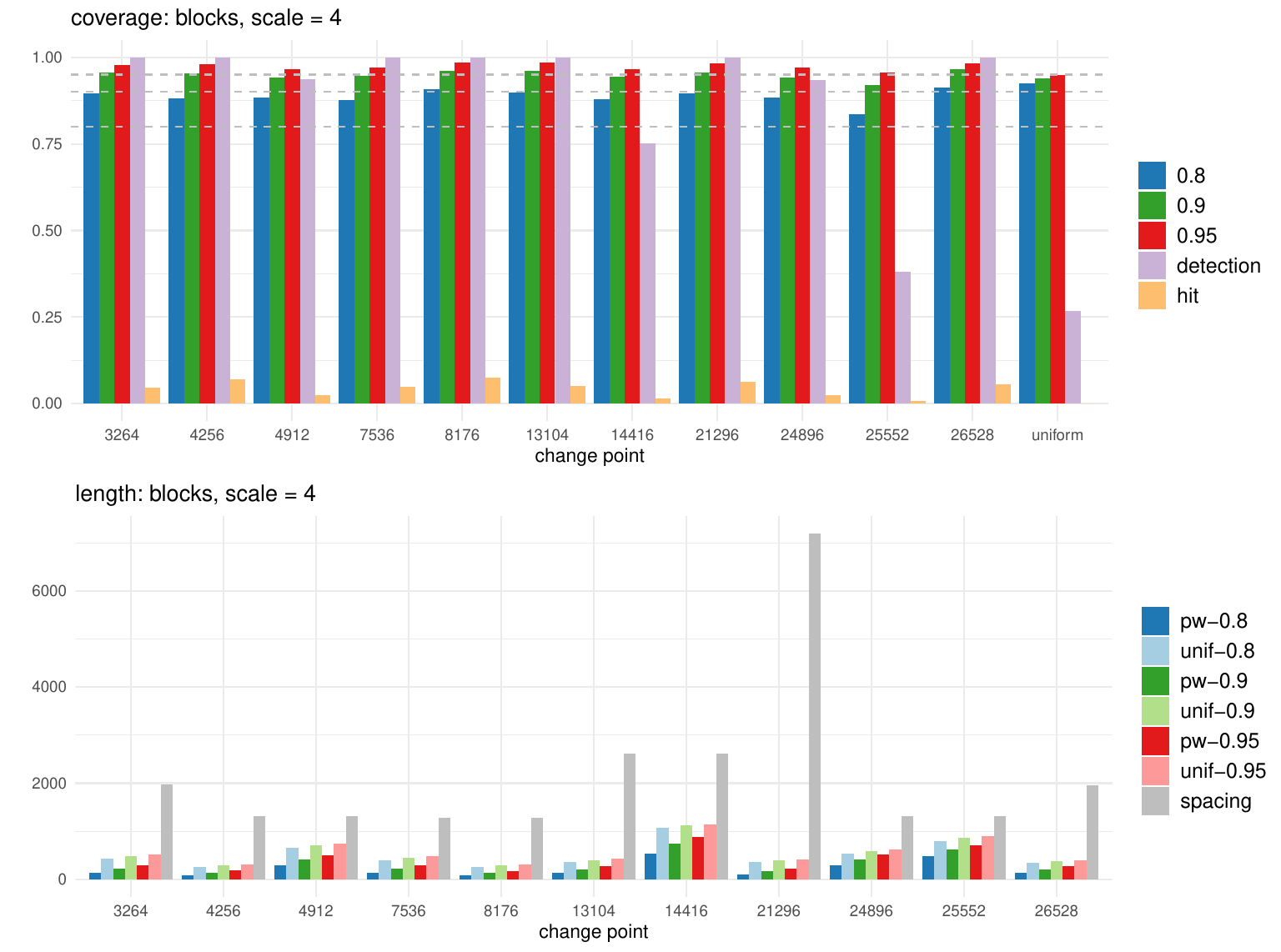}
\caption{Bootstrap CIs constructed with model selection: {\tt blocks} with $\vartheta = 4$.}
\label{fig:full:blocks:four}
\end{figure}

\begin{figure}[htbp]
\centering
\includegraphics[width=.8\textwidth]{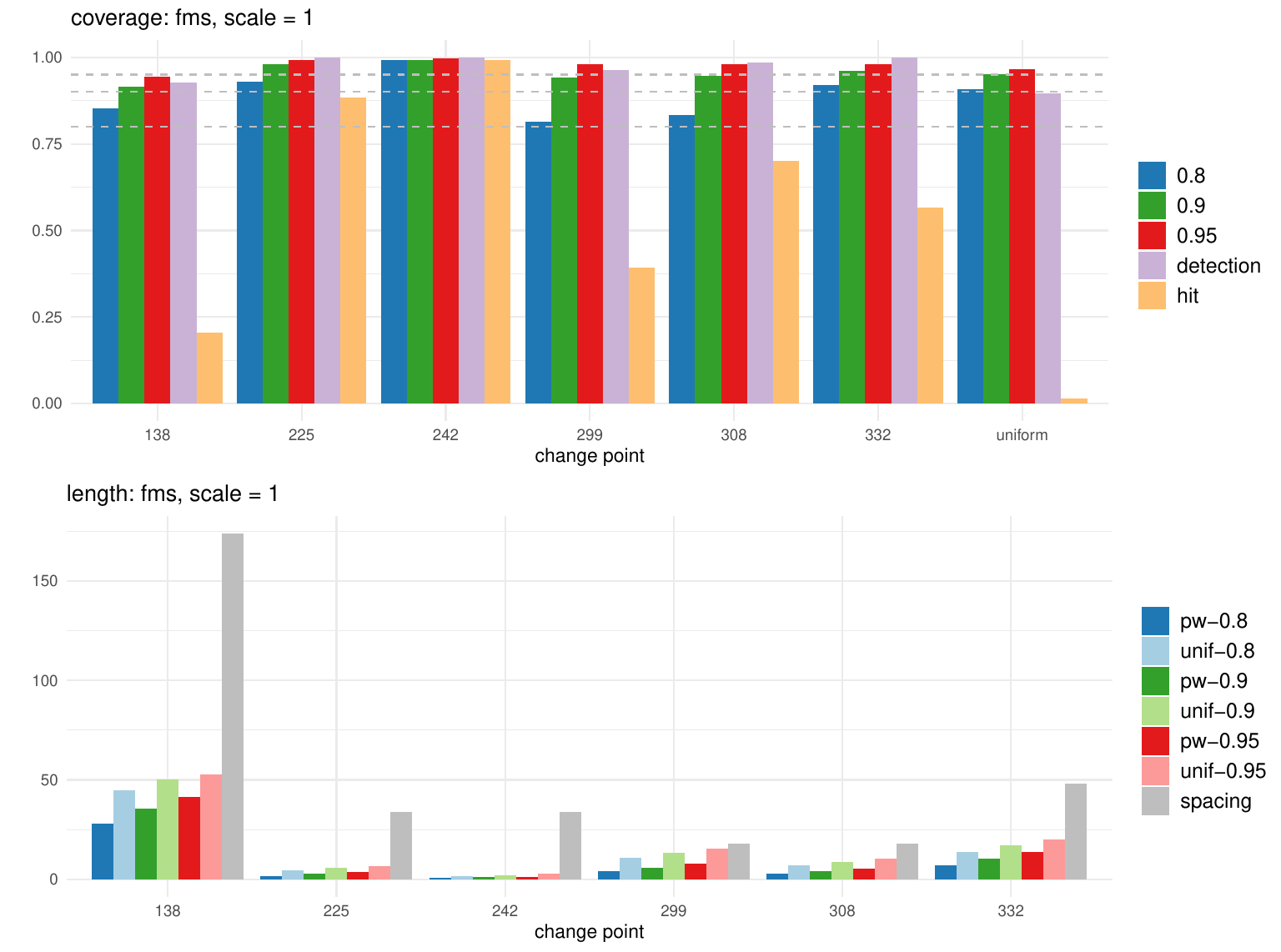}
\caption{Bootstrap CIs constructed with model selection: {\tt fms} with $\vartheta = 1$.}
\label{fig:full:fms:one}
\end{figure}


\begin{figure}[htbp]
\centering
\includegraphics[width=.8\textwidth]{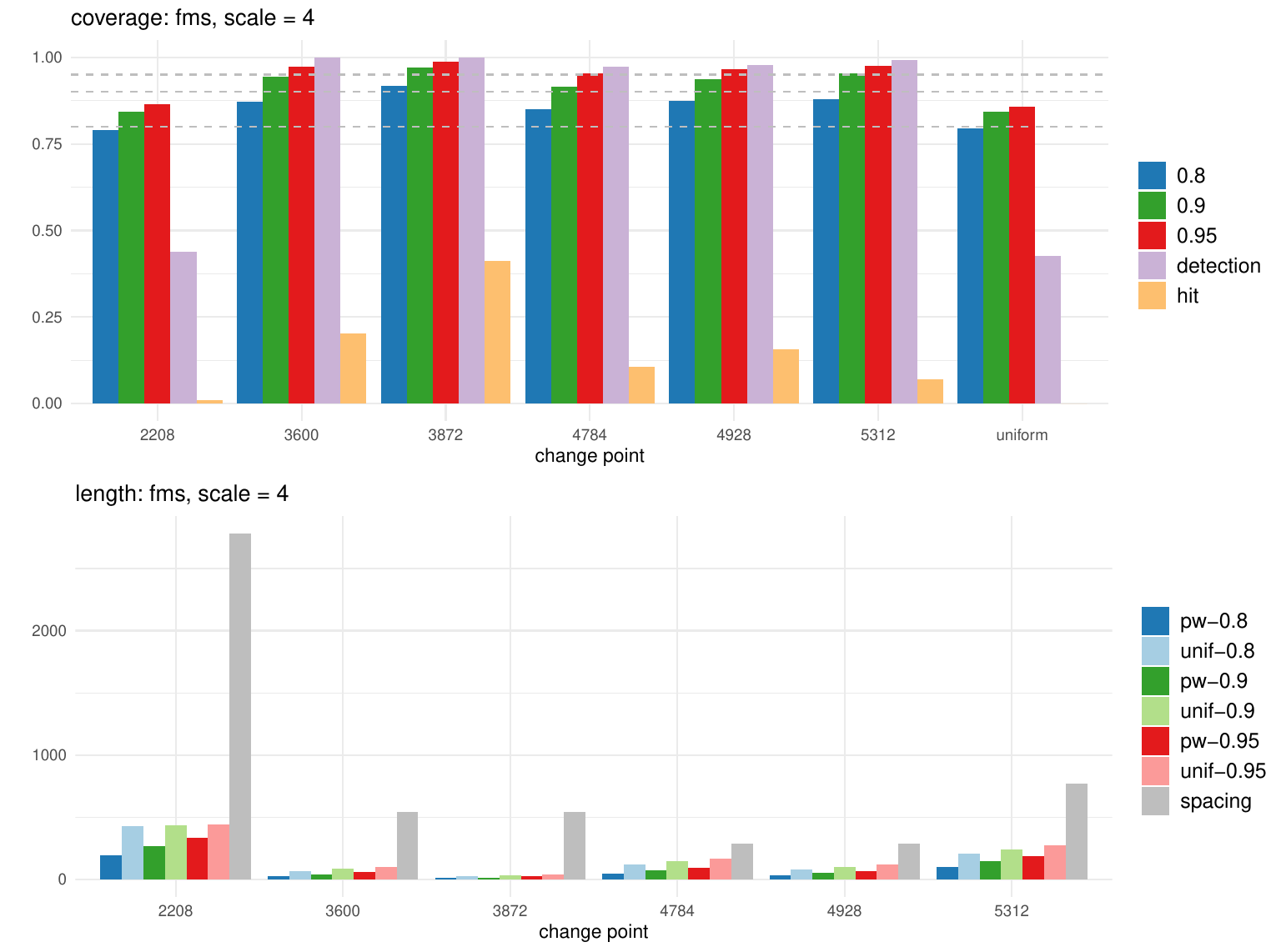}
\caption{Bootstrap CIs constructed with model selection: {\tt fms} with $\vartheta = 4$.}
\label{fig:full:fms:four}
\end{figure}


%
%
%
\begin{figure}[htbp]
\centering
\includegraphics[width=.8\textwidth]{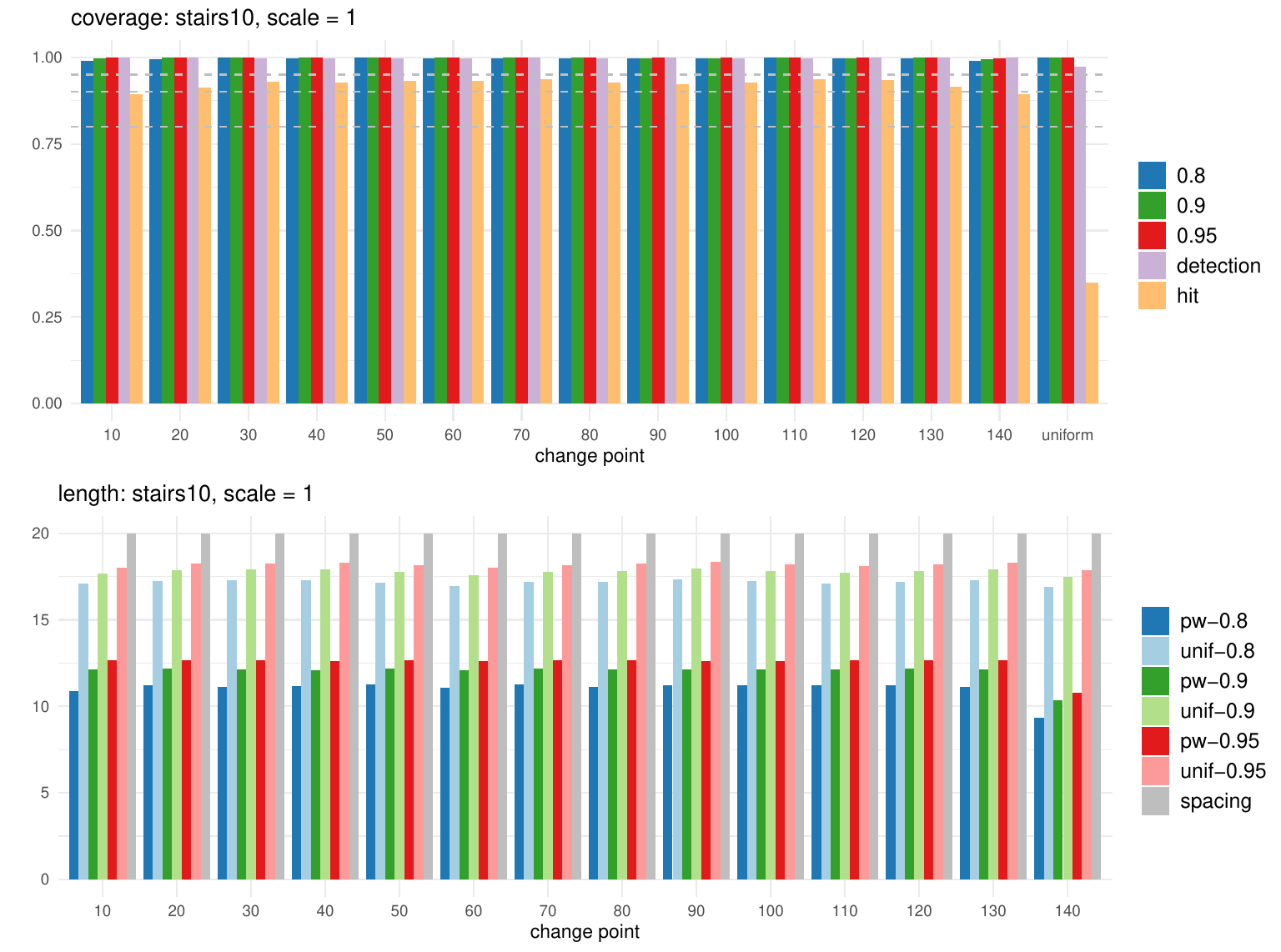}
\caption{Bootstrap CIs constructed with model selection: {\tt stairs10} with $\vartheta = 1$.}
\label{fig:full:stairs10:one}
\end{figure}
%
%
\begin{figure}[htbp]
\centering
\includegraphics[width=.8\textwidth]{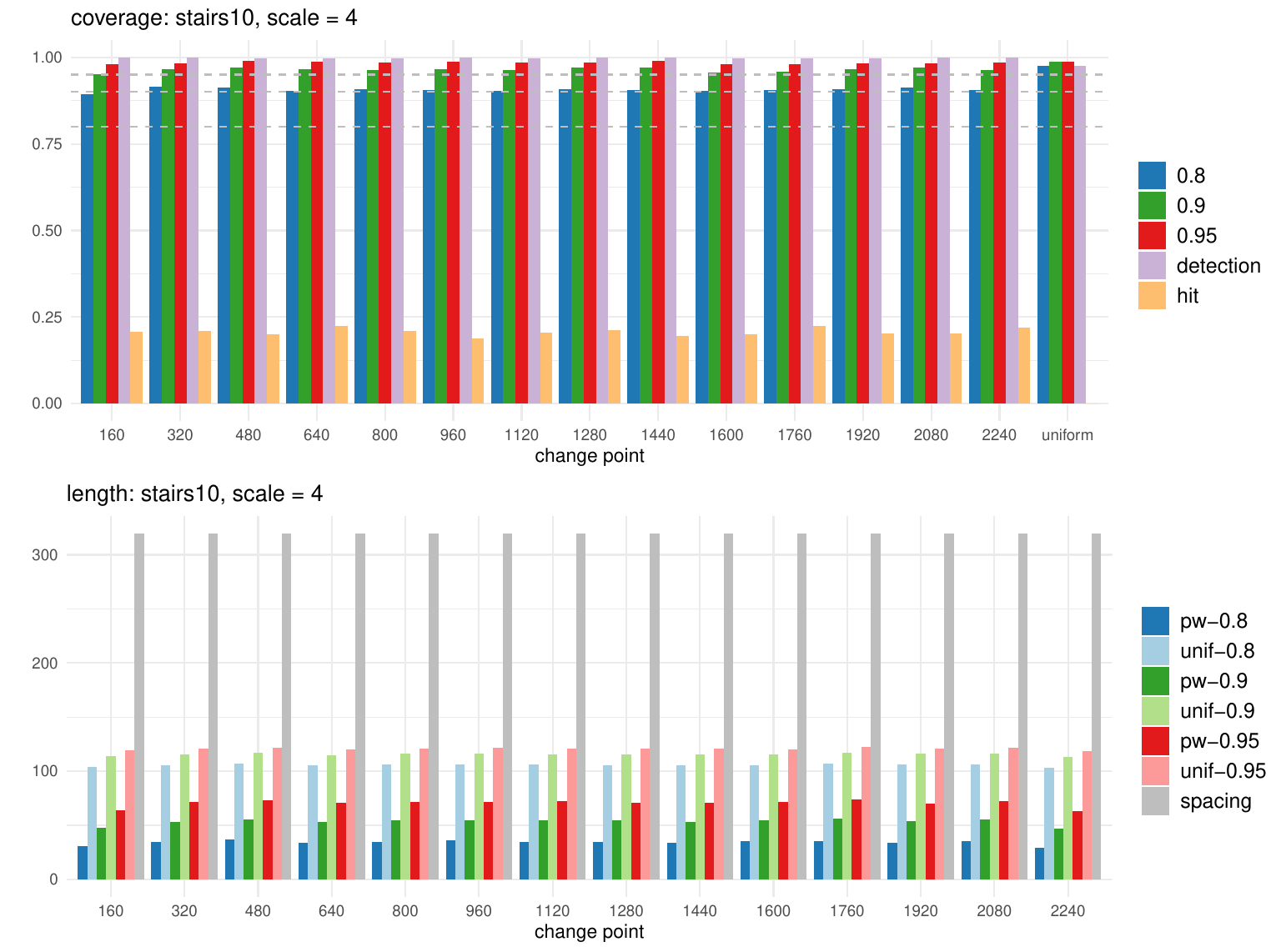}
\caption{Bootstrap CIs constructed with model selection: {\tt stairs10} with $\vartheta = 4$.}
\label{fig:full:stairs10:four}
\end{figure}

\clearpage

\subsubsection{Comparison of coverage}

Tables~\ref{table:cov:1}--\ref{table:cov:4} compare the coverage of the bootstrap CIs
constructed with the oracle estimators in~\eqref{eq:cp:tilde}
and those constructed with the model selection step,
averaged over $2000$ realisations.

\begin{table}[ht]
\caption{Average coverage of the bootstrap CIs constructed with the oracle estimators
and the estimators obtained from MoLP, when $\vartheta = 1$.
We also report the proportion of realisations where individual change points are detected (by MoLP)
and where all change points are correctly detected.}
\label{table:cov:1}
\centering
\resizebox{\columnwidth}{!}{
\begin{tabular}{ccc  ccc ccc ccc ccc cc  c}
\toprule
test signal &	$1 - \alpha$ &	estimator &	$\cp_1$ &	$\cp_2$ &	$\cp_3$ &	$\cp_4$ &	$\cp_5$ &	$\cp_6$ &	$\cp_7$ &	$\cp_8$ &	$\cp_9$ &	$\cp_{10}$ &	$\cp_{11}$ &	$\cp_{12}$ &	$\cp_{13}$ &	$\cp_{14}$ &	uniform	\\	
\cmidrule(lr){1-3} \cmidrule(lr){4-17} \cmidrule(lr){18-18}
{\tt blocks} &	0.8 &	oracle &	0.834 &	0.832 &	0.844 &	0.836 &	0.842 &	0.831 &	0.858 &	0.848 &	0.787 &	0.86 &	0.828 &	-- &	-- &	-- &	0.838	\\	
&	&	MoLP &	0.911 &	0.913 &	0.859 &	0.902 &	0.912 &	0.921 &	0.89 &	0.922 &	0.888 &	0.85 &	0.928 &	-- &	-- &	-- &	0.902	\\	
&	0.9 &	oracle &	0.921 &	0.915 &	0.932 &	0.926 &	0.912 &	0.912 &	0.954 &	0.922 &	0.892 &	0.941 &	0.914 &	-- &	-- &	-- &	0.926	\\	
&	&	MoLP &	0.96 &	0.965 &	0.92 &	0.95 &	0.964 &	0.965 &	0.95 &	0.974 &	0.937 &	0.913 &	0.97 &	-- &	-- &	-- &	0.932	\\	
&	0.95 &	oracle &	0.97 &	0.955 &	0.976 &	0.969 &	0.958 &	0.957 &	0.986 &	0.958 &	0.947 &	0.974 &	0.961 &	-- &	-- &	-- &	0.958	\\	
&	&	MoLP &	0.983 &	0.984 &	0.957 &	0.98 &	0.987 &	0.982 &	0.97 &	0.989 &	0.962 &	0.944 &	0.986 &	-- &	-- &	-- &	0.951	\\	
\cmidrule(lr){2-3} \cmidrule(lr){4-17} \cmidrule(lr){18-18}
&	— &	detection &	0.999 &	0.997 &	0.978 &	0.997 &	0.998 &	0.999 &	0.976 &	1 &	0.982 &	0.546 &	1 &	-- &	-- &	-- &	0.521	\\	
\cmidrule(lr){1-3} \cmidrule(lr){4-17} \cmidrule(lr){18-18}
{\tt fms} &	0.8 &	oracle &	0.891 &	0.887 &	0.968 &	0.875 &	0.87 &	0.859 &	-- &	-- &	-- &	-- &	-- &	-- &	-- &	-- &	0.856	\\	
&	&	MoLP &	0.852 &	0.93 &	0.992 &	0.815 &	0.833 &	0.919 &	-- &	-- &	-- &	-- &	-- &	-- &	-- &	-- &	0.908	\\	
&	0.9 &	oracle &	0.95 &	0.956 &	0.972 &	0.95 &	0.938 &	0.928 &	-- &	-- &	-- &	-- &	-- &	-- &	-- &	-- &	0.926	\\	
&	&	MoLP &	0.914 &	0.98 &	0.992 &	0.941 &	0.946 &	0.961 &	-- &	-- &	-- &	-- &	-- &	-- &	-- &	-- &	0.952	\\	
&	0.95 &	oracle &	0.974 &	0.978 &	0.986 &	0.978 &	0.969 &	0.972 &	-- &	-- &	-- &	-- &	-- &	-- &	-- &	-- &	0.962	\\	
&	&	MoLP &	0.945 &	0.993 &	0.996 &	0.979 &	0.981 &	0.98 &	-- &	-- &	-- &	-- &	-- &	-- &	-- &	-- &	0.966	\\	
\cmidrule(lr){2-3} \cmidrule(lr){4-17} \cmidrule(lr){18-18}
&	— &	detection &	0.928 &	1 &	1 &	0.964 &	0.986 &	1 &	-- &	-- &	-- &	-- &	-- &	-- &	-- &	-- &	0.895	\\	
\cmidrule(lr){1-3} \cmidrule(lr){4-17} \cmidrule(lr){18-18}
{\tt mix} &	0.8 &	oracle &	0.9 &	0.88 &	0.892 &	0.899 &	0.868 &	0.874 &	0.856 &	0.857 &	0.845 &	0.814 &	0.826 &	0.814 &	0.864 &	-- &	0.84	\\	
&	&	MoLP &	0.95 &	0.957 &	0.926 &	0.941 &	0.937 &	0.926 &	0.92 &	0.91 &	0.898 &	0.875 &	0.862 &	0.828 &	0.752 &	-- &	0.819	\\	
&	0.9 &	oracle &	0.956 &	0.948 &	0.95 &	0.946 &	0.926 &	0.938 &	0.922 &	0.926 &	0.928 &	0.908 &	0.934 &	0.922 &	0.942 &	-- &	0.927	\\	
&	&	MoLP &	0.964 &	0.981 &	0.972 &	0.971 &	0.974 &	0.969 &	0.965 &	0.964 &	0.953 &	0.931 &	0.93 &	0.894 &	0.857 &	-- &	0.888	\\	
&	0.95 &	oracle &	0.98 &	0.98 &	0.976 &	0.976 &	0.968 &	0.972 &	0.958 &	0.966 &	0.969 &	0.96 &	0.972 &	0.966 &	0.972 &	-- &	0.964	\\	
&	&	MoLP &	0.973 &	0.993 &	0.982 &	0.984 &	0.988 &	0.988 &	0.986 &	0.984 &	0.975 &	0.958 &	0.956 &	0.922 &	0.913 &	-- &	0.911	\\	
\cmidrule(lr){2-3} \cmidrule(lr){4-17} \cmidrule(lr){18-18}
&	— &	detection &	0.999 &	0.999 &	1 &	1 &	1 &	1 &	1 &	1 &	1 &	0.994 &	0.938 &	0.635 &	0.3 &	-- &	0.279	\\	
\cmidrule(lr){1-3} \cmidrule(lr){4-17} \cmidrule(lr){18-18}
{\tt teeth10} &	0.8 &	oracle &	0.878 &	0.874 &	0.868 &	0.872 &	0.876 &	0.876 &	0.872 &	0.868 &	0.88 &	0.856 &	0.863 &	0.864 &	0.862 &	-- &	0.784	\\	
&	&	MoLP &	0.894 &	0.977 &	0.983 &	0.982 &	0.985 &	0.989 &	0.981 &	0.976 &	0.98 &	0.983 &	0.982 &	0.981 &	0.905 &	-- &	0.889	\\	
&	0.9 &	oracle &	0.948 &	0.946 &	0.944 &	0.941 &	0.942 &	0.942 &	0.936 &	0.94 &	0.946 &	0.935 &	0.939 &	0.938 &	0.946 &	-- &	0.882	\\	
&	&	MoLP &	0.933 &	0.993 &	0.994 &	0.992 &	0.996 &	0.995 &	0.992 &	0.991 &	0.991 &	0.993 &	0.994 &	0.991 &	0.946 &	-- &	0.954	\\	
&	0.95 &	oracle &	0.976 &	0.976 &	0.974 &	0.976 &	0.972 &	0.974 &	0.974 &	0.971 &	0.971 &	0.976 &	0.974 &	0.974 &	0.972 &	-- &	0.939	\\	
&	&	MoLP &	0.958 &	0.996 &	0.997 &	0.997 &	0.998 &	0.996 &	0.996 &	0.996 &	0.997 &	0.997 &	0.997 &	0.995 &	0.975 &	-- &	0.979	\\	
\cmidrule(lr){2-3} \cmidrule(lr){4-17} \cmidrule(lr){18-18}
&	— &	detection &	0.962 &	0.946 &	0.948 &	0.952 &	0.952 &	0.951 &	0.951 &	0.954 &	0.953 &	0.951 &	0.948 &	0.952 &	0.964 &	-- &	0.7	\\	
\cmidrule(lr){1-3} \cmidrule(lr){4-17} \cmidrule(lr){18-18}
{\tt stairs10} &	0.8 &	oracle &	0.902 &	0.89 &	0.906 &	0.902 &	0.91 &	0.902 &	0.91 &	0.907 &	0.904 &	0.896 &	0.902 &	0.9 &	0.906 &	0.89 &	0.896	\\	
&	&	MoLP &	0.99 &	0.995 &	0.999 &	0.998 &	0.999 &	0.998 &	0.997 &	0.998 &	0.996 &	0.996 &	0.998 &	0.997 &	0.997 &	0.99 &	1	\\	
&	0.9 &	oracle &	0.956 &	0.96 &	0.966 &	0.967 &	0.967 &	0.964 &	0.96 &	0.964 &	0.964 &	0.96 &	0.968 &	0.962 &	0.97 &	0.958 &	0.962	\\	
&	&	MoLP &	0.997 &	0.998 &	0.999 &	0.999 &	0.999 &	0.999 &	0.999 &	0.999 &	0.998 &	0.998 &	0.999 &	0.998 &	0.999 &	0.995 &	1	\\	
&	0.95 &	oracle &	0.98 &	0.986 &	0.989 &	0.99 &	0.986 &	0.987 &	0.983 &	0.987 &	0.986 &	0.986 &	0.992 &	0.986 &	0.99 &	0.982 &	0.986	\\	
&	&	MoLP &	0.998 &	0.998 &	0.999 &	0.999 &	0.999 &	0.999 &	0.999 &	0.999 &	0.998 &	0.998 &	0.999 &	0.999 &	0.999 &	0.997 &	1	\\
\cmidrule(lr){2-3} \cmidrule(lr){4-17} \cmidrule(lr){18-18}
&	— &	detection &	0.998 &	0.998 &	0.998 &	0.998 &	0.996 &	0.997 &	0.998 &	0.998 &	0.999 &	0.998 &	0.998 &	0.998 &	0.998 &	0.999 &	0.974	\\	
\bottomrule
\end{tabular}}
\end{table}

\begin{table}[ht]
\caption{Average coverage of the bootstrap CIs constructed with the oracle estimators
and the estimators obtained from MoLP, when $\vartheta = 4$.
We also report the proportion of realisations where individual change points are detected (by MoLP)
and where all change points are correctly detected.}
\label{table:cov:4}
\centering
\resizebox{\columnwidth}{!}{
\begin{tabular}{ccc  ccc ccc ccc ccc cc  c}
\toprule
test signal &	$1 - \alpha$ &	estimator &	$\cp_1$ &	$\cp_2$ &	$\cp_3$ &	$\cp_4$ &	$\cp_5$ &	$\cp_6$ &	$\cp_7$ &	$\cp_8$ &	$\cp_9$ &	$\cp_{10}$ &	$\cp_{11}$ &	$\cp_{12}$ &	$\cp_{13}$ &	$\cp_{14}$ &	uniform	\\	
\cmidrule(lr){1-3} \cmidrule(lr){4-17} \cmidrule(lr){18-18}
{\tt blocks} &	0.8 &	oracle &	0.796 &	0.79 &	0.84 &	0.802 &	0.812 &	0.8 &	0.852 &	0.814 &	0.79 &	0.837 &	0.812 &	-- &	-- &	-- &	0.834	\\	
&	&	MoLP &	0.897 &	0.88 &	0.884 &	0.877 &	0.909 &	0.898 &	0.879 &	0.897 &	0.884 &	0.836 &	0.913 &	-- &	-- &	-- &	0.924	\\	
&	0.9 &	oracle &	0.895 &	0.902 &	0.938 &	0.903 &	0.905 &	0.894 &	0.957 &	0.908 &	0.899 &	0.93 &	0.9 &	-- &	-- &	-- &	0.911	\\	
&	&	MoLP &	0.956 &	0.953 &	0.941 &	0.947 &	0.962 &	0.962 &	0.945 &	0.956 &	0.943 &	0.92 &	0.965 &	-- &	-- &	-- &	0.94	\\	
&	0.95 &	oracle &	0.945 &	0.952 &	0.978 &	0.962 &	0.952 &	0.944 &	0.982 &	0.952 &	0.948 &	0.962 &	0.948 &	-- &	-- &	-- &	0.952	\\	
&	&	MoLP &	0.978 &	0.98 &	0.967 &	0.97 &	0.986 &	0.984 &	0.966 &	0.983 &	0.97 &	0.955 &	0.982 &	-- &	-- &	-- &	0.95	\\	
\cmidrule(lr){2-3} \cmidrule(lr){4-17} \cmidrule(lr){18-18}
&	— &	detection &	1 &	1 &	0.938 &	1 &	1 &	1 &	0.752 &	1 &	0.934 &	0.382 &	1 &	-- &	-- &	-- &	0.268	\\	
\cmidrule(lr){1-3} \cmidrule(lr){4-17} \cmidrule(lr){18-18}
{\tt fms} &	0.8 &	oracle &	0.877 &	0.812 &	0.83 &	0.82 &	0.796 &	0.813 &	-- &	-- &	-- &	-- &	-- &	-- &	-- &	-- &	0.846	\\	
&	&	MoLP &	0.79 &	0.872 &	0.918 &	0.85 &	0.874 &	0.88 &	-- &	-- &	-- &	-- &	-- &	-- &	-- &	-- &	0.796	\\	
&	0.9 &	oracle &	0.944 &	0.897 &	0.92 &	0.916 &	0.89 &	0.926 &	-- &	-- &	-- &	-- &	-- &	-- &	-- &	-- &	0.933	\\	
&	&	MoLP &	0.844 &	0.945 &	0.97 &	0.916 &	0.937 &	0.953 &	-- &	-- &	-- &	-- &	-- &	-- &	-- &	-- &	0.842	\\	
&	0.95 &	oracle &	0.971 &	0.947 &	0.963 &	0.968 &	0.946 &	0.972 &	-- &	-- &	-- &	-- &	-- &	-- &	-- &	-- &	0.966	\\	
&	&	MoLP &	0.865 &	0.974 &	0.988 &	0.954 &	0.966 &	0.976 &	-- &	-- &	-- &	-- &	-- &	-- &	-- &	-- &	0.858	\\	
\cmidrule(lr){2-3} \cmidrule(lr){4-17} \cmidrule(lr){18-18}
&	— &	detection &	0.438 &	1 &	1 &	0.974 &	0.977 &	0.992 &	-- &	-- &	-- &	-- &	-- &	-- &	-- &	-- &	0.426	\\	
\cmidrule(lr){1-3} \cmidrule(lr){4-17} \cmidrule(lr){18-18}
{\tt mix} &	0.8 &	oracle &	0.8 &	0.806 &	0.808 &	0.794 &	0.799 &	0.798 &	0.788 &	0.788 &	0.816 &	0.789 &	0.796 &	0.806 &	0.847 &	-- &	0.822	\\	
&	&	MoLP &	0.884 &	0.898 &	0.903 &	0.901 &	0.884 &	0.884 &	0.883 &	0.867 &	0.873 &	0.863 &	0.86 &	0.821 &	0.574 &	-- &	0.615	\\	
&	0.9 &	oracle &	0.906 &	0.904 &	0.905 &	0.9 &	0.898 &	0.908 &	0.898 &	0.895 &	0.911 &	0.9 &	0.922 &	0.918 &	0.935 &	-- &	0.917	\\	
&	&	MoLP &	0.948 &	0.959 &	0.96 &	0.961 &	0.945 &	0.952 &	0.946 &	0.942 &	0.938 &	0.933 &	0.919 &	0.883 &	0.672 &	-- &	0.692	\\	
&	0.95 &	oracle &	0.954 &	0.954 &	0.952 &	0.951 &	0.946 &	0.96 &	0.95 &	0.952 &	0.962 &	0.96 &	0.968 &	0.965 &	0.973 &	-- &	0.966	\\	
&	&	MoLP &	0.974 &	0.982 &	0.98 &	0.984 &	0.975 &	0.976 &	0.973 &	0.967 &	0.965 &	0.96 &	0.934 &	0.903 &	0.738 &	-- &	0.692	\\	
\cmidrule(lr){2-3} \cmidrule(lr){4-17} \cmidrule(lr){18-18}
&	— &	detection &	1 &	1 &	1 &	1 &	0.999 &	0.999 &	0.998 &	0.982 &	0.921 &	0.692 &	0.376 &	0.098 &	0.031 &	-- &	0.0065	\\	
\cmidrule(lr){1-3} \cmidrule(lr){4-17} \cmidrule(lr){18-18}
{\tt teeth10} &	0.8 &	oracle &	0.796 &	0.82 &	0.798 &	0.816 &	0.801 &	0.805 &	0.804 &	0.811 &	0.802 &	0.812 &	0.788 &	0.802 &	0.799 &	-- &	0.882	\\	
&	&	MoLP &	0.844 &	0.869 &	0.842 &	0.854 &	0.847 &	0.84 &	0.846 &	0.84 &	0.844 &	0.842 &	0.836 &	0.831 &	0.858 &	-- &	0.831	\\	
&	0.9 &	oracle &	0.904 &	0.928 &	0.916 &	0.927 &	0.916 &	0.916 &	0.926 &	0.923 &	0.919 &	0.92 &	0.918 &	0.922 &	0.908 &	-- &	0.964	\\	
&	&	MoLP &	0.923 &	0.933 &	0.928 &	0.929 &	0.926 &	0.912 &	0.929 &	0.924 &	0.925 &	0.919 &	0.914 &	0.92 &	0.921 &	-- &	0.91	\\	
&	0.95 &	oracle &	0.96 &	0.974 &	0.974 &	0.972 &	0.972 &	0.972 &	0.97 &	0.971 &	0.97 &	0.974 &	0.973 &	0.968 &	0.964 &	-- &	0.988	\\	
&	&	MoLP &	0.959 &	0.962 &	0.962 &	0.961 &	0.955 &	0.948 &	0.958 &	0.957 &	0.954 &	0.952 &	0.945 &	0.954 &	0.956 &	-- &	0.945	\\	
\cmidrule(lr){2-3} \cmidrule(lr){4-17} \cmidrule(lr){18-18}
&	— &	detection &	0.918 &	0.902 &	0.908 &	0.913 &	0.921 &	0.926 &	0.92 &	0.91 &	0.908 &	0.901 &	0.916 &	0.915 &	0.921 &	-- &	0.526	\\	
\cmidrule(lr){1-3} \cmidrule(lr){4-17} \cmidrule(lr){18-18}
{\tt stairs10} &	0.8 &	oracle &	0.814 &	0.822 &	0.82 &	0.828 &	0.818 &	0.814 &	0.82 &	0.833 &	0.826 &	0.83 &	0.816 &	0.826 &	0.83 &	0.823 &	0.926	\\	
&	&	MoLP &	0.893 &	0.915 &	0.913 &	0.904 &	0.909 &	0.906 &	0.903 &	0.907 &	0.906 &	0.904 &	0.905 &	0.908 &	0.913 &	0.907 &	0.975	\\	
&	0.9 &	oracle &	0.912 &	0.924 &	0.93 &	0.932 &	0.933 &	0.93 &	0.92 &	0.936 &	0.928 &	0.927 &	0.92 &	0.926 &	0.938 &	0.926 &	0.974	\\	
&	&	MoLP &	0.951 &	0.965 &	0.97 &	0.966 &	0.963 &	0.966 &	0.963 &	0.971 &	0.97 &	0.957 &	0.959 &	0.966 &	0.97 &	0.963 &	0.987	\\	
&	0.95 &	oracle &	0.962 &	0.972 &	0.976 &	0.979 &	0.976 &	0.97 &	0.974 &	0.974 &	0.967 &	0.97 &	0.97 &	0.975 &	0.974 &	0.967 &	0.987	\\	
&	&	MoLP &	0.981 &	0.982 &	0.991 &	0.986 &	0.984 &	0.986 &	0.985 &	0.984 &	0.99 &	0.98 &	0.979 &	0.982 &	0.983 &	0.984 &	0.989	\\	
\cmidrule(lr){2-3} \cmidrule(lr){4-17} \cmidrule(lr){18-18}
&	— &	detection &	0.998 &	0.998 &	0.998 &	0.996 &	0.998 &	0.998 &	0.998 &	1 &	1 &	0.996 &	0.998 &	0.997 &	0.998 &	0.998 &	0.976	\\	
\bottomrule
\end{tabular}}
\end{table}

\clearpage

\subsection{$t_5$-distributed errors}
\label{sec:sim:t}

\subsubsection{Bootstrap CIs constructed with the oracle estimators in~\eqref{eq:cp:tilde}}

\begin{figure}[htbp]
\centering
\includegraphics[width=.8\textwidth]{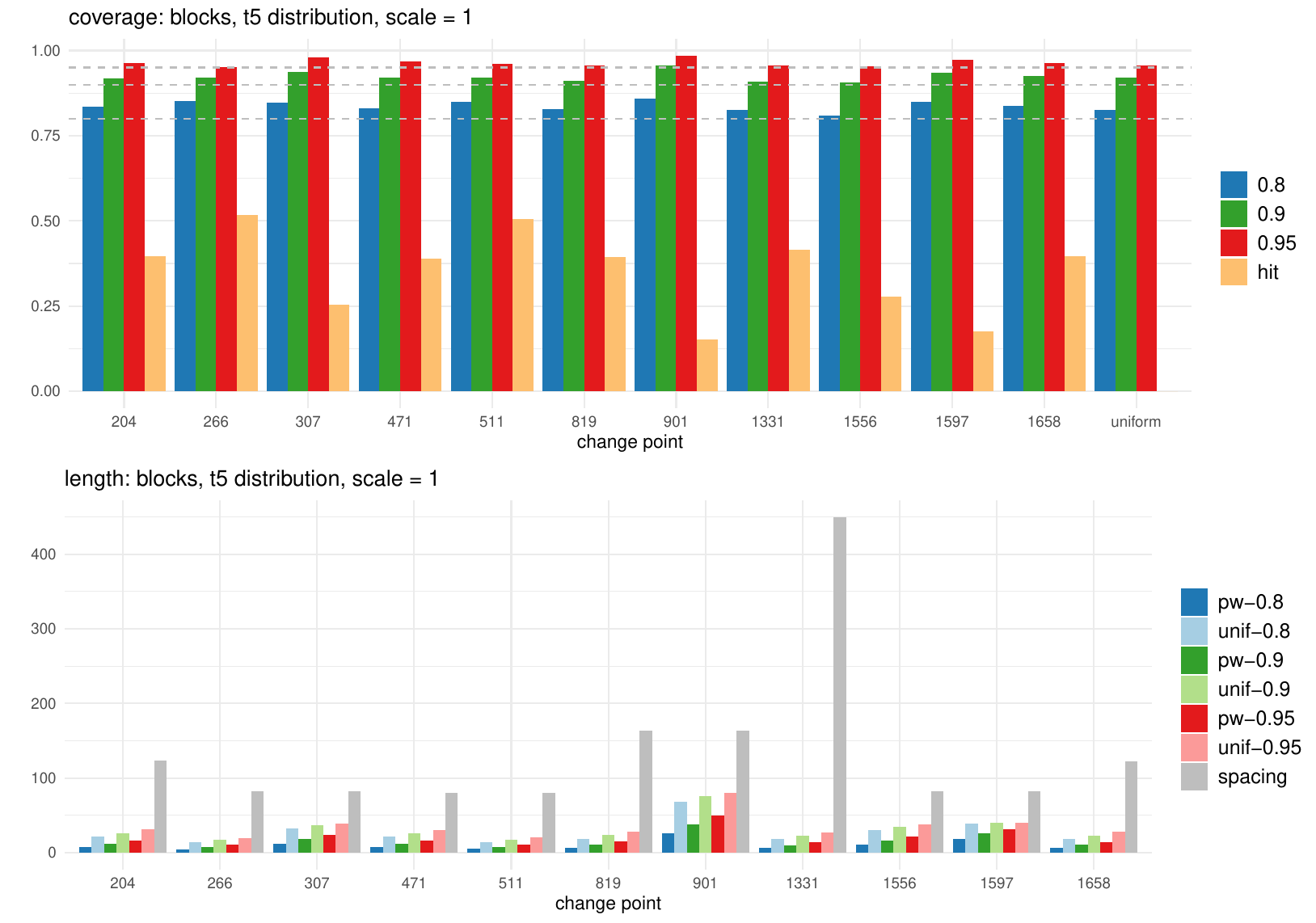}
\caption{{\tt blocks} with $\vartheta = 1$.}
\label{fig:t:blocks:one}
\end{figure}


\begin{figure}[htbp]
\centering
\includegraphics[width=.8\textwidth]{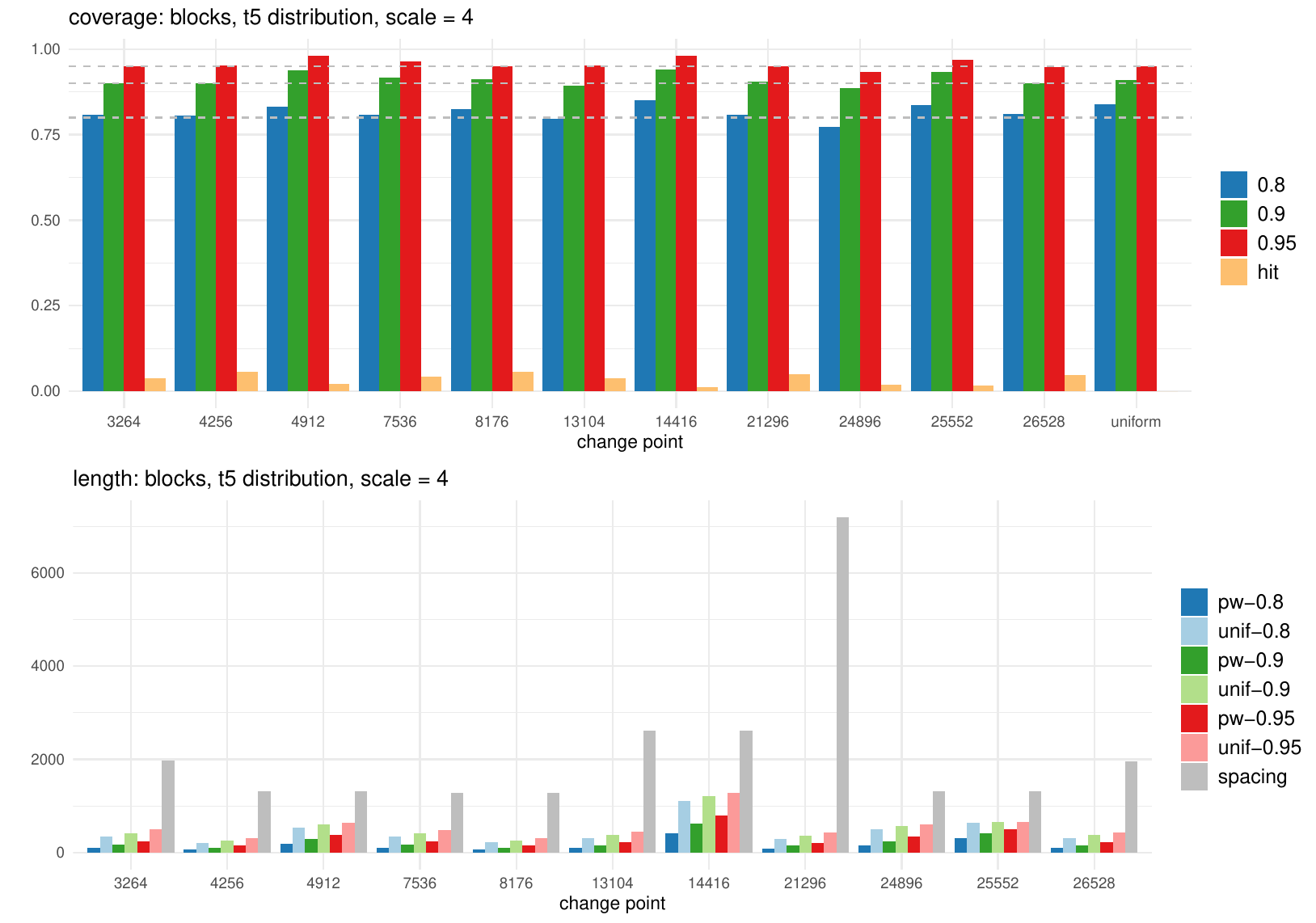}
\caption{{\tt blocks} with $\vartheta = 4$.}
\label{fig:t:blocks:four}
\end{figure}

\begin{figure}[htbp]
\centering
\includegraphics[width=.8\textwidth]{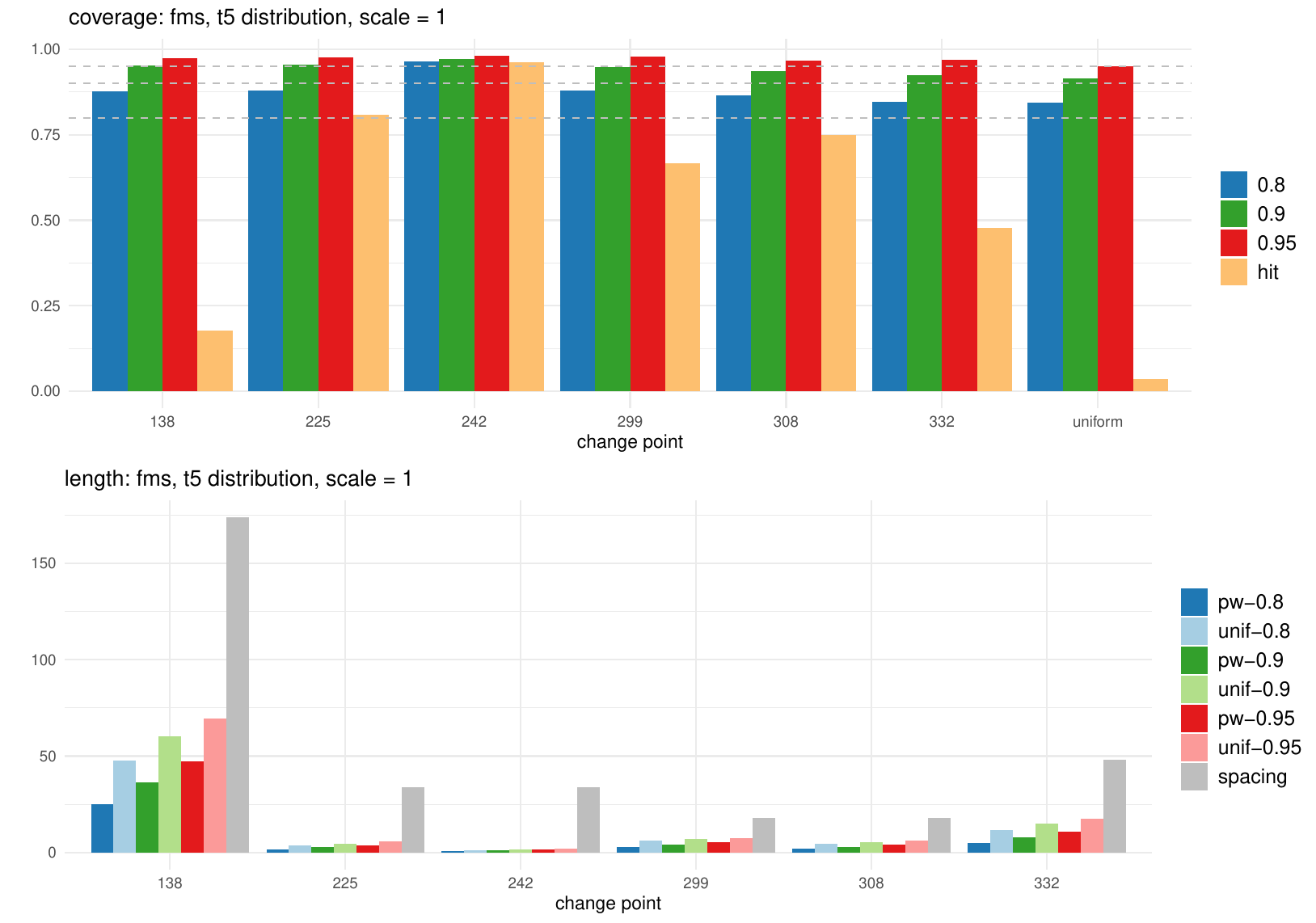}
\caption{Bootstrap CIs constructed with the oracle estimators in~\eqref{eq:cp:tilde}:
{\tt fms} with $\vartheta = 1$.}
\label{fig:t:fms:one}
\end{figure}


\begin{figure}[htbp]
\centering
\includegraphics[width=.8\textwidth]{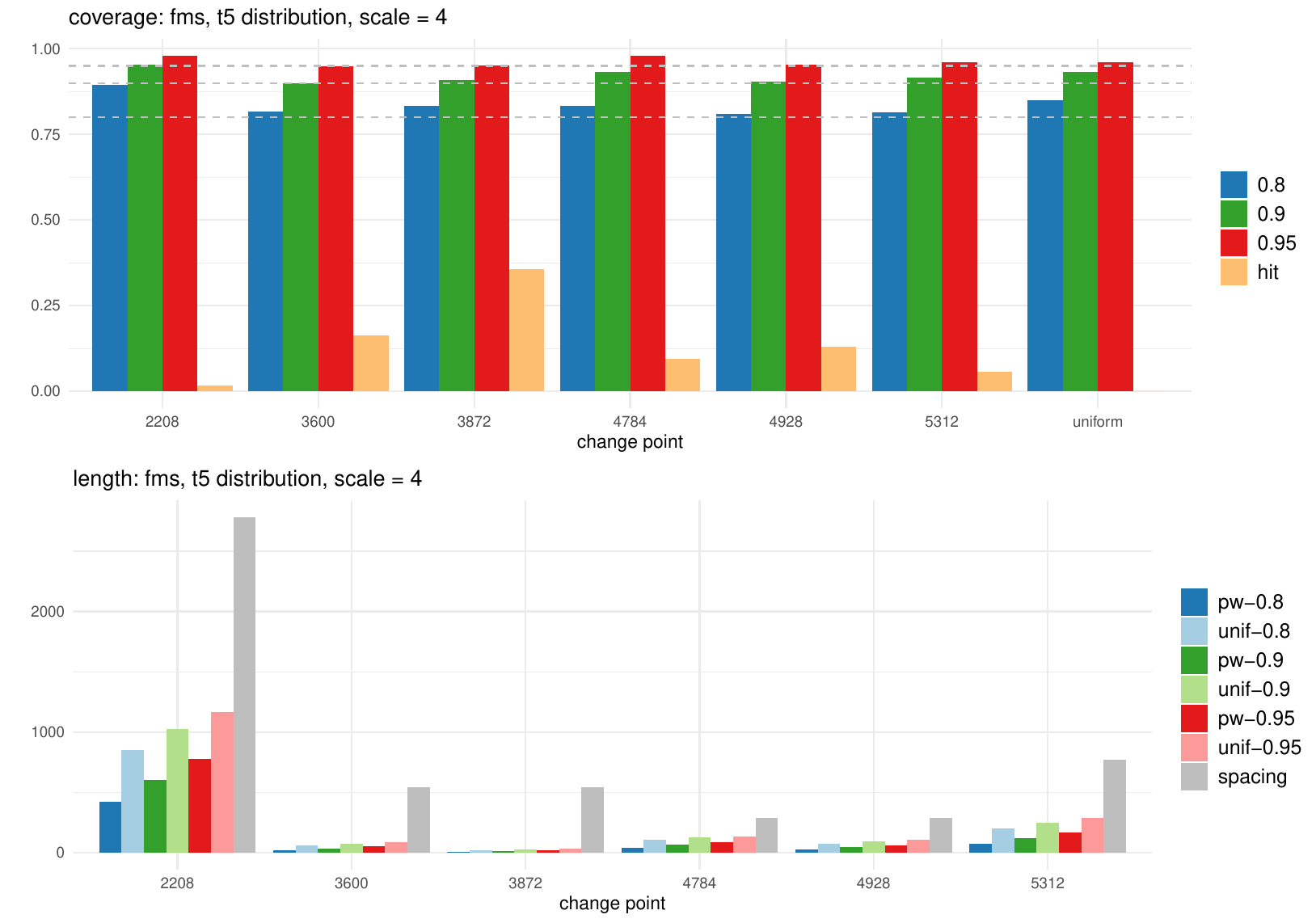}
\caption{Bootstrap CIs constructed with the oracle estimators in~\eqref{eq:cp:tilde}:
{\tt fms} with $\vartheta = 4$.}
\label{fig:t:fms:four}
\end{figure}

\begin{figure}[htbp]
\centering
\includegraphics[width=.8\textwidth]{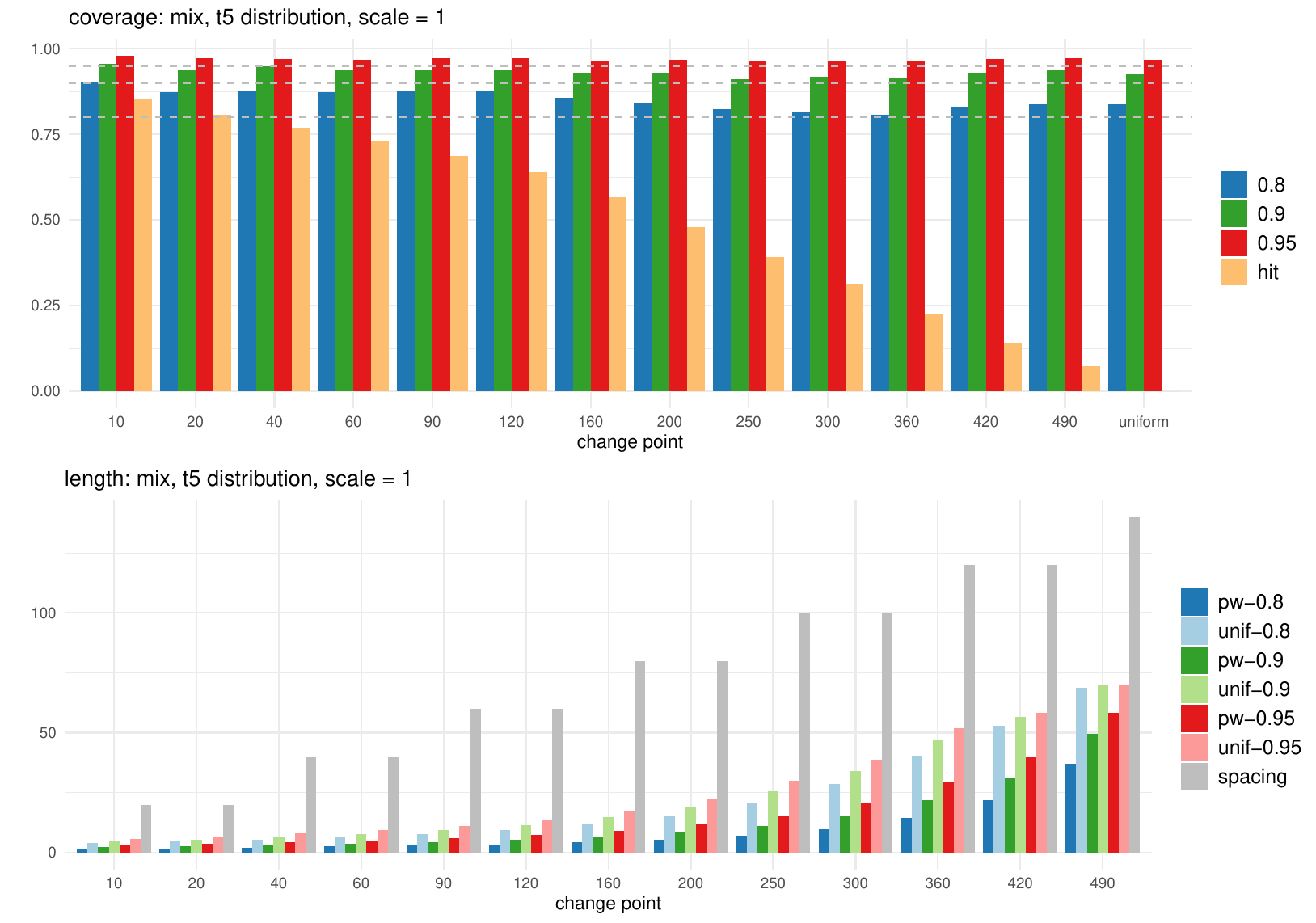}
\caption{Bootstrap CIs constructed with the oracle estimators in~\eqref{eq:cp:tilde}: {\tt mix} with $\vartheta =1$.}
\label{fig:t:mix:one}
\end{figure}

\begin{figure}[htbp]
\centering
\includegraphics[width=.8\textwidth]{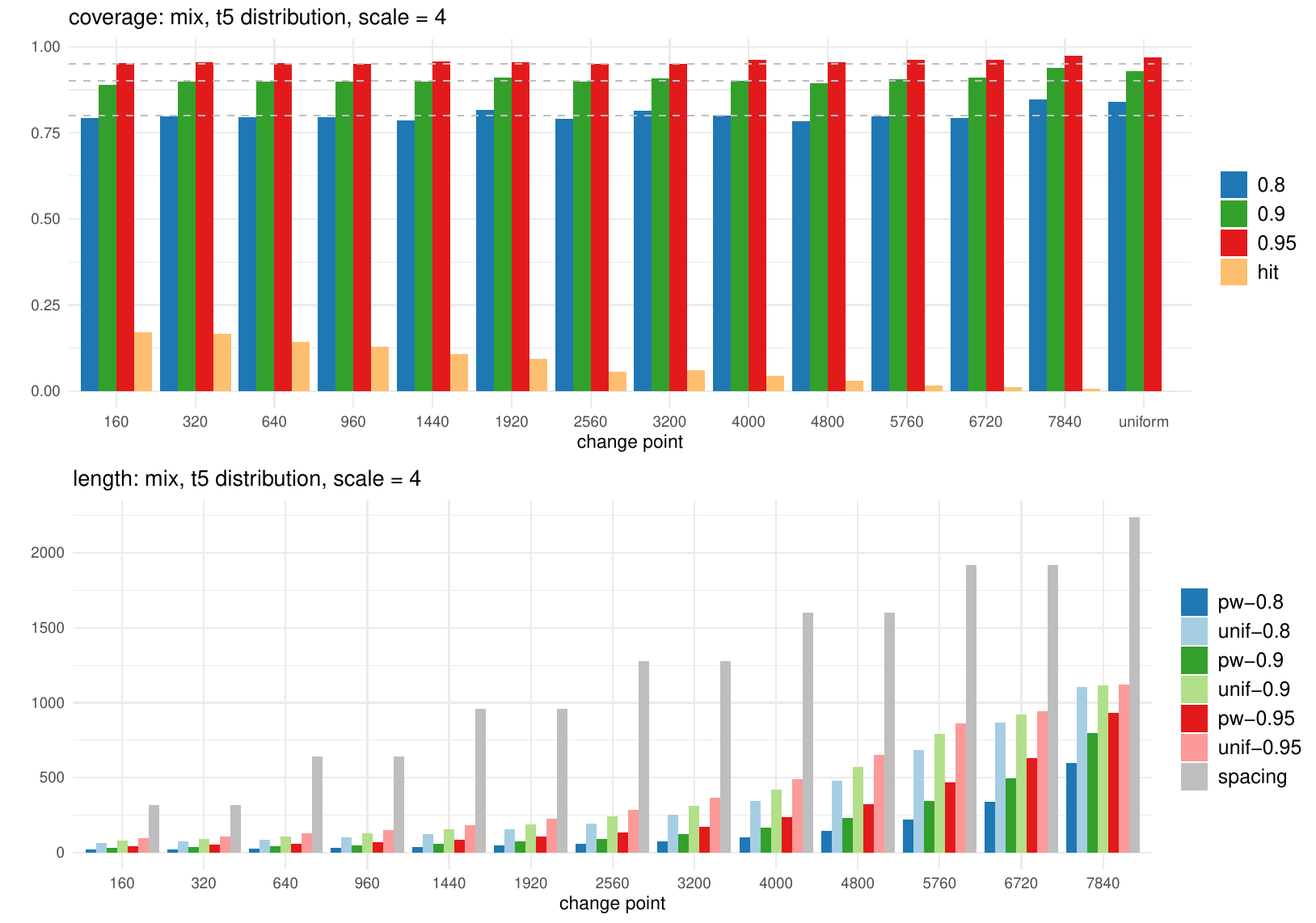}
\caption{Bootstrap CIs constructed with the oracle estimators in~\eqref{eq:cp:tilde}: {\tt mix} with $\vartheta = 4$.}
\label{fig:t:mix:four}
\end{figure}

\begin{figure}[htbp]
\centering
\includegraphics[width=.8\textwidth]{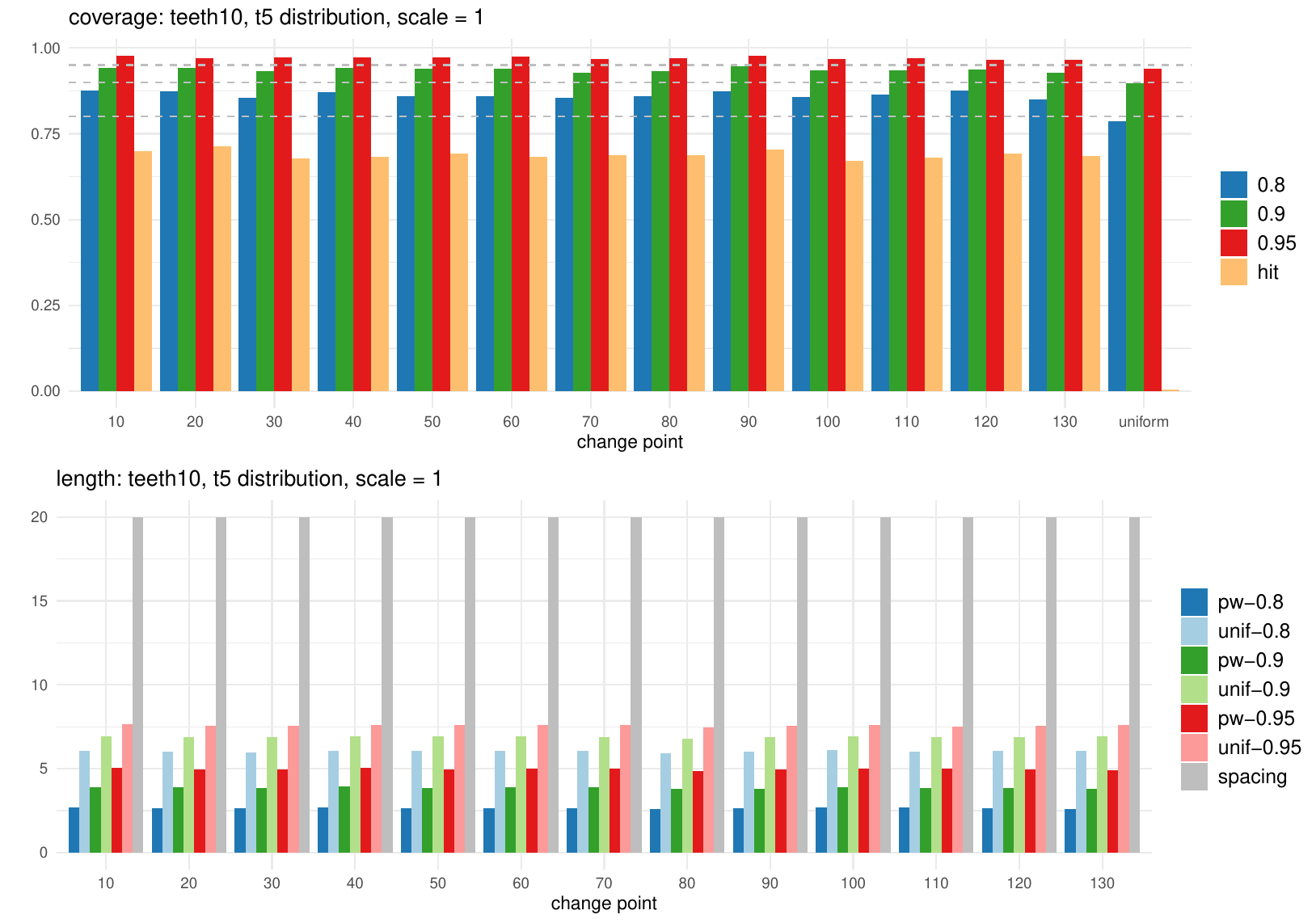}
\caption{Bootstrap CIs constructed with theoracle estimators in~\eqref{eq:cp:tilde}: {\tt teeth10} with $\vartheta = 1$.}
\label{fig:t:teeth10:one}
\end{figure}
%

\begin{figure}[htbp]
\centering
\includegraphics[width=.8\textwidth]{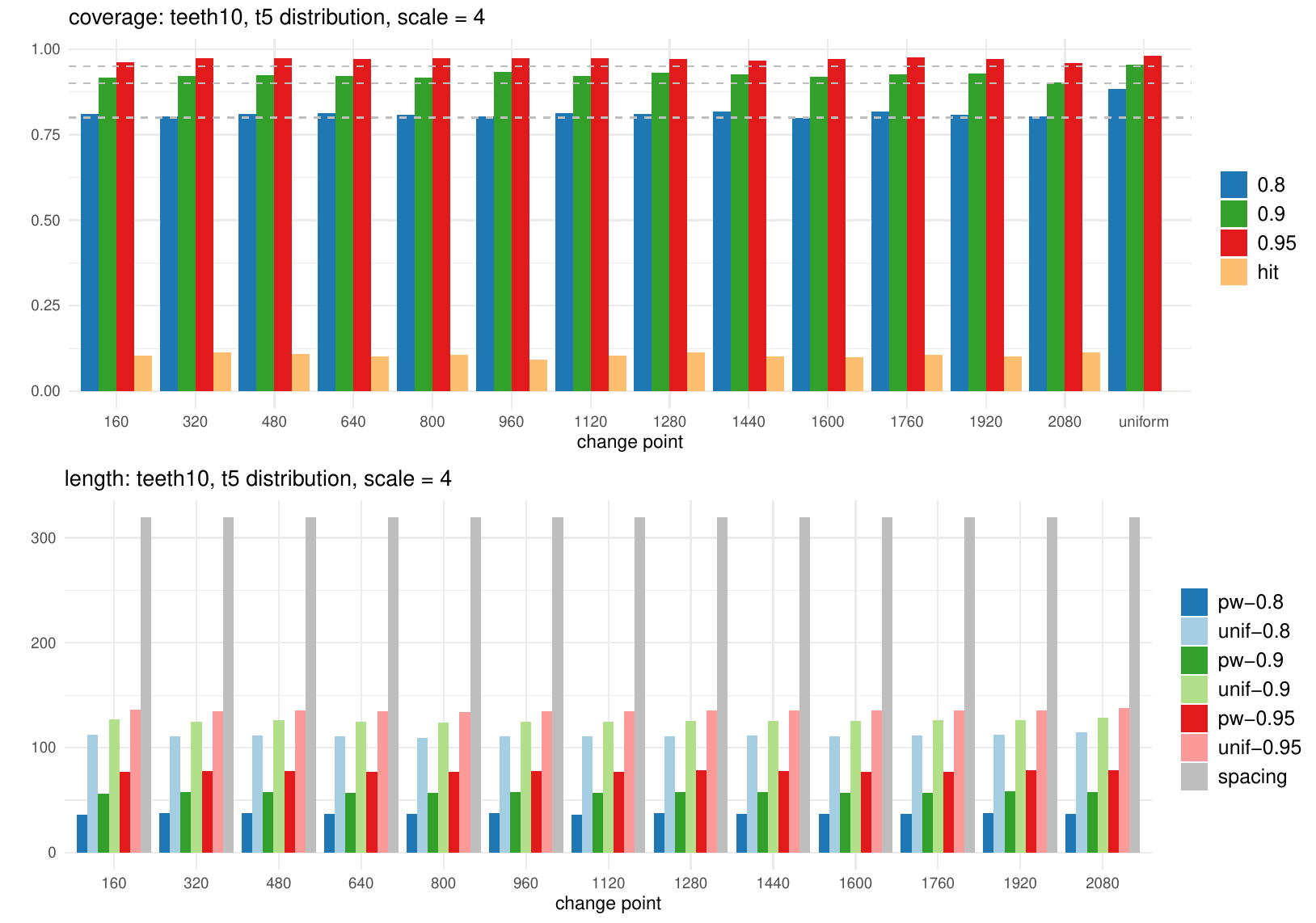}
\caption{Bootstrap CIs constructed with the oracle estimators in~\eqref{eq:cp:tilde}: {\tt teeth10} with $\vartheta = 4$.}
\label{fig:t:teeth10:four}
\end{figure}
\begin{figure}[htbp]
\centering
\includegraphics[width=.8\textwidth]{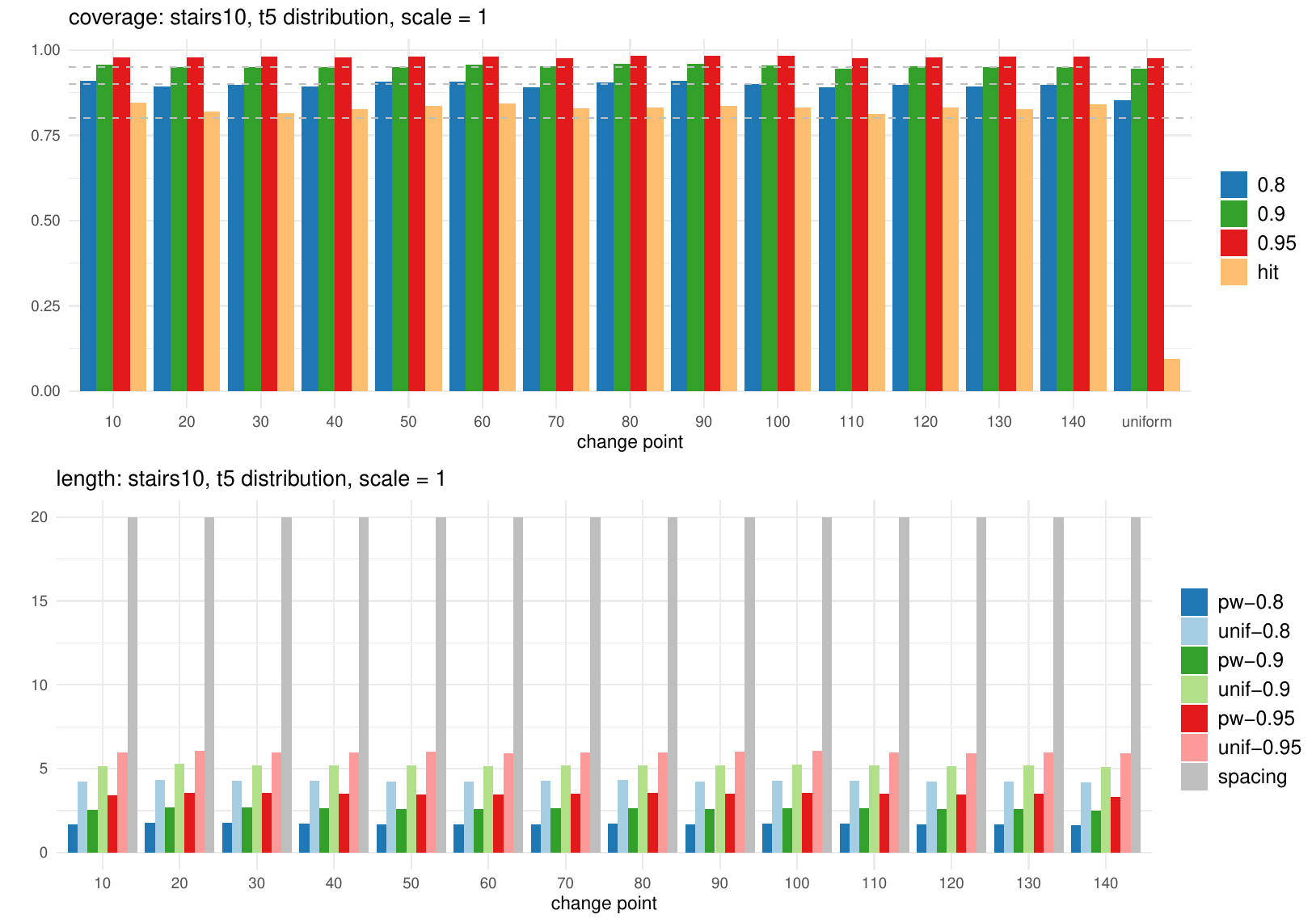}
\caption{Bootstrap CIs constructed with the oracle estimators in~\eqref{eq:cp:tilde}: {\tt stairs10} with $\vartheta = 1$.}
\label{fig:t:stairs10:one}
\end{figure}
%
%
\begin{figure}[htbp]
\centering
\includegraphics[width=.8\textwidth]{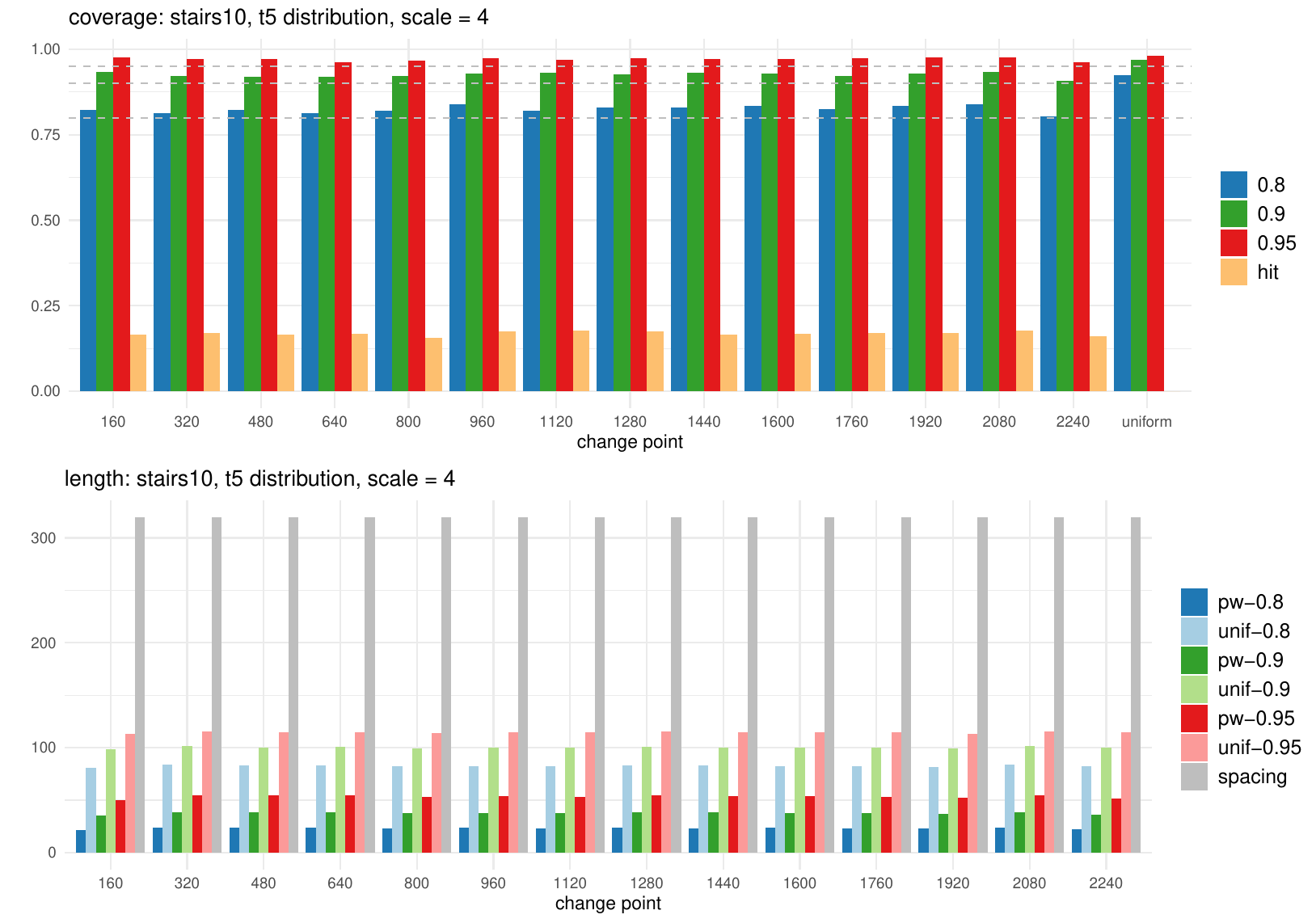}
\caption{Bootstrap CIs constructed with the oracle estimators in~\eqref{eq:cp:tilde}: {\tt stairs10} with $\vartheta = 4$.}
\label{fig:t:stairs10:four}
\end{figure}

\clearpage

\subsubsection{Bootstrap CIs constructed with model selection}

\begin{figure}[htbp]
\centering
\includegraphics[width=.8\textwidth]{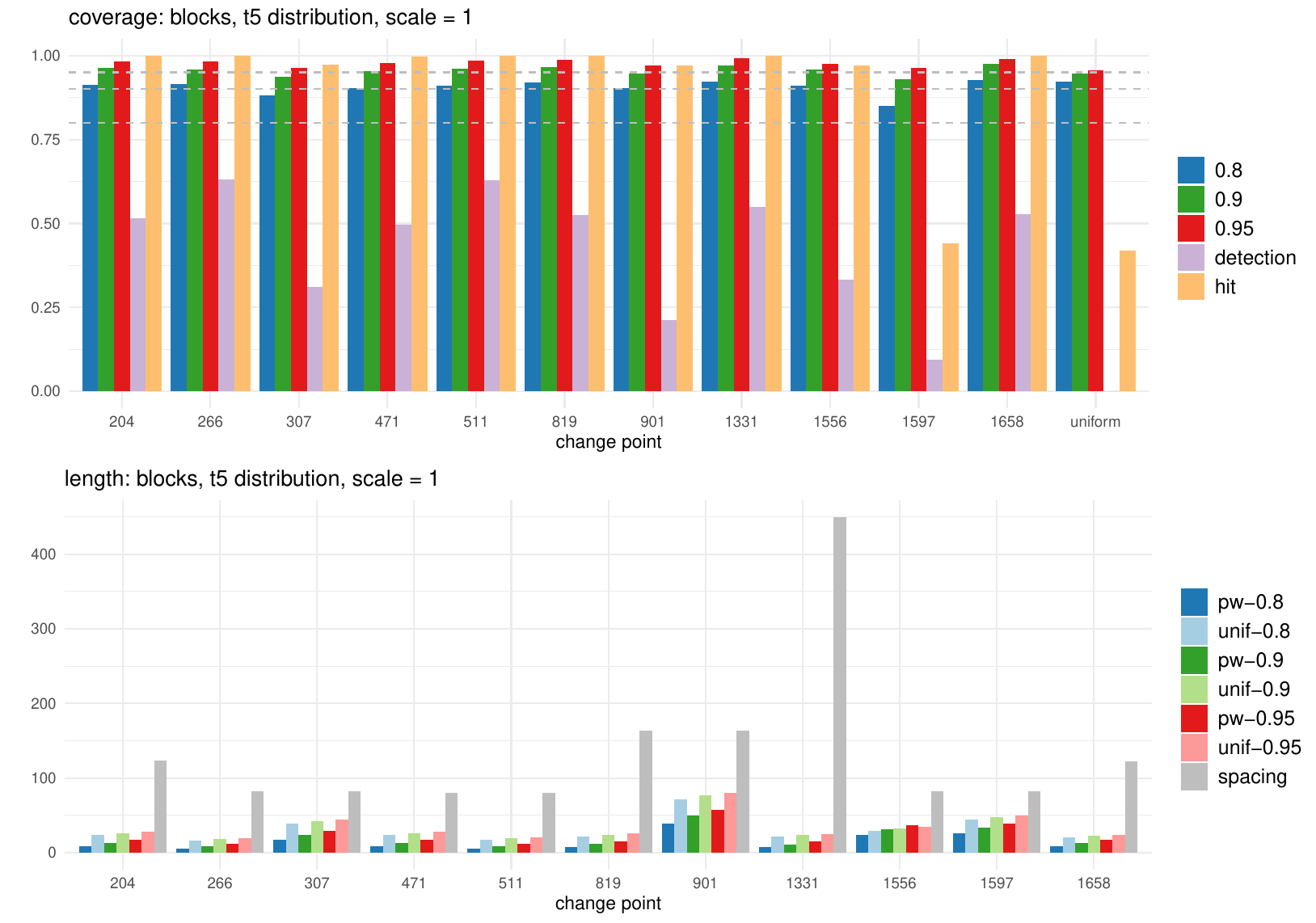}
\caption{Bootstrap CIs constructed with model selection: {\tt blocks} with $\vartheta = 1$.}
\label{fig:full:t:blocks:one}
\end{figure}


\begin{figure}[htbp]
\centering
\includegraphics[width=.8\textwidth]{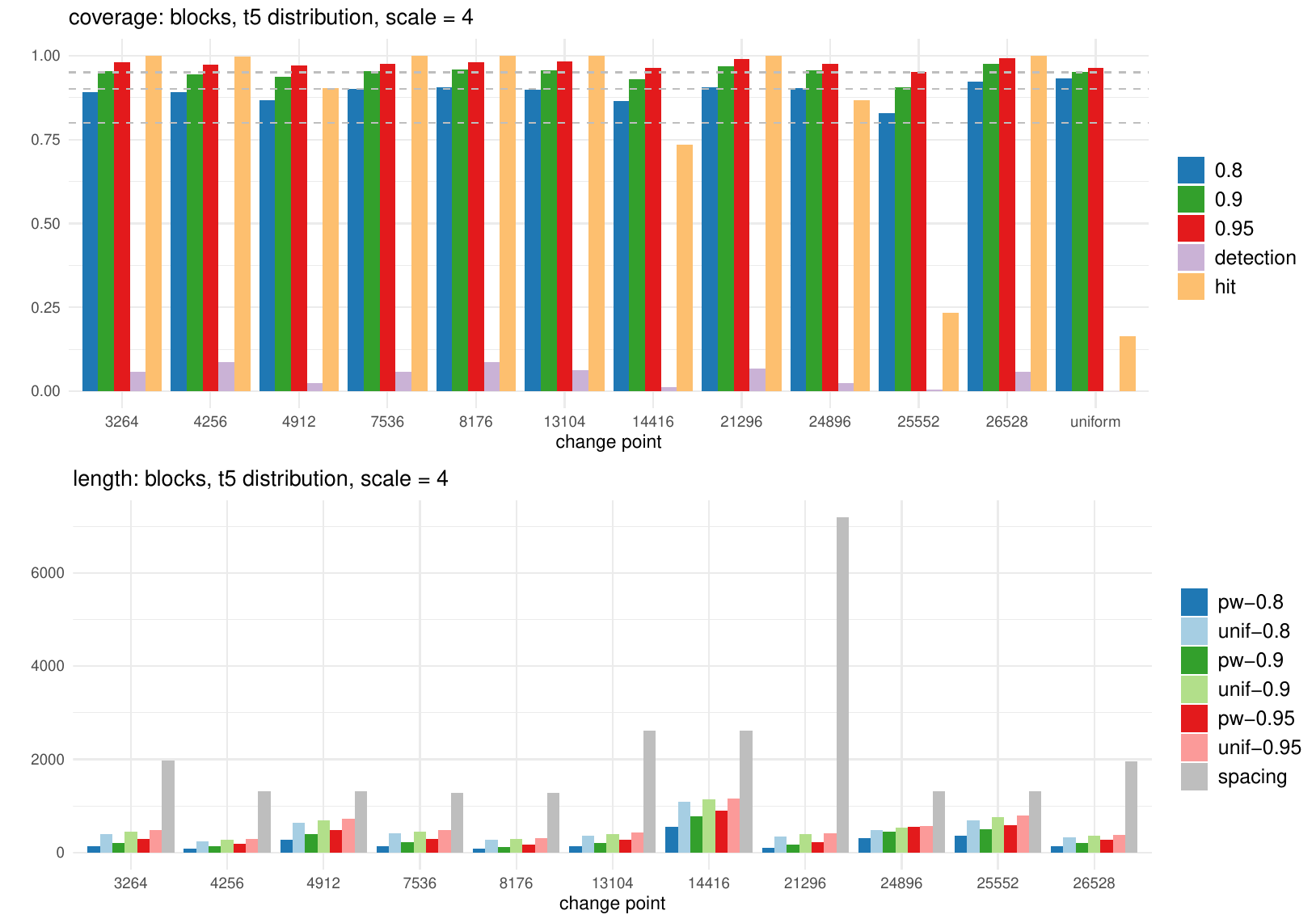}
\caption{Bootstrap CIs constructed with model selection: {\tt blocks} with $\vartheta = 4$.}
\label{fig:full:t:blocks:four}
\end{figure}

\begin{figure}[htbp]
\centering
\includegraphics[width=.8\textwidth]{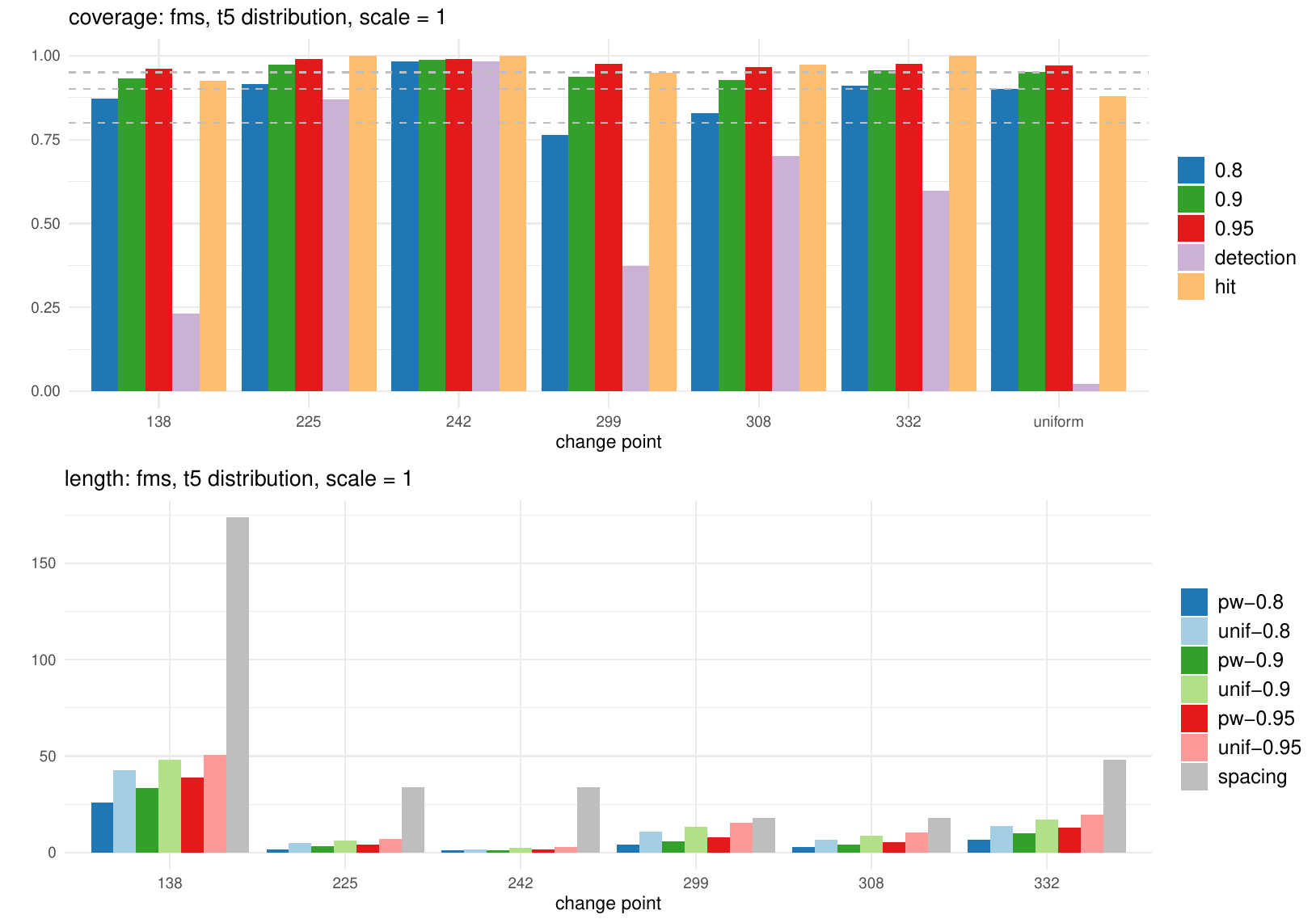}
\caption{Bootstrap CIs constructed with model selection: {\tt fms} with $\vartheta = 1$.}
\label{fig:full:t:fms:one}
\end{figure}


\begin{figure}[htbp]
\centering
\includegraphics[width=.8\textwidth]{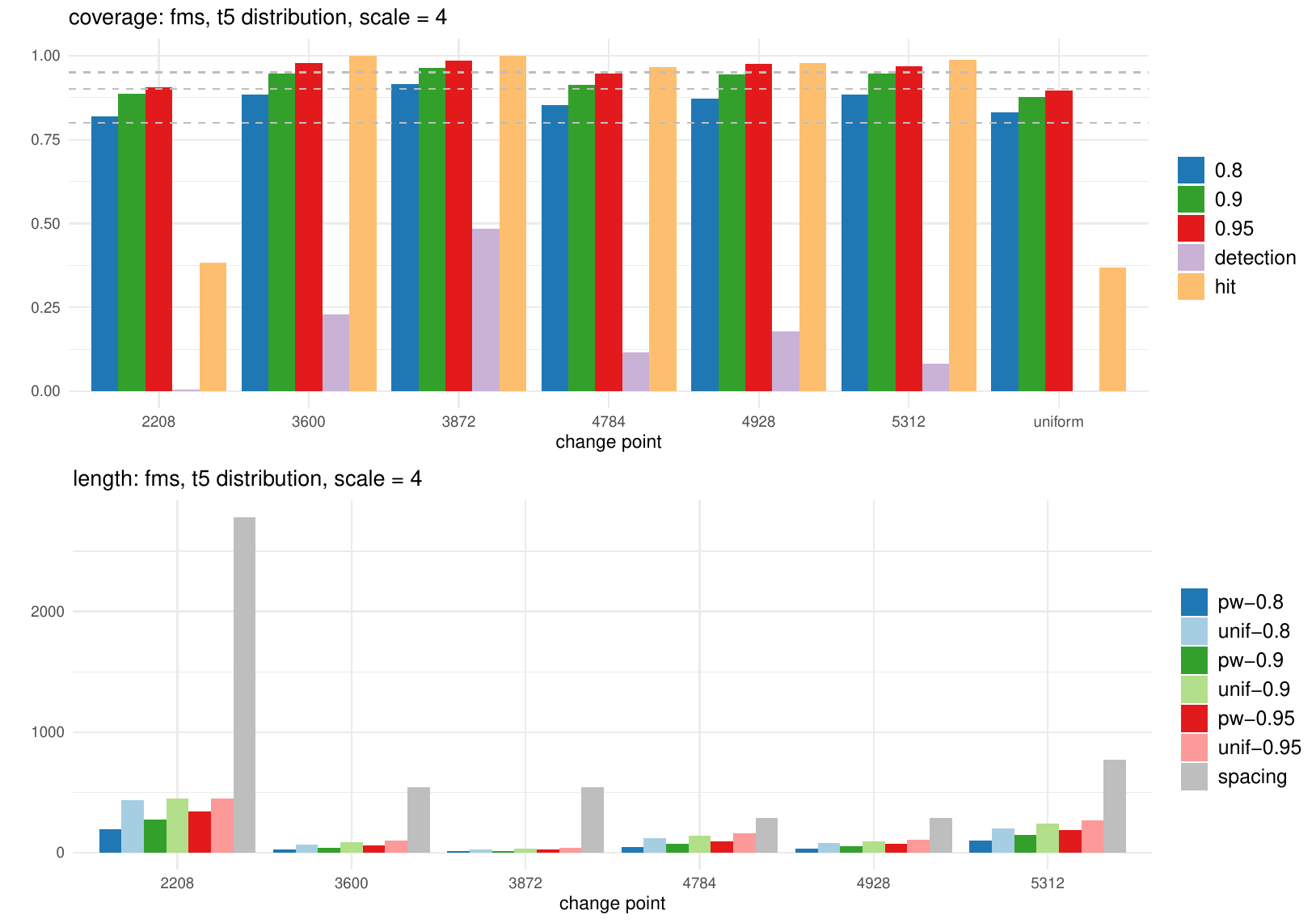}
\caption{Bootstrap CIs constructed with model selection: {\tt fms} with $\vartheta = 4$.}
\label{fig:full:t:fms:four}
\end{figure}

\begin{figure}[htbp]
\centering
\includegraphics[width=.8\textwidth]{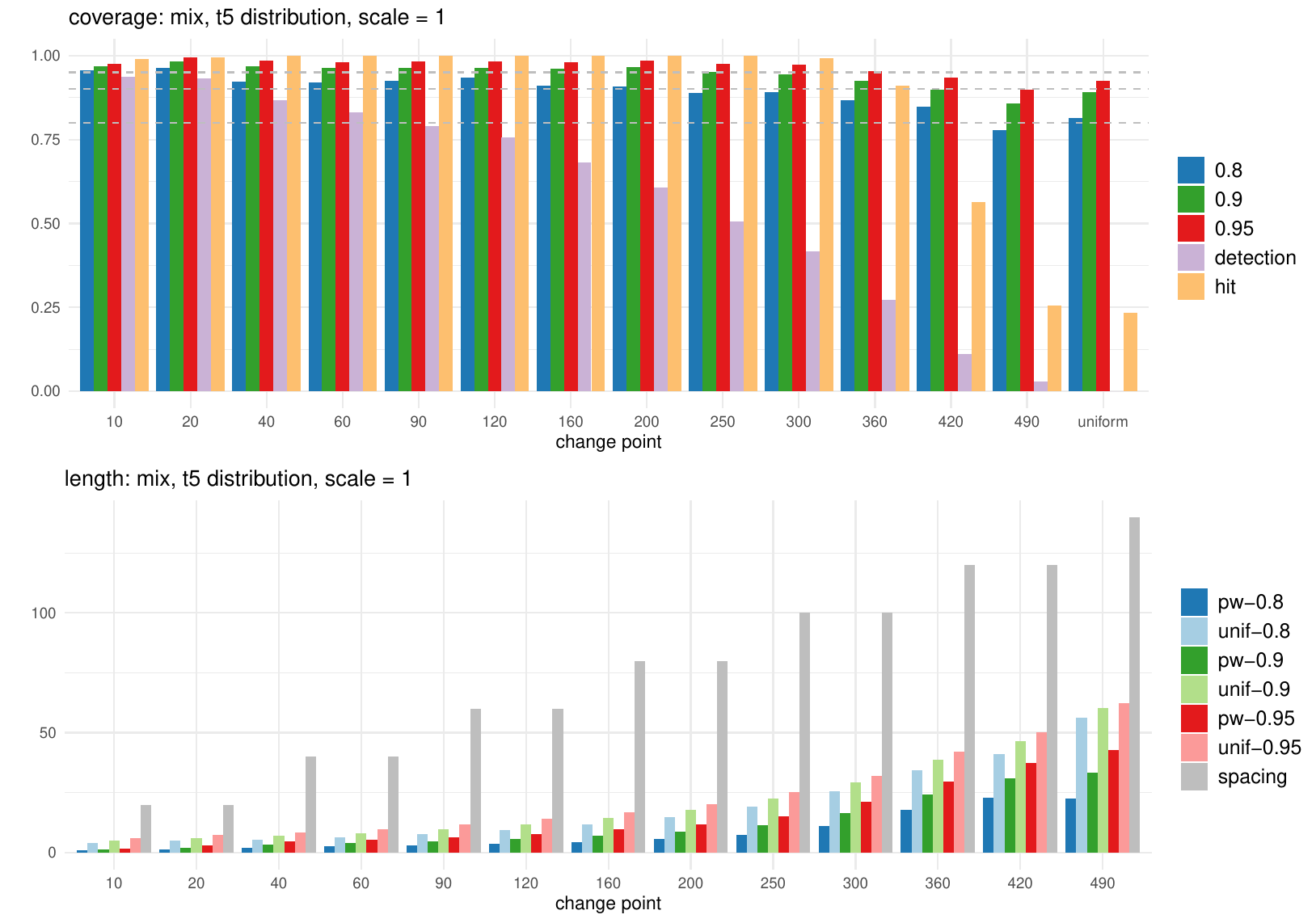}
\caption{Bootstrap CIs constructed with model selection: {\tt mix} with $\vartheta = 1$.}
\label{fig:full:t:mix:one}
\end{figure}

\begin{figure}[htbp]
\centering
\includegraphics[width=.8\textwidth]{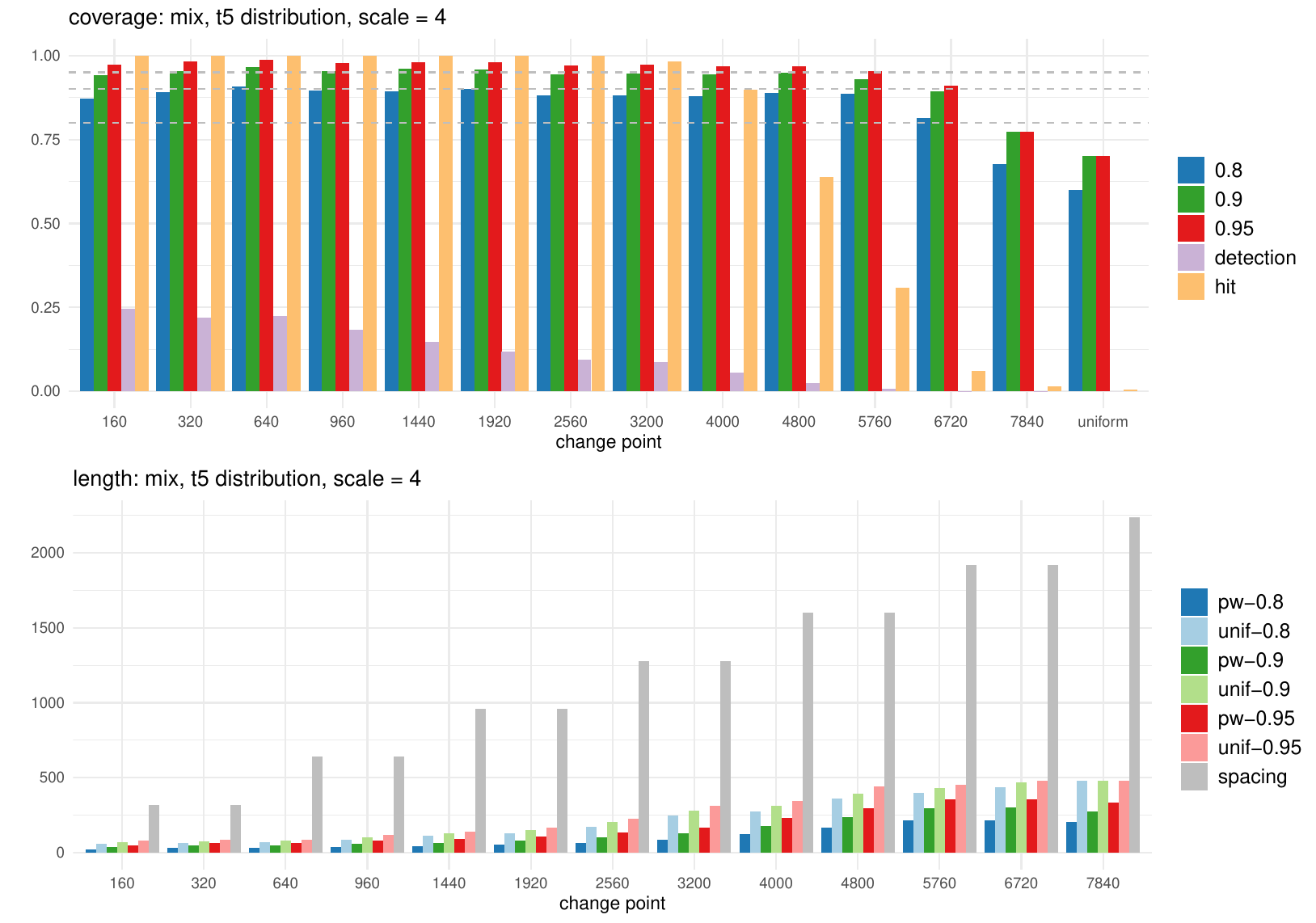}
\caption{Bootstrap CIs constructed with model selection: {\tt mix} with $\vartheta = 4$.}
\label{fig:full:t:mix:four}
\end{figure}

\begin{figure}[htbp]
\centering
\includegraphics[width=.8\textwidth]{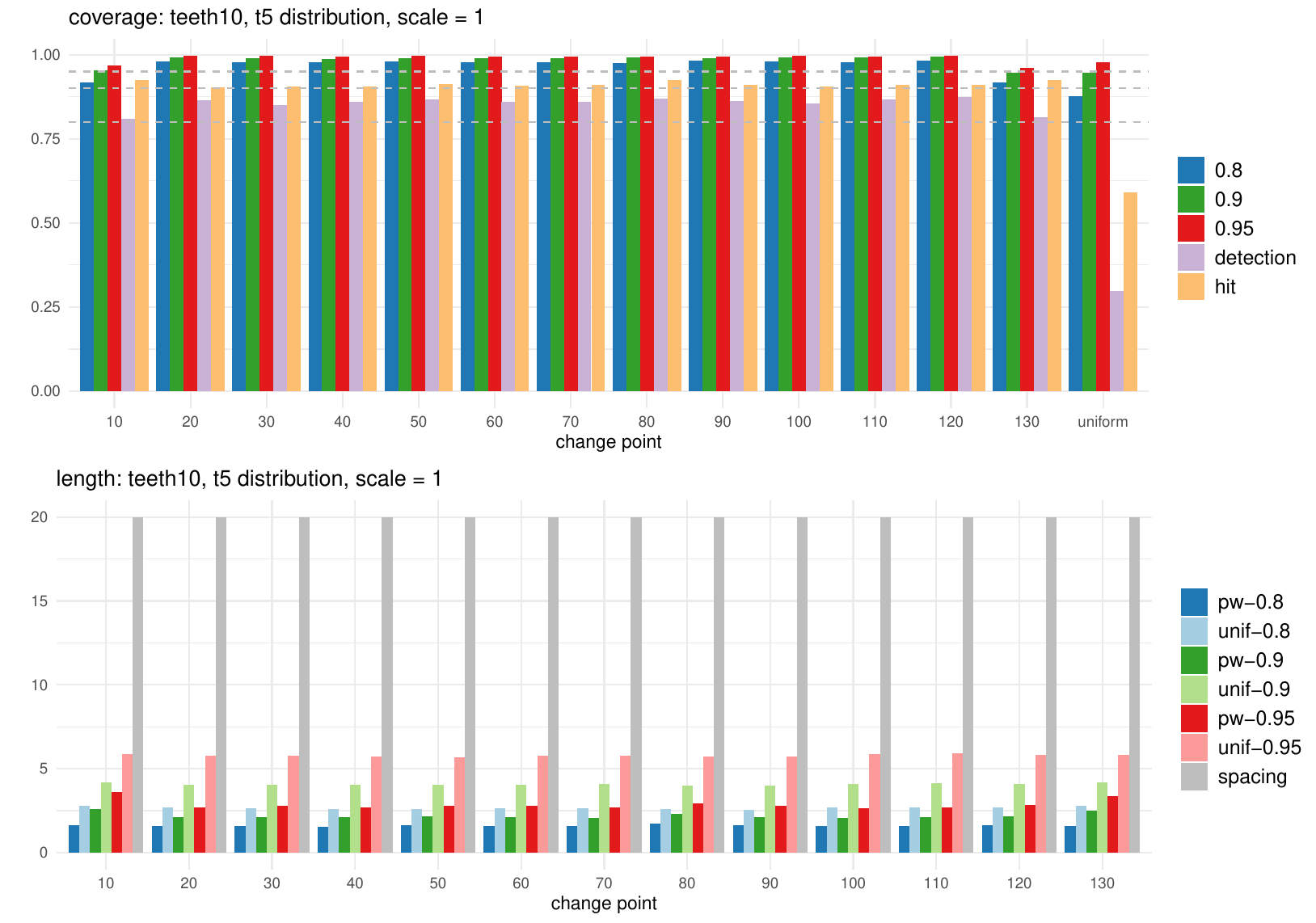}
\caption{Bootstrap CIs constructed with model selection: {\tt teeth10} with $\vartheta = 1$.}
\label{fig:full:t:teeth10:one}
\end{figure}
%
%
\begin{figure}[htbp]
\centering
\includegraphics[width=.8\textwidth]{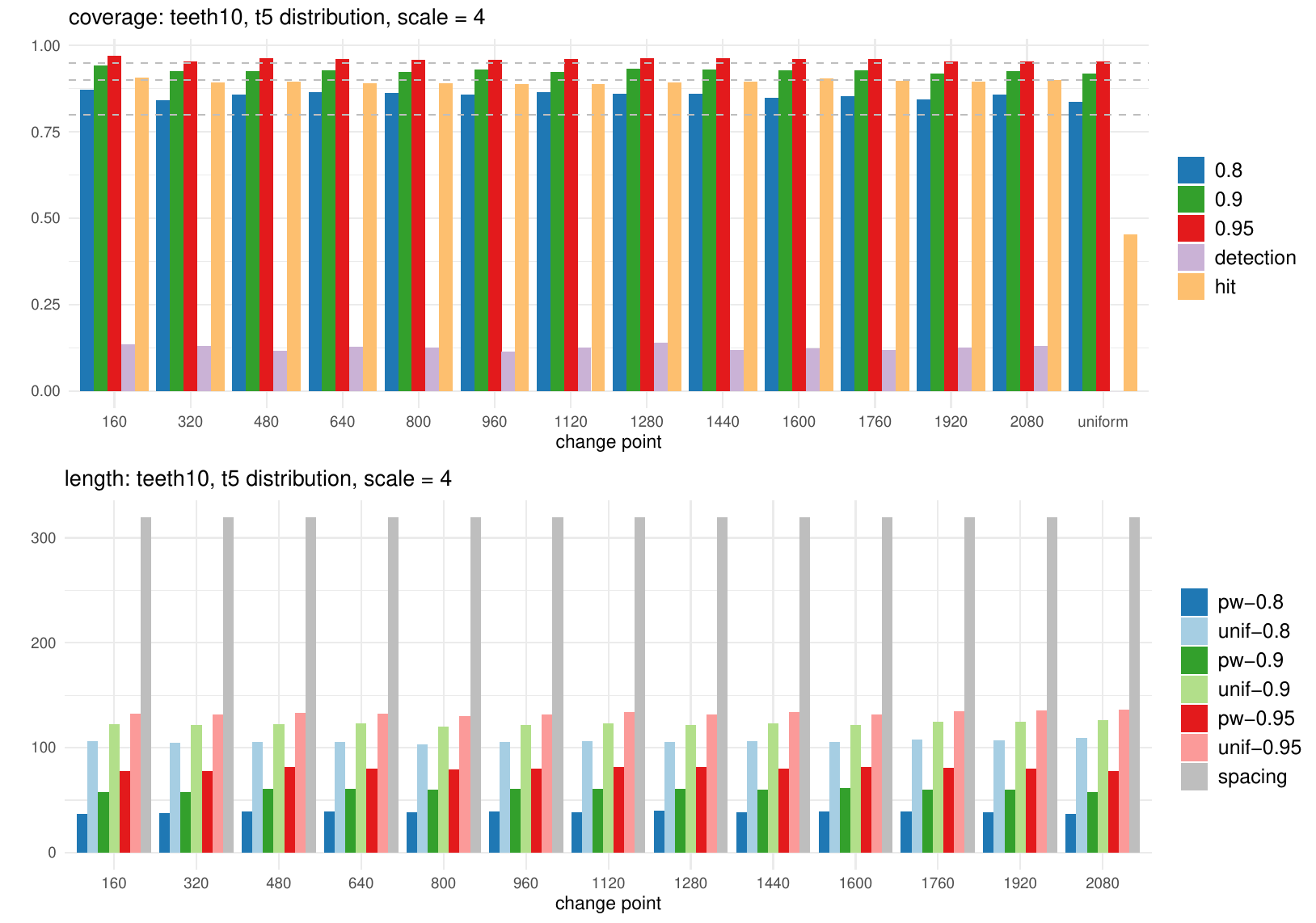}
\caption{Bootstrap CIs constructed with model selection: {\tt teeth10} with $\vartheta = 4$.}
\label{fig:full:t:teeth10:four}
\end{figure}

\begin{figure}[htbp]
\centering
\includegraphics[width=.8\textwidth]{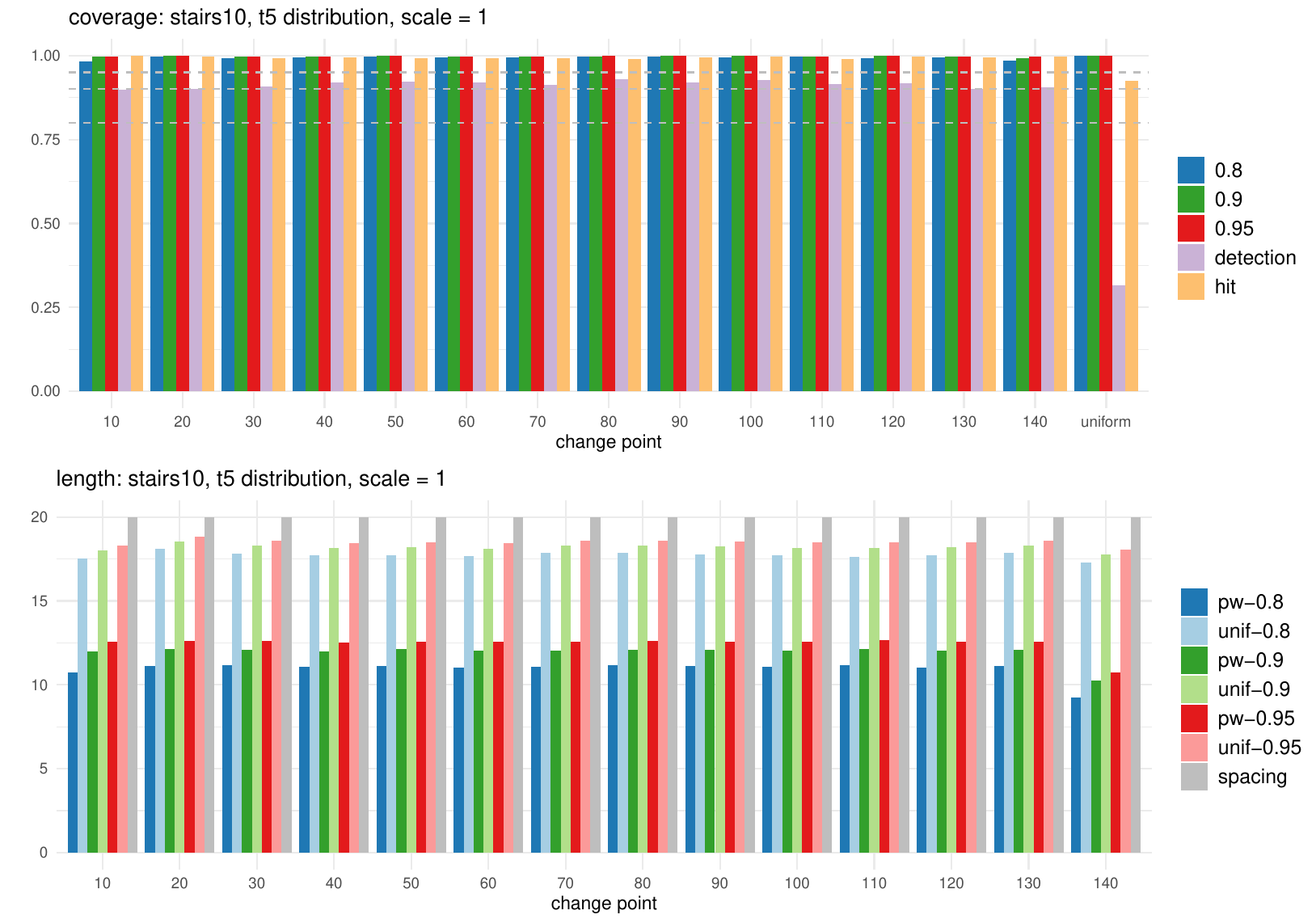}
\caption{Bootstrap CIs constructed with model selection: {\tt stairs10} with $\vartheta = 1$.}
\label{fig:full:t:stairs10:one}
\end{figure}
%
%
\begin{figure}[htbp]
\centering
\includegraphics[width=.8\textwidth]{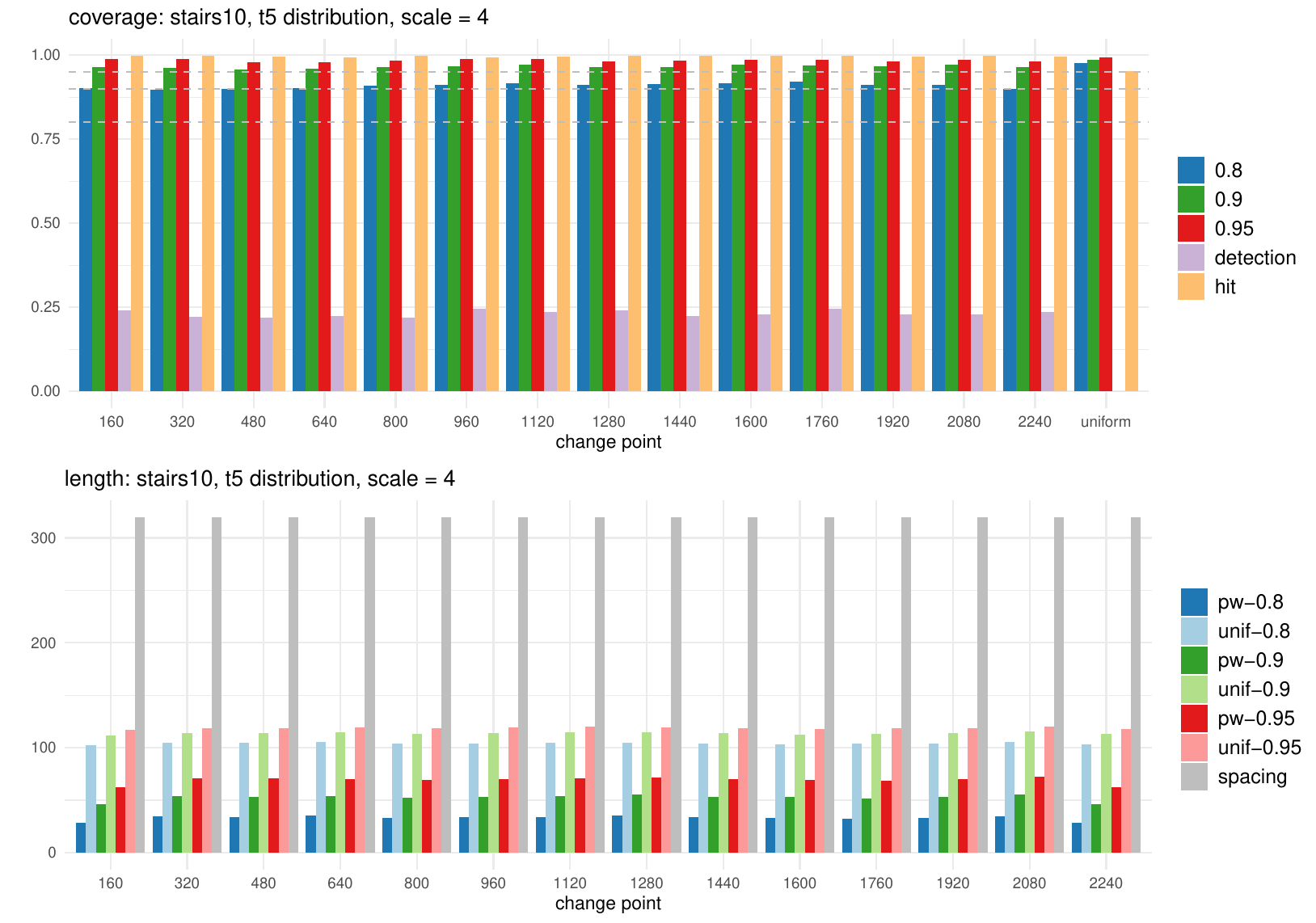}
\caption{Bootstrap CIs constructed with model selection: {\tt stairs10} with $\vartheta = 4$.}
\label{fig:full:t:stairs10:four}
\end{figure}

\clearpage

\subsubsection{Comparison of coverage}

\begin{table}[ht]
\caption{Average coverage of the bootstrap CIs constructed with the oracle estimators
and the estimators obtained from MoLP, when $\vartheta = 1$.
We also report the proportion of realisations where individual change points are detected (by MoLP)
and where all change points are correctly detected.}
\label{table:cov:t:1}
\centering
\resizebox{\columnwidth}{!}{
\begin{tabular}{ccc ccc ccc ccc ccc cc c}
\hline
\hline
test signal &	$1 - \alpha$ &	estimator &	$\cp_1$ &	$\cp_2$ &	$\cp_3$ &	$\cp_4$ &	$\cp_5$ &	$\cp_6$ &	$\cp_7$ &	$\cp_8$ &	$\cp_9$ &	$\cp_{10}$ &	$\cp_{11}$ &	$\cp_{12}$ &	$\cp_{13}$ &	$\cp_{14}$ &	uniform	\\	
\cmidrule(lr){1-3} \cmidrule(lr){4-17} \cmidrule(lr){18-18}
{\tt blocks} &	0.8 &	oracle &	0.835 &	0.851 &	0.847 &	0.831 &	0.848 &	0.828 &	0.858 &	0.826 &	0.81 &	0.849 &	0.838 &	-- &	-- &	-- &	0.826	\\	
&	&	MoLP &	0.913 &	0.914 &	0.88 &	0.904 &	0.91 &	0.919 &	0.903 &	0.922 &	0.909 &	0.851 &	0.928 &	-- &	-- &	-- &	0.923	\\	
&	0.9 &	oracle &	0.918 &	0.92 &	0.936 &	0.92 &	0.92 &	0.91 &	0.955 &	0.91 &	0.907 &	0.934 &	0.925 &	-- &	-- &	-- &	0.92	\\	
&	&	MoLP &	0.963 &	0.959 &	0.938 &	0.954 &	0.961 &	0.967 &	0.946 &	0.97 &	0.958 &	0.93 &	0.974 &	-- &	-- &	-- &	0.947	\\	
&	0.95 &	oracle &	0.964 &	0.952 &	0.98 &	0.968 &	0.96 &	0.955 &	0.985 &	0.956 &	0.954 &	0.973 &	0.963 &	-- &	-- &	-- &	0.956	\\	
&	&	MoLP &	0.983 &	0.982 &	0.963 &	0.979 &	0.984 &	0.988 &	0.971 &	0.992 &	0.975 &	0.963 &	0.99 &	-- &	-- &	-- &	0.957	\\	\cmidrule(lr){2-3} \cmidrule(lr){4-17} \cmidrule(lr){18-18}
&	— &	detection &	1 &	0.999 &	0.974 &	0.998 &	0.999 &	1 &	0.971 &	1 &	0.97 &	0.441 &	1 &	-- &	-- &	-- &	0.419	\\	
\cmidrule(lr){1-3} \cmidrule(lr){4-17} \cmidrule(lr){18-18}
{\tt fms} &	0.8 &	oracle &	0.876 &	0.878 &	0.964 &	0.88 &	0.866 &	0.846 &	-- &	-- &	-- &	-- &	-- &	-- &	-- &	-- &	0.843	\\	
&	&	MoLP &	0.873 &	0.915 &	0.984 &	0.763 &	0.828 &	0.91 &	-- &	-- &	-- &	-- &	-- &	-- &	-- &	-- &	0.9	\\	
&	0.9 &	oracle &	0.954 &	0.956 &	0.97 &	0.948 &	0.935 &	0.924 &	-- &	-- &	-- &	-- &	-- &	-- &	-- &	-- &	0.916	\\	
&	&	MoLP &	0.933 &	0.974 &	0.986 &	0.936 &	0.927 &	0.956 &	-- &	-- &	-- &	-- &	-- &	-- &	-- &	-- &	0.95	\\	
&	0.95 &	oracle &	0.974 &	0.976 &	0.982 &	0.979 &	0.966 &	0.968 &	-- &	-- &	-- &	-- &	-- &	-- &	-- &	-- &	0.95	\\	
&	&	MoLP &	0.961 &	0.99 &	0.99 &	0.975 &	0.967 &	0.976 &	-- &	-- &	-- &	-- &	-- &	-- &	-- &	-- &	0.97	\\	\cmidrule(lr){2-3} \cmidrule(lr){4-17} \cmidrule(lr){18-18}
&	— &	detection &	0.924 &	1 &	1 &	0.95 &	0.972 &	0.998 &	-- &	-- &	-- &	-- &	-- &	-- &	-- &	-- &	0.88	\\	
\cmidrule(lr){1-3} \cmidrule(lr){4-17} \cmidrule(lr){18-18}
{\tt mix} &	0.8 &	oracle &	0.904 &	0.873 &	0.878 &	0.874 &	0.876 &	0.876 &	0.858 &	0.84 &	0.823 &	0.814 &	0.806 &	0.828 &	0.838 &	-- &	0.838	\\	
&	&	MoLP &	0.955 &	0.964 &	0.922 &	0.919 &	0.924 &	0.935 &	0.91 &	0.907 &	0.89 &	0.891 &	0.868 &	0.848 &	0.777 &	-- &	0.814	\\	
&	0.9 &	oracle &	0.957 &	0.94 &	0.949 &	0.938 &	0.936 &	0.938 &	0.93 &	0.93 &	0.912 &	0.919 &	0.916 &	0.93 &	0.938 &	-- &	0.926	\\	
&	&	MoLP &	0.967 &	0.983 &	0.969 &	0.963 &	0.962 &	0.964 &	0.962 &	0.965 &	0.951 &	0.945 &	0.925 &	0.898 &	0.857 &	-- &	0.89	\\	
&	0.95 &	oracle &	0.98 &	0.973 &	0.97 &	0.968 &	0.972 &	0.973 &	0.966 &	0.968 &	0.963 &	0.962 &	0.962 &	0.97 &	0.972 &	-- &	0.968	\\	
&	&	MoLP &	0.975 &	0.994 &	0.984 &	0.98 &	0.983 &	0.982 &	0.98 &	0.986 &	0.975 &	0.973 &	0.955 &	0.934 &	0.898 &	-- &	0.924	\\	\cmidrule(lr){2-3} \cmidrule(lr){4-17} \cmidrule(lr){18-18}
&	— &	detection &	0.991 &	0.994 &	1 &	1 &	1 &	1 &	1 &	1 &	1 &	0.992 &	0.912 &	0.564 &	0.256 &	-- &	0.234	\\	
\cmidrule(lr){1-3} \cmidrule(lr){4-17} \cmidrule(lr){18-18}
{\tt teeth10} &	0.8 &	oracle &	0.876 &	0.873 &	0.854 &	0.872 &	0.86 &	0.86 &	0.854 &	0.86 &	0.874 &	0.858 &	0.864 &	0.876 &	0.849 &	-- &	0.788	\\	
&	&	MoLP &	0.917 &	0.981 &	0.977 &	0.978 &	0.98 &	0.978 &	0.978 &	0.976 &	0.981 &	0.981 &	0.978 &	0.984 &	0.917 &	-- &	0.877	\\	
&	0.9 &	oracle &	0.942 &	0.941 &	0.934 &	0.941 &	0.938 &	0.939 &	0.928 &	0.934 &	0.946 &	0.934 &	0.934 &	0.936 &	0.928 &	-- &	0.896	\\	
&	&	MoLP &	0.954 &	0.992 &	0.99 &	0.988 &	0.99 &	0.99 &	0.991 &	0.992 &	0.99 &	0.993 &	0.992 &	0.993 &	0.947 &	-- &	0.947	\\	
&	0.95 &	oracle &	0.977 &	0.97 &	0.972 &	0.972 &	0.974 &	0.974 &	0.968 &	0.97 &	0.978 &	0.967 &	0.97 &	0.966 &	0.966 &	-- &	0.938	\\	
&	&	MoLP &	0.969 &	0.997 &	0.997 &	0.996 &	0.996 &	0.994 &	0.996 &	0.995 &	0.996 &	0.997 &	0.996 &	0.996 &	0.96 &	-- &	0.979	\\	\cmidrule(lr){2-3} \cmidrule(lr){4-17} \cmidrule(lr){18-18}
&	— &	detection &	0.926 &	0.904 &	0.906 &	0.905 &	0.912 &	0.908 &	0.911 &	0.924 &	0.91 &	0.904 &	0.912 &	0.911 &	0.926 &	-- &	0.592	\\	
\cmidrule(lr){1-3} \cmidrule(lr){4-17} \cmidrule(lr){18-18}
{\tt stairs10} &	0.8 &	oracle &	0.909 &	0.892 &	0.899 &	0.892 &	0.906 &	0.907 &	0.89 &	0.904 &	0.91 &	0.9 &	0.89 &	0.897 &	0.894 &	0.898 &	0.854	\\	
&	&	MoLP &	0.983 &	0.996 &	0.993 &	0.995 &	0.998 &	0.995 &	0.995 &	0.997 &	0.996 &	0.995 &	0.996 &	0.993 &	0.994 &	0.986 &	1	\\	
&	0.9 &	oracle &	0.956 &	0.95 &	0.95 &	0.95 &	0.95 &	0.958 &	0.952 &	0.959 &	0.96 &	0.954 &	0.944 &	0.952 &	0.95 &	0.95 &	0.946	\\	
&	&	MoLP &	0.996 &	0.998 &	0.996 &	0.996 &	1 &	0.996 &	0.997 &	0.998 &	0.998 &	0.998 &	0.998 &	0.998 &	0.997 &	0.992 &	1	\\	
&	0.95 &	oracle &	0.979 &	0.978 &	0.98 &	0.978 &	0.982 &	0.982 &	0.977 &	0.984 &	0.984 &	0.982 &	0.976 &	0.98 &	0.982 &	0.98 &	0.976	\\	
&	&	MoLP &	0.997 &	0.998 &	0.997 &	0.997 &	1 &	0.998 &	0.997 &	0.998 &	0.998 &	0.999 &	0.998 &	0.999 &	0.998 &	0.996 &	1	\\	\cmidrule(lr){2-3} \cmidrule(lr){4-17} \cmidrule(lr){18-18}
&	— &	detection &	0.998 &	0.996 &	0.992 &	0.994 &	0.993 &	0.993 &	0.992 &	0.991 &	0.994 &	0.996 &	0.989 &	0.996 &	0.996 &	0.996 &	0.924	\\	
\bottomrule
\end{tabular}}
\end{table}

\begin{table}[ht]
\caption{Average coverage of the bootstrap CIs constructed with the oracle estimators
and the estimators obtained from MoLP, when $\vartheta = 4$.
We also report the proportion of realisations where individual change points are detected (by MoLP)
and where all change points are correctly detected.}
\label{table:cov:t:4}
\centering
\resizebox{\columnwidth}{!}{
\begin{tabular}{ccc  ccc ccc ccc ccc cc  c}
\toprule
test signal &	$1 - \alpha$ &	estimator &	$\cp_1$ &	$\cp_2$ &	$\cp_3$ &	$\cp_4$ &	$\cp_5$ &	$\cp_6$ &	$\cp_7$ &	$\cp_8$ &	$\cp_9$ &	$\cp_{10}$ &	$\cp_{11}$ &	$\cp_{12}$ &	$\cp_{13}$ &	$\cp_{14}$ &	uniform	\\	
\cmidrule(lr){1-3} \cmidrule(lr){4-17} \cmidrule(lr){18-18}
{\tt blocks} &	0.8 &	oracle &	0.808 &	0.806 &	0.831 &	0.809 &	0.824 &	0.797 &	0.85 &	0.808 &	0.774 &	0.836 &	0.811 &	-- &	-- &	-- &	0.838	\\	
&	&	MoLP &	0.892 &	0.89 &	0.867 &	0.9 &	0.905 &	0.898 &	0.866 &	0.906 &	0.904 &	0.83 &	0.923 &	-- &	-- &	-- &	0.933	\\	
&	0.9 &	oracle &	0.9 &	0.9 &	0.938 &	0.916 &	0.911 &	0.894 &	0.941 &	0.904 &	0.886 &	0.932 &	0.9 &	-- &	-- &	-- &	0.909	\\	
&	&	MoLP &	0.954 &	0.945 &	0.936 &	0.954 &	0.958 &	0.956 &	0.929 &	0.968 &	0.956 &	0.906 &	0.976 &	-- &	-- &	-- &	0.951	\\	
&	0.95 &	oracle &	0.95 &	0.952 &	0.981 &	0.964 &	0.951 &	0.952 &	0.98 &	0.949 &	0.934 &	0.969 &	0.946 &	-- &	-- &	-- &	0.949	\\	
&	&	MoLP &	0.98 &	0.973 &	0.97 &	0.975 &	0.98 &	0.984 &	0.963 &	0.989 &	0.975 &	0.951 &	0.992 &	-- &	-- &	-- &	0.963	\\	
\cmidrule(lr){2-3} \cmidrule(lr){4-17} \cmidrule(lr){18-18}
&	— &	detection &	1 &	0.997 &	0.903 &	0.999 &	1 &	1 &	0.734 &	1 &	0.867 &	0.235 &	1 &	-- &	-- &	-- &	0.164	\\	
\cmidrule(lr){1-3} \cmidrule(lr){4-17} \cmidrule(lr){18-18}
{\tt fms} &	0.8 &	oracle &	0.894 &	0.817 &	0.833 &	0.832 &	0.81 &	0.814 &	-- &	-- &	-- &	-- &	-- &	-- &	-- &	-- &	0.851	\\	
&	&	MoLP &	0.82 &	0.885 &	0.914 &	0.852 &	0.873 &	0.885 &	-- &	-- &	-- &	-- &	-- &	-- &	-- &	-- &	0.831	\\	
&	0.9 &	oracle &	0.953 &	0.899 &	0.91 &	0.932 &	0.903 &	0.917 &	-- &	-- &	-- &	-- &	-- &	-- &	-- &	-- &	0.932	\\	
&	&	MoLP &	0.885 &	0.947 &	0.964 &	0.913 &	0.944 &	0.947 &	-- &	-- &	-- &	-- &	-- &	-- &	-- &	-- &	0.876	\\	
&	0.95 &	oracle &	0.979 &	0.948 &	0.95 &	0.98 &	0.954 &	0.962 &	-- &	-- &	-- &	-- &	-- &	-- &	-- &	-- &	0.961	\\	
&	&	MoLP &	0.905 &	0.978 &	0.984 &	0.947 &	0.974 &	0.968 &	-- &	-- &	-- &	-- &	-- &	-- &	-- &	-- &	0.896	\\	
\cmidrule(lr){2-3} \cmidrule(lr){4-17} \cmidrule(lr){18-18}
&	— &	detection &	0.384 &	1 &	1 &	0.965 &	0.979 &	0.987 &	-- &	-- &	-- &	-- &	-- &	-- &	-- &	-- &	0.368	\\	
\cmidrule(lr){1-3} \cmidrule(lr){4-17} \cmidrule(lr){18-18}
{\tt mix} &	0.8 &	oracle &	0.794 &	0.797 &	0.796 &	0.795 &	0.786 &	0.817 &	0.791 &	0.814 &	0.8 &	0.782 &	0.796 &	0.794 &	0.846 &	-- &	0.841	\\	
&	&	MoLP &	0.873 &	0.891 &	0.907 &	0.896 &	0.894 &	0.901 &	0.88 &	0.882 &	0.878 &	0.888 &	0.887 &	0.813 &	0.677 &	-- &	0.6	\\	
&	0.9 &	oracle &	0.89 &	0.898 &	0.9 &	0.9 &	0.898 &	0.91 &	0.899 &	0.908 &	0.902 &	0.895 &	0.906 &	0.91 &	0.938 &	-- &	0.928	\\	
&	&	MoLP &	0.941 &	0.954 &	0.965 &	0.954 &	0.962 &	0.958 &	0.943 &	0.947 &	0.945 &	0.95 &	0.931 &	0.894 &	0.774 &	-- &	0.7	\\	
&	0.95 &	oracle &	0.954 &	0.955 &	0.952 &	0.95 &	0.958 &	0.956 &	0.95 &	0.95 &	0.961 &	0.954 &	0.962 &	0.962 &	0.974 &	-- &	0.97	\\	
&	&	MoLP &	0.972 &	0.983 &	0.986 &	0.978 &	0.982 &	0.98 &	0.971 &	0.973 &	0.968 &	0.968 &	0.955 &	0.911 &	0.774 &	-- &	0.7	\\	\cmidrule(lr){2-3} \cmidrule(lr){4-17} \cmidrule(lr){18-18}
&	— &	detection &	0.998 &	1 &	1 &	1 &	1 &	1 &	0.999 &	0.982 &	0.899 &	0.64 &	0.31 &	0.062 &	0.016 &	-- &	0.005	\\	
\cmidrule(lr){1-3} \cmidrule(lr){4-17} \cmidrule(lr){18-18}
{\tt teeth10} &	0.8 &	oracle &	0.812 &	0.804 &	0.812 &	0.813 &	0.808 &	0.803 &	0.814 &	0.81 &	0.818 &	0.798 &	0.818 &	0.808 &	0.804 &	-- &	0.884	\\	
&	&	MoLP &	0.873 &	0.842 &	0.858 &	0.865 &	0.863 &	0.857 &	0.865 &	0.861 &	0.861 &	0.849 &	0.853 &	0.843 &	0.858 &	-- &	0.836	\\	
&	0.9 &	oracle &	0.916 &	0.922 &	0.924 &	0.922 &	0.916 &	0.932 &	0.922 &	0.931 &	0.926 &	0.918 &	0.926 &	0.928 &	0.903 &	-- &	0.954	\\	
&	&	MoLP &	0.942 &	0.925 &	0.927 &	0.928 &	0.923 &	0.931 &	0.924 &	0.934 &	0.93 &	0.928 &	0.928 &	0.92 &	0.927 &	-- &	0.92	\\	
&	0.95 &	oracle &	0.962 &	0.973 &	0.974 &	0.972 &	0.974 &	0.974 &	0.974 &	0.972 &	0.968 &	0.971 &	0.975 &	0.972 &	0.96 &	-- &	0.981	\\	
&	&	MoLP &	0.971 &	0.955 &	0.963 &	0.962 &	0.958 &	0.958 &	0.962 &	0.963 &	0.963 &	0.962 &	0.961 &	0.955 &	0.953 &	-- &	0.953	\\	\cmidrule(lr){2-3} \cmidrule(lr){4-17} \cmidrule(lr){18-18}
&	— &	detection &	0.906 &	0.892 &	0.896 &	0.89 &	0.892 &	0.888 &	0.888 &	0.894 &	0.896 &	0.905 &	0.898 &	0.896 &	0.9 &	-- &	0.454	\\
\cmidrule(lr){1-3} \cmidrule(lr){4-17} \cmidrule(lr){18-18}
{\tt stairs10} &	0.8 &	oracle &	0.824 &	0.813 &	0.823 &	0.814 &	0.82 &	0.839 &	0.82 &	0.83 &	0.829 &	0.836 &	0.824 &	0.834 &	0.838 &	0.804 &	0.925	\\	
&	&	MoLP &	0.901 &	0.898 &	0.9 &	0.901 &	0.909 &	0.911 &	0.917 &	0.911 &	0.914 &	0.917 &	0.921 &	0.913 &	0.912 &	0.9 &	0.977	\\	
&	0.9 &	oracle &	0.934 &	0.921 &	0.92 &	0.919 &	0.923 &	0.928 &	0.931 &	0.926 &	0.932 &	0.928 &	0.923 &	0.93 &	0.934 &	0.908 &	0.97	\\	
&	&	MoLP &	0.965 &	0.961 &	0.956 &	0.959 &	0.964 &	0.966 &	0.972 &	0.964 &	0.965 &	0.971 &	0.969 &	0.967 &	0.971 &	0.964 &	0.987	\\	
&	0.95 &	oracle &	0.976 &	0.971 &	0.971 &	0.962 &	0.968 &	0.974 &	0.968 &	0.974 &	0.97 &	0.972 &	0.975 &	0.976 &	0.976 &	0.963 &	0.982	\\	
&	&	MoLP &	0.987 &	0.987 &	0.979 &	0.978 &	0.984 &	0.987 &	0.987 &	0.981 &	0.984 &	0.986 &	0.986 &	0.982 &	0.986 &	0.981 &	0.993	\\	\cmidrule(lr){2-3} \cmidrule(lr){4-17} \cmidrule(lr){18-18}
&	— &	detection &	0.998 &	0.997 &	0.996 &	0.993 &	0.998 &	0.994 &	0.995 &	0.998 &	0.997 &	0.998 &	0.998 &	0.995 &	0.998 &	0.996 &	0.952	\\	
\bottomrule
\end{tabular}}
\end{table}

\clearpage

\subsection{Comparison with SMUCE}
\label{sec:smuce}

As noted in Section~\ref{sec:sim:comp},
SMUCE \citep{frick2014} returns confidence bands around change points at a prescribed level $\alpha$
which are readily comparable with the proposed uniform bootstrap CIs.

Several authors noted that the smaller $\alpha$ is set,
the constraint imposed by SMUCE on the estimated residuals becomes more lenient
and thus it tends to under-estimate the number of change points
\citep{chen2014, fryzlewicz2020}.
This is demonstrated in Table~\ref{table:smuce}
comparing MoLP (\cite{cho2019two}, described in Section~\ref{sec:sim:lp})
and SMUCE with varying $\alpha \in \{0.1, 0.2, 0.45\}$ 
in terms of their detection accuracy.
Here, the latter performs poorly in correctly detecting all $q_n$ change points
compared to the former and in fact,
with the exception of the {\tt fms} test signal, attains this goal 
on far less than $10\%$ of realisations 
even when $\alpha$ is set as generously as $\alpha = 0.45$.
\begin{table}[htb]
\caption{Proportion of the realisations (out of $2000$)
where exactly $q_n$ change points estimators
correctly detecting the change points are returned by MoLP \citep{cho2019two} and 
SMUCE \citep{frick2014} applied with varying $\alpha \in \{0.1, 0.2, 0.45\}$ 
for the five test signals with $\vartheta = 1$.
We also provide in brackets the proportion of realisations
where {\it both} MoLP and SMUCE at prescribed $\alpha$
detect all $q_n$ change points.}
\label{table:smuce}
\centering
{\small
\begin{tabular}{l  c cc c}
\toprule
test signals & MoLP & SMUCE($0.45$) &  SMUCE($0.2$) & SMUCE($0.1$)  \\ 
\cmidrule(lr){1-1} \cmidrule(lr){2-2} \cmidrule(lr){3-5}
{\tt blocks} & 0.480 & 0.0065 & 0.0005 &  0 \\
& - & (0.0055) & (0) & (0) \\
{\tt fms} & 0.831 & 0.684 & 0.4215 & 0.248 \\
& - & (0.5855) & (0.3665) & (0.2235) \\
{\tt mix} & 0.2595 & 0.0105 & 0.001 &  0 \\
& - & (0.0085) & (0.001) & (0) \\
{\tt teeth10} & 0.698 & 0.0055 & 0.0005 &  0 \\
& - & (0.0055) & (0.0005) &  (0) \\
{\tt stairs10} & 0.9540 & 0.0785 & 0.011 & 0.002 \\
& - & (0.075) & (0.011) & (0.002) \\
\bottomrule
\end{tabular}}
\end{table}

As inferential statements made by SMUCE about the locations of the change points
is conditional on correctly estimating the number of change points,
the lack of detection accuracy of SMUCE 
makes fair comparison between our proposed bootstrap methodology and SMUCE difficult.
Figure~\ref{fig:smuce} compares the uniform bootstrap CIs constructed with MoLP estimators
and SMUCE CIs on the test signal {\tt fms}.
Here, both the coverage and the lengths of CIs are computed for each given confidence level
only using the realisations where both MoLP and SMUCE correctly detect
the all $q_n$ change points.
In doing so, both uniform bootstrap CIs and SMUCE CIs show conservative coverage,
while the latter are wider than the former at any given confidence level.

\begin{figure}[htbp]
\centering
\includegraphics[width=.9\textwidth]{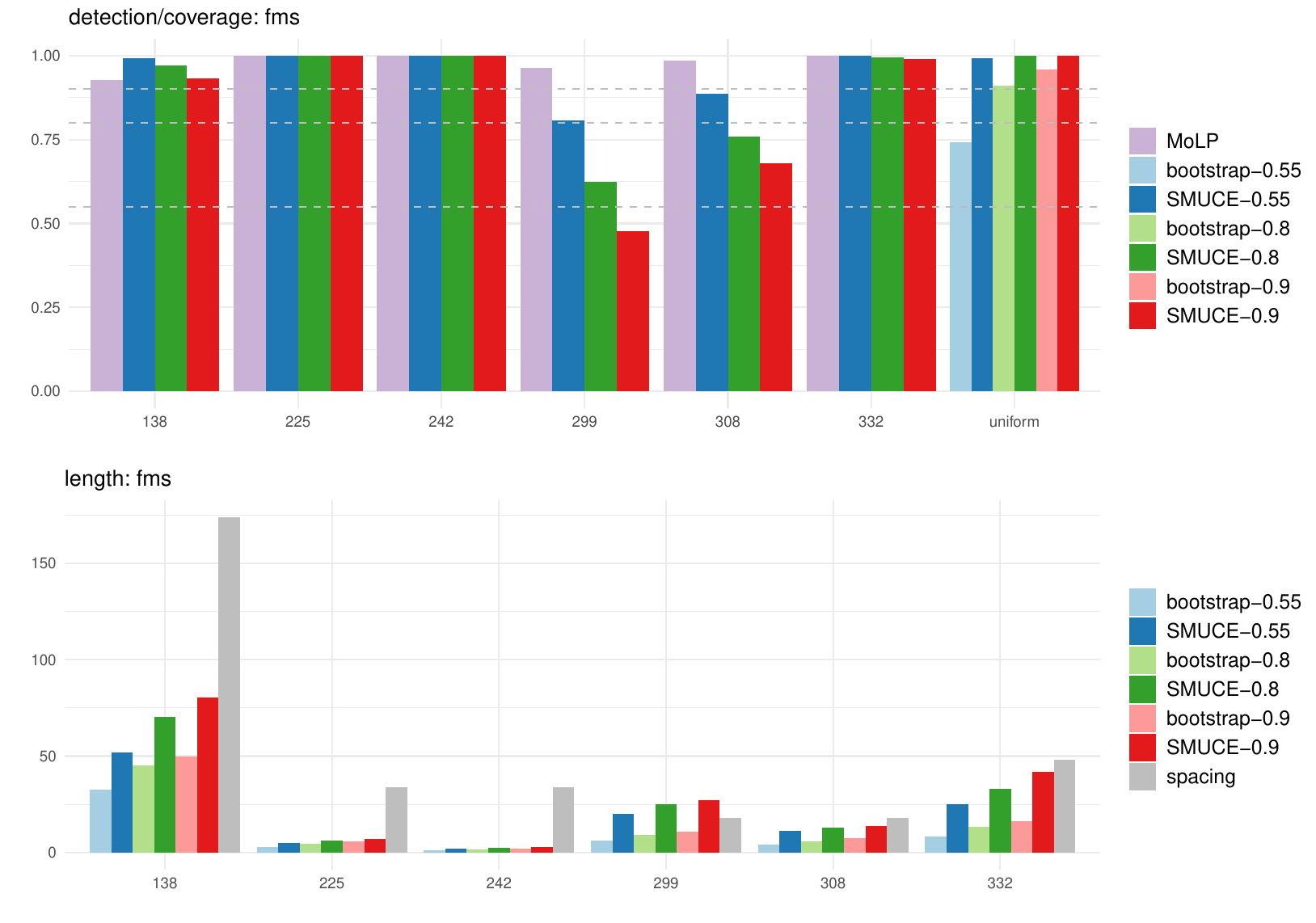}
\caption{Comparison between Bootstrap CIs constructed with model selection
and SMUCE:
{\tt fms} with $\vartheta = 1$.
Top: Proportion of correctly detecting each $\cp_j$ by MoLP and SMUCE
(their locations given as the $x$-axis labels) and the coverage of 
the uniform bootstrap CIs constructed with the estimators from MoLP,
and SMUCE CIs at varying confidence level $1 - \alpha \in \{0.55, 0.8, 0.9\}$.
Bottom: lengths of uniform bootstrap CIs and SMUCE CIs.
We also report $2\delta_j$, twice the minimum distance to adjacent change points,
for each $\cp_j, \, j = 1, \ldots, q_n$.}
\label{fig:smuce}
\end{figure}

\end{document}